\newtheorem{theorem}{Theorem}[section]
\newtheorem{lemma}[theorem]{Lemma}
\newtheorem{prop}[theorem]{Proposition}
\newtheorem{observation}[theorem]{Observation}
\theoremstyle{definition}
\newtheorem{definition}[theorem]{Definition}
\newtheorem*{assumption}{Assumption}
\xpatchcmd{\assumption}{\@addpunct{.}}{\normalfont\,\@addpunct{:}}{}{}
\definecolor{shadecolor}{named}{GreenYellow}
\newcommand{\al}[1]{\begin{align*}#1\end{align*}}
\newcommand{\algn}[1]{\begin{align}#1\end{align}}
\newcommand{\eqq}[1]{\begin{equation}#1\end{equation}}
\newcommand{\p}{\mathbb P}
\newcommand{\pla}{\mathbb P_\lambda}
\newcommand{\E}{\mathbb E}
\newcommand{\ela}{\E_\lambda}
\newcommand{\R}{\mathbb R}
\newcommand{\Rd}{\mathbb R^d}
\newcommand{\Zd}{\mathbb Z^d}
\newcommand{\N}{\mathbb N}
\newcommand{\dd}{\, \mathrm{d}}
\newcommand{\C}{\mathscr {C}}
\newcommand{\piv}[1]{\textsf {Piv}(#1)}
\newcommand{\conn}[3]{#1 \longleftrightarrow #2\textrm { in } #3}
\newcommand{\nconn}[3]{#1 \centernot\longleftrightarrow #2\textrm { in } #3}
\newcommand{\dconn}[3]{#1 \Longleftrightarrow #2\textrm { in } #3}
\newcommand{\xconn}[4]{#1 \xleftrightarrow{\,\,#4\,\,} #2\textrm { in } #3}
\newcommand{\thinning}[2]{#1_{\langle #2 \rangle}}
\newcommand{\tlam}{\tau_\lambda}
\newcommand{\dtlam}{\sigma_\lambda}
\newcommand{\dcf}{g_\lambda}
\newcommand{\orig}{\mathbf{0}}
\newcommand{\e}{\text{e}}
\newcommand{\connf}{\varphi}
\newcommand{\rg}{\Gamma_\connf}
\newcommand{\weight}{\textbf{w}}
\newcommand{\ord}{\textsf{ord}}
\newcommand{\gen}[1]{\langle\!\langle #1 \rangle\!\rangle}
\newcommand{\twodoin}[1]{\todo[inline, backgroundcolor=green!20!white,bordercolor=red]{#1}}
\numberwithin{equation}{section}
\title{The Direct-Connectedness Function in the Random Connection Model}
\author{Sabine Jansen \and Leonid Kolesnikov \and Kilian Matzke}
\begin{document}
\maketitle

\begin{abstract}
We investigate expansions for connectedness functions in the random connection model of continuum percolation in powers of the intensity. Precisely, we study the pair-connectedness and the direct-connectedness functions, related to each other via the Ornstein-Zernike equation. We exhibit the fact that the coefficients of the expansions consist of sums over connected and $2$-connected graphs. In the physics literature, this is known to be the case more generally for percolation models based on Gibbs point processes and stands in analogy to the formalism developed for correlation functions in liquid-state statistical mechanics.

We find a representation of the direct-connectedness function and bounds on the intensity which allow us to pass to the thermodynamic limit. In some cases (e.g., in high dimensions), the results are valid in almost the entire subcritical regime. Moreover, we relate these expansions to the physics literature and we show how they coincide with the expression provided by the lace expansion.\\

\noindent \emph{Keywords}: Ornstein-Zernike equation, random connection model, connectedness functions, Poisson process, percolation, graphical expansions, lace expansion.\\

\noindent \emph{MSC 2020}:   60K35, 82B43, 60G55, 60D05.

\end{abstract}

%\keywords{Ornstein-Zernike equation; random connection model; connectedness functions; Poisson process; percolation; graphical expansions; lace expansion\\}%insert keywords separated by a semicolon. You should avoid including keywords which also appear in the title.

%\ams{60K35; 82B43}{60G55; 60D05}% insert the primary 2020 Maths Subject Classification number in the first bracket
		% and the secondary ams number(s) in the second bracket
		% e.g. \ams{60E20}{49G03;49F10}
		%Maximum of three in each, ideally one or two in each primary and secondary.
		%codes found here ``https://mathscinet.ams.org/msnhtml/msc2020.pdf''

\section{Introduction and main result} \label{sec:intro}

Perturbation analysis plays an important role in both  stochastic geometry \cite[Chapter 19]{LasPen17} and statistical mechanics. For Gibbs point processes (grand-canonical Gibbs measures in statistical mechanics), quantities like factorial moment densities (also called correlation functions) are highly non-trivial functions of the intensity of the Gibbs point process itself (density) or the intensity of 
an underlying Poisson point process (activity).  When interactions are pairwise, it is is well-known that the coefficients of these expansions are given by sums over geometric, weighted graphs. There is a vast literature addressing the convergence of these expansions, see, for example, \cite{brydges1986leshouches,malyshev-minlos}. Some attempts have been made at exploiting power series expansions from statistical mechanics for likelihood analysis of spatial point patterns in spatial statistics, see \cite{OgaTan89}.

The physics literature provides similar power series expansions for connectedness functions in a class of percolation models driven by Gibbs point processes, the so-called random connection models (RCM)~\cite{ConDeAngFor77}. The expansion coefficients for the pair connectedness function can be written in terms of a sum of certain connected graphs (see~\eqref{eq:tlam_cluster_exp}) and the coefficients for the direct-connectedness function in terms of a sum over certain $2$-connected graphs (see~\eqref{eq:dcf_rewriting}). The two functions relate via the Ornstein-Zernike equation (OZE)~\cite{OrnZer14}, an integral equation which is of paramount importance in physical chemistry and soft matter physics and which enters some approaches to percolation theory, see \cite[Chapter 10]{torquato-book}. For Bernoulli bond percolation on $\mathbb Z^d$, the Ornstein-Zernike equation encodes a renewal structure and it is used to prove Ornstein-Zernike behavior~\cite{campanino-ioffe2002}, a precise asymptotic formula for pair connectedness functions in the subcritical regime that incorporates subleading corrections to the exponential decay. The Ornstein-Zernike equation also appears as a by-product of lace expansions \cite[Proposition~5.2]{HeyHofLasMat19}.

The expansions for connectedness functions appearing in \cite{ConDeAngFor77} are derived as a means to discuss the following question: Is it possible to choose the notion of connectivity in such a way that the percolation transition, if it occurs at all, coincides with the phase transition in the sense of non-uniqueness of Gibbs measures? We remind the reader that the relation between the two phenomena is rather subtle and in general, the corresponding critical parameters do not match, see~\cite{jansen2016percolation} and references therein. To the best of our knowledge,  the question above has not been fully answered for continuum systems, although Betsch and Last~\cite{Betsch2021OnTU} were recently able to show that uniqueness of the Gibbs measure follows from the non-percolation of an associated RCM driven by a Poisson point process. 

Moreover, the convergence of the expansions for connectedness functions has not been treated in a mathematically rigorous way, in stark contrast with the rich theory of cluster expansions. Even in the simplest case of the RCM driven by a Poisson point process that we consider in this paper, where activity and density coincide and are called the intensity, rigorous results for the expansion of connectedness functions barely exist: The first ones were obtained by Last and Ziesche in~\cite{LasZie17}. However Last and Ziesche do not prove that their expansions coincide with the physicists' expansion, and they do not prove quantitative bounds for the domain of convergence of the small-intensity expansion.

Our main result addresses graphical expansions of the direct-connectedness function in infinite volume. The results by Last and Ziesche~\cite{LasZie17}, combined with our combinatorial considerations from Section~\ref{sec:connections_lz}, imply that the physicists' expansions have a positive radius of convergence, however it is not our purpose to provide a quantitative bound for the latter. Instead, we perform first a resummation, in finite volume, of the physicists' expansion. Although the resummed expansion is no longer a power series in the intensity of the underlying Poisson point process, it has the (conjectured) advantage of converging in a bigger domain than the physicists' expansion.  We provide quantitative bounds on the intensity that allow us to pass to the infinite-volume limit in the resummed expansion of the direct-connectedness function. The proof uses the continuum BK inequality proved in~\cite{HeyHofLasMat19}.

In addition, we discuss the relation of the physicists' and our expansion with the lace expansion for the continuum random connection model \cite{HeyHofLasMat19}. Roughly, the lace expansion could in theory be rederived from the graphical expansion by yet another resummation step. In fact a notion of laces similar to the laces for  for the self-avoiding random walk~\cite{brydges1986leshouches,Sla06} already enters the proof of our main result on graphical expansions (see Section~\ref{sec:laces}). Thus, contrary to what is stated in \cite[Chapter~6.1]{HeyHof17} the denomination ``lace expansion'' for percolation is not a misnomer, at least for continuum systems. It is unclear, however, whether the discussion  offers a new angle of attack on the intricate convergence problems in the theory of lace expansions. \\

Let us properly introduce the RCM and state our results. The RCM depends on two parameters, namely its \emph{intensity} $\lambda \geq 0$ and the (measurable) \emph{connection function} $\connf\colon \Rd \to [0,1]$, satisfying
	\eqq{\nonumber0<\int \connf(x) \dd x <\infty}
as well as radial symmetry $\connf(x) = \connf(-x)$ for all $x\in\Rd$. The model is described informally as follows: The vertex set is taken to be a homogeneous Poisson point process (PPP) in $\Rd$ of intensity $\lambda$, denoted by $\eta$. For any pair $x,y\in\eta$, we add the edge $\{x,y\}$ with probability $\connf(x-y)$ and independently of all other pairs. We refer to~\cite{HeyHofLasMat19, MeeRoy96} for a formal construction.

The RCM is an undirected simple random spatial graph and a standard model of continuum percolation. We denote it by $\xi$ and we use $\pla$ to denote the according probability measure. Its vertex set is $V(\xi)=\eta$, and we let $E(\xi)$ denote its edge set.

For $x\in\Rd$, we let $\xi^{x}$ be the RCM augmented by the point $x$. In other words, the vertex set of $\xi^x$ is $\eta \cup \{x\}$ and the edges are formed as described above. In particular, edges between $x$ and points of $\eta$ are drawn independently and according to $\connf$. More generally, for a set of points $x_1, \ldots, x_k$, we let $\xi^{x_1, \ldots, x_k}$ be the RCM with vertex set $\eta\cup \{x_1, \ldots, x_k\}$ (also here, edges between deterministic points $x_1, x_2$ are drawn independently and according to $\connf$). 

We say that $x,y \in \eta$ are connected (and write $\conn{x}{y}{\xi}$) if there is a path from $x$ to $y$ in $\xi$. For $x\in\Rd$, we let $\C(x) = \C(x,\xi^x) = \{ y \in \eta^x: \conn{x}{y}{\xi^x} \}$ be the cluster of $x$ and define the \emph{pair-connectedness} (or \emph{two-point}) function $\tlam\colon\Rd\times\Rd \to [0,1]$ to be
	\eqq{ \tlam(x,y) := \pla\big( \conn{x}{y}{\xi^{x,y}}\big).   \label{eq:def:tlam}}
Due to the translation invariance of the model, we have $\tlam(x,y) = \tlam(\orig,x-y)$ (where $\orig$ denotes the origin in $\Rd$) and we can also define $\tlam$ as a function $\tlam\colon\Rd \to [0,1]$ with $\tlam(x) = \pla(\conn{\orig}{x}{\xi^{\orig,x}})$.

We say that $x,y\in\eta$ are \emph{$2$-connected} (or \emph{doubly connected}) and write $\dconn{x}{y}{\xi}$ if there are two paths from $x$ to $y$ that have only their endpoints in common (or if $x$ and $y$ are directly connected by an edge or if $x=y$). We define 
	\[\dtlam(x) := \pla(\dconn{\orig}{x}{\xi^{\orig,x}}). \]
  Recall that the critical intensity for percolation is defined by
	\[\lambda_c = \sup \{\lambda \geq 0: \pla(|\C(\orig)| = \infty)=0 \}\]
and the identity
\[\sup \{\lambda \geq 0: \pla(|\C(\orig)| = \infty)=0 \}=\sup\{ \lambda \geq 0: \int \tlam(x) \dd x < \infty\}\]
was shown to hold true for connectivity functions $\phi$ that are non-increasing in the Euclidean distance (see~\cite{Mee95}).
It is proved in~\cite{LasZie17} that for $\lambda<\lambda_c$, there exists a uniquely defined integrable and essentially bounded function $\dcf:\R^d\times\R^d\to \R^d$ such that
	\eqq{  \tlam(x,y) = \dcf(x,y) + \lambda\int_{\R^d}  \dcf(x,z)\tlam(z,y)\dd(z),\quad x,y\in\R^d. \label{eq:intro:oze_2d}} 
This equation is known as the Ornstein-Zernike equation (OZE) and $\dcf$ is called the \emph{direct-connectedness function}.

For two integrable functions $f,g\colon \Rd \to \R$, we recall the convolution $f\ast g$ to be given by
	\[ (f \ast g)(x) = \int_{\Rd} f(x) g(x-y) \dd y.\]
We let $f^{\ast 1} = f$ and $f^{\ast m} = f^{\ast (m-1)} \ast f$. Notice that we can interpret both the pair-connectedness function $\tau_\lambda$ and the direct-connectedness function  $\dcf$ as functions on $\R^d$ due to translation invariance. The OZE then can be formulated as
\eqq{  \tlam = \dcf + \lambda ( \dcf\ast\tlam). \label{eq:intro:oze}}

Naturally, the question whether one can provide an explicit form for the direct-connectedness function $\dcf$ arises. Unfortunately, an immediate probabilistic interpretation of $\dcf$ is not known. One classical approach from the physics literature is to obtain explicit approximations for the solution $g_\lambda$ of~\eqref{eq:intro:oze_2d} by introducing complementary equations, so called closure relations, the choice of which depends on the specifics of the model considered. Different closure relations provide different explicit approximations for $\dcf$ and thus also for the pair-connectedness function $\tau_\lambda$, e.g., via a reformulation of the OZE~\eqref{eq:intro:oze_2d} for the Fourier transforms of the connectedness functions. Most prominent are the Percus-Yevick closure relations~\cite{torquato-book,ChiSte89}, other examples can be found in~\cite{hansen2013theory}. Another approach~\cite{ConDeAngFor77} is to directly provide an independent definition of $\dcf$ in terms of a graphical expansion and then argue that this expansion satisfies the OZE~\eqref{eq:intro:oze_2d}. We follow the spirit of the latter approach: Our main result is a graphical  expansion for the direct-connectedness function, with quantitative bounds on the domain of convergence.

Let
	\eqq{ \lambda_\ast := \sup \Big\{ \lambda \geq 0: \sup_{x \in \Rd} \sum_{k \geq 1} \lambda^{k-1} \dtlam^{\ast k}(x) < \infty\Big\}, 
			\qquad \tilde\lambda_\ast := \sup\Big\{ \lambda \geq 0: \lambda \int \dtlam(x) \dd x < 1\Big\}. \label{eq:def:lambda_ast}}
It is not hard to see that $\tilde\lambda_\ast \leq \lambda_\ast \leq \lambda_c$ using   \eqref{eq:lambda_ast_bound} below.

We can now state our main theorem. It provides (in general dimension) the first rigorous quantitative bounds on $\lambda$ under which the direct-connectedness function admits a convergent graphical expansion.

\begin{theorem}[Graphical expansion of the direct-connectedness function] \label{thm:main_thm} \
For $\lambda<\lambda_\ast$,  the direct-connectedness function $\dcf(x_1,x_2)$ is given by the expansion~\eqref{eq:gfinal} which is absolutely convergent pointwise for all $(x_1,x_2)\in \mathbb{R}^{2d}$. 
Moreover, for $\lambda< \tilde \lambda_\ast$, the expansion~\eqref{eq:gfinal} converges in the $L^1(\R^d,\mathrm d x_2)$-norm for all $x_1\in \mathbb R^d$. 
\end{theorem}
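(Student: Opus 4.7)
The plan is to establish \eqref{eq:gfinal} by first working in a bounded window $\Lambda\subset \Rd$ and then passing to the thermodynamic limit. The starting point in finite volume is the Last--Ziesche graphical expansion of $\dcf$ as a sum over $2$-connected graphs with two labeled roots $x_1,x_2$, which is the continuum version of the physicists' expansion~\cite{ConDeAngFor77}. The first step is a resummation in the spirit of the Brydges--Spencer lace expansion: I would introduce a notion of \emph{laces} on $2$-connected graphs (as foreshadowed in Section~\ref{sec:laces}) and partition the $2$-connected graphs according to their associated lace. Summing over all edges compatible with a given lace of length $k$ yields a $k$-fold convolution of factors indexed by the $k$ consecutive ``rungs'' of the lace, together with the factor $\lambda^{k-1}$ accounting for integration over $k-1$ internal Poisson vertices.

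The second step is the pointwise domination. Each rung factor is the probability of a double connection between two prescribed vertices through a portion of the cluster; by the continuum BK inequality of~\cite{HeyHofLasMat19} it is bounded by $\dtlam$ evaluated at the difference of the two root points. Hence the absolute value of the $k$-th term in \eqref{eq:gfinal} at $(x_1,x_2)$ is dominated by $\lambda^{k-1}\dtlam^{\ast k}(x_2-x_1)$. For $\lambda<\lambda_\ast$, the defining property \eqref{eq:def:lambda_ast} yields absolute pointwise convergence of the series, uniformly in $(x_1,x_2)$. For the $L^1$-statement, Young's convolution inequality gives $\|\dtlam^{\ast k}\|_{L^1}\leq \|\dtlam\|_{L^1}^{k}$, so the sum of $L^1$-norms is controlled by the geometric series $\sum_{k\geq 1}\lambda^{k-1}\|\dtlam\|_{L^1}^{k}$, which converges precisely when $\lambda\|\dtlam\|_{L^1}<1$, i.e.\ when $\lambda<\tilde\lambda_\ast$.

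The main obstacle is the combinatorial resummation itself: one must show that the integrand obtained by summing over all $2$-connected graphs sharing a prescribed lace factors into the convolutional product to which BK-domination applies. This requires a canonical decomposition of each $2$-connected graph into a unique lace plus a set of compatible ``loose'' edges, so that the expectation of the corresponding product of $\connf$-bonds splits as a product along the lace, while the loose edges simply contribute a connection event that can be absorbed into the rung factors. The bookkeeping parallels, but is more intricate than, the self-avoiding walk lace expansion, because the underlying point configuration is random and the articulation structure of $2$-connected graphs is richer.

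Finally, the thermodynamic limit is handled by dominated convergence applied term by term: each summand in \eqref{eq:gfinal} is obtained as the $\Lambda\uparrow \Rd$ limit of its finite-volume analogue, and since the pointwise (resp.\ $L^1$) bounds above are uniform in $\Lambda$, one may exchange the limit with the summation in the corresponding mode of convergence, completing the proof.
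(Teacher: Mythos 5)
The final domination you write down --- $\lambda^{k-1}\dtlam^{\ast k}(x_2-x_1)$, with $\lambda_\ast$ controlling pointwise and $\tilde\lambda_\ast$ controlling $L^1$ convergence --- matches the bound actually derived in Theorem~\ref{thm:dcf_inf:limit_theorem} (see Eq.~\eqref{eq:dcf:inf_vol:convolution_bound}), and your $L^1$-argument via Young is fine. But the route you propose to that bound is not the paper's, and I do not see how it closes. You suggest partitioning the class of $(\pm)$-$2$-connected graphs directly by an associated lace, so that the terms corresponding to a fixed lace of length $k$ collapse into a $k$-fold convolution of ``rung'' factors, each bounded by $\dtlam$ via BK. That conflates two distinct structures that the paper keeps separate. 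The convolution $\dtlam^{\ast(k+1)}$ does not come from the rungs of a lace; it comes from the $(+)$-pivot decomposition $\vec W=(u_0,V_0,\ldots,u_{k+1})$ of the \emph{core} (the $(+)$-subgraph). Laces enter at a different level: they live on the \emph{shell} (the extra $(-)$-edges that destroy the pivotality of the $u_i$'s), and their only purpose is to prove the crucial estimate $|h_\lambda^\Lambda(\vec W)|\leq \pla(\nconn{\eta_\Lambda}{W}{\xi^W})$ of Proposition~\ref{thm:dcfshell:bounds_on_shell_fct}. Without that bound on the shell function you cannot get to a disjoint-occurrence event to which BK applies: it is precisely the product of the core probability \eqref{eq:dcf:inf_vol:rewrite_core_probabilities} with the shell bound that, after Mecke integration over the $V_i$'s, gives $\pla(A_0\circ\cdots\circ A_k)$ with $A_i=\{\dconn{u_i}{u_{i+1}}{\xi^{u_i,u_{i+1}}}\}$ in \eqref{eq:dcf:inf_vol:disj_occurrence}, and BK is then applied once to this disjoint union, not rung by rung. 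Your ``loose edges simply contribute a connection event that can be absorbed into the rung factors'' is exactly where the argument would break: the loose ($-$)-edges bridge over arbitrary pivotal points and generate sign cancellations; controlling them requires the full lace machinery of Section~\ref{sec:laces} (Lemmas~\ref{lem:laces:weight_sums_probabilities}--\ref{lem:laces:shell_fct_fixed_lace} and the Fibonacci count of Lemma~\ref{lem:laces:size_L}), not a free absorption.

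The second, independent gap is the identity with the direct-connectedness function. Theorem~\ref{thm:main_thm} asserts both that the series \eqref{eq:gfinal} converges and that it equals $\dcf$, i.e.\ the unique solution of the OZE~\eqref{eq:intro:oze_2d}. Your proposal only addresses the first half. Starting from the Last--Ziesche power series in finite volume and performing a resummation is fine in a fixed $\Lambda$, but the resummed series \eqref{eq:gfinal} is \emph{not} a power series in $\lambda$, and dominated convergence as $\Lambda\uparrow\R^d$ only tells you that the resummed finite-volume objects converge to the candidate limit; it does not tell you that this limit solves the OZE. The paper closes this by proving the OZE in finite volume for $\dcf^\Lambda$ (Proposition~\ref{thm:OZE_finite}) and then passing to the limit using the uniform-in-$\Lambda$ bound (proof of Theorem~\ref{thm:main_thm} at the end of Section~\ref{sec:oze}). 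You need something of this sort; citing Last--Ziesche for the finite-volume identity does not carry through the resummation and the thermodynamic limit automatically.
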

The convergence results for the expansion~\eqref{eq:gfinal} are proved in Theorem~\ref{thm:dcf_inf:limit_theorem} and Theorem~\ref{thm:dcf_inf:limit_theorem:L1}, the equality with the direct-connectedness function is proved in Section~\ref{sec:oze}.

Last and Ziesche show that there is some $\lambda_0>0$ so that $\dcf$ is given by a power series for $\lambda \in [0,\lambda_0)$. No quantitative bounds for $\lambda_0$ are provided however. In Section~\ref{sec:connections_lz}, we discuss how to relate this expansion to our expression for $\dcf$. We now make several remarks on Theorem~\ref{thm:main_thm} and the quantitative nature of the bounds provided there. 
\begin{enumerate}%itemize
\item[$\bullet$] Since $0 \leq \dtlam\leq 1$, we can bound
	\eqq{ \sum_{k \geq 1} \lambda^{k-1} \dtlam^{\ast k}(x) \leq \sum_{k \geq 0} \Big( \lambda \int \dtlam(x) \dd x \Big)^k
		= \sum_{k \geq 0} \big( \ela\big[|\{x \in \eta: \dconn{\orig}{x}{\xi^{\orig}}\}| \big] \big)^k, \label{eq:lambda_ast_bound}}
where the identity is due to the Mecke equation~\eqref{eq:prelim:mecke_m}. This shows that $\tilde\lambda_\ast \leq \lambda_\ast$ and that $\tilde\lambda_\ast$ is the point where the expected number of points in $\eta$ that are $2$-connected to the origin passes $1$ (i.e., we have $\ela\big[|\{x \in \eta: \dconn{\orig}{x}{\xi^{\orig}}\}| \big]\geq 1$ for all $\lambda>\tilde{\lambda}_\ast$).
\item[$\bullet$] The argument of the geometric series in~\eqref{eq:lambda_ast_bound} can be further bounded from above by
	\[ \lambda \int \tlam(x) \dd x =  \ela[|\{ x \in \eta: \conn{\orig}{x}{\xi^{\orig}}\}|] ,\]
the expected cluster size (minus $1$). A classical branching-process argument gives that $\tilde\lambda_\ast \geq 1/2$  (see, for example,~\cite[Theorem 3]{Pen93}).
\item[$\bullet$] In high dimension, we have the following result, proven in~\cite{HeyHofLasMat19}: Under some additional assumptions on $\connf$ (see~\cite[Section~1.2]{HeyHofLasMat19}), there is an absolute constant $c_0$ such that
	\[ \lambda_c \int \sigma_{\lambda_c}(x) \dd x \leq 1 + c_0/d \]
in sufficiently high dimension, or, for a class of \emph{spread-out} models (closely related to Kac potentials in statistical mechanics, see~\cite{HarSla90}) with a parameter $L$,
	\[\lambda_c\int \sigma_{\lambda_c}(x) \dd x \leq 1 + c_0 L^{-d}\]
for all dimensions $d>6$ (in the spread-out case, $c_0$ is independent of $L$ but may depend on $d$). As $\sigma_\lambda$ is non-decreasing in $\lambda$, this provides a bound for the whole subcritical regime. This also implies that for every $\varepsilon>0$, there is $d_0$ (respectively, $L_0$) such that $\tilde\lambda_\ast \geq 1-\varepsilon$ for all $d \geq d_0$ (respectively, $L \geq L_0$ and $d>6$). As we also know that $\lambda_c \searrow 1$ as the dimension becomes large, this shows that in high dimension, $\tilde\lambda_\ast$ (and thus also $\lambda_\ast$) gets arbitrarily close to $\lambda_c$.
\end{enumerate}

\paragraph{Outline of the paper.} The paper proceeds as follows. We introduce most of our important notation in Section~\ref{sec:defs}. This allows us to demonstrate some basic (and mostly well-known) central ideas in Section~\ref{sec:twopoint_fct}, where the two-point function is discussed in finite volume. Section~\ref{sec:dcfshell} contains the main body of work for the proof of Theorem~\ref{thm:main_thm} (the convergence results). The remainder of Theorem~\ref{thm:main_thm} regarding the OZE is then proved in Section~\ref{sec:oze}.

We discuss our results in Section~\ref{sec:connections}. In particular, we point out where many of the formulas can be found in the physics literature (not rigorously proven) and allude to generalizations to Gibbs point processes. Moreover, we highlight the connection to two other expressions for the pair connectedness function; in particular, we show how our expansions relate to the lace expansion. Lastly, we address other percolation models very briefly in Section~\ref{sec:bondperco}.
%\twodoin{this will maybe be juggled around a bit} 

%%%%%%%%%%%%%%%%%%%%%%%%%%%%%%%%%%%%%%%%%%%%%%%%%%%%%%%%%%%%%%%%%%%%%%%%%%%%%%%%%%%%%%%%%%%%%%%%%%%%%%%%%%%%%%%%%%%%%%%%%%%%%%%%%%%%%%%%%%%%%%%%%%%%%%%%%%%%%%%%%%
%%%%%%%%%%%%%%%%%%%%%%%%%%%%%%%%%%%%%%%%%%%%%%%%%%%%%%%%%%%%%%%%%%%%%%%%%%%%%%%%%%%%%%%%%%%%%%%%%%%%%%%%%%%%%%%%%%%%%%%%%%%%%%%%%%%%%%%%%%%%%%%%%%%%%%%%%%%%%%%%%%
%%%%%%%%%%%%%%%%%%%%%%%%%%%%%%%%%%%%%%%%%%%%%%%%%%%%%%%%%%%%%%%%%%%%%%%%%%%%%%%%%%%%%%%%%%%%%%%%%%%%%%%%%%%%%%%%%%%%%%%%%%%%%%%%%%%%%%%%%%%%%%%%%%%%%%%%%%%%%%%%%%
\section{Fixing notation} \label{sec:defs}
\subsection{General notation}
We let $[n] := \{1,\ldots, n\}$ and $[n]_0 := [n] \cup \{0\}$. For a set $V$, we write $\binom{V}{2} := \{E \subseteq V: |E|=2\}$. For $I = \{i_1, i_2, \ldots, i_\kappa\} \subset \N$, let $\vec x_I=(x_{i_1}, \ldots, x_{i_\kappa})$. For compact intervals $[a,b]\subset \R$, we write $\vec x_{[a,b]} = \vec x_I$ with $I=[a,b]\cap \N$. If $a=1$, we write $\vec x_{[b]} = \vec x_{[1,b]}$. By some abuse of notation, we are going to interpret $\vec x_{[a,b]}$ both as an ordered vector and as a set.

If not specified otherwise, $\Lambda$ denotes a bounded, measurable subset of $\Rd$.

\subsection{Graph theory}
We recall that a (simple) graph $G=(V,E) = (V(G), E(G))$ is a tuple with \emph{vertex} set $V$ (or set of \emph{points, sites, nodes}) and \emph{edge} set $E \subseteq \binom{V}{2}$ (or set of \emph{bonds}). In this paper, we will always consider graphs with $V \subset \Rd$, and for $x,y\in\Rd$, an edge $\{x,y\}$ will sometimes be abbreviated $xy$.

If $xy\in E$, we write $x \sim y$ (and say that $x$ and $y$ are \emph{adjacent}). We extend this notation and write $x \sim W$ for $x \in V$ and $W \subseteq V$ if there is $y \in W$ such that $x \sim y$; also, write $A \sim B$ if there is $x\in A$ such that $x \sim B$. For $W \subseteq V$, we define the $W$-\emph{neighborhood} $N_W(x) = \{y \in W: x \sim y\}$ and the $W$-\emph{degree} of a vertex $x\in V$ as $\deg_W(x) = |N_W(x)|$, and we write $N(x) = N_V(x)$ as well as $\deg(x) = \deg_V(x)$. For two sets $A,B \subseteq V$, write $E(A,B) = \{xy \in E(G): x\in A, y \in B\}$.

Given a graph $G=(V,E)$ and $W \subseteq V$, we denote by $G[W] := (W, \{e \in E: e \subseteq W\})$ the subgraph of $G$ \emph{induced} by $W$. Given two simple graphs $G,H$, we let $G \oplus H := (V(G) \cup V(H), E(G) \cup E(H))$.

\paragraph{Connectivity.} Given a graph $G$ and two of its vertices $x,y\in V(G)$, we say that $x$ and $y$ are \emph{connected} if there is a \emph{path} between $x$ and $y$---that is, a sequence of vertices $x=v_0, v_1, \ldots, v_k=y$ for some $k\in\N_0$ such that $v_{i-1}v_i \in E(G)$ for $i \in [k]$. We write $x \longleftrightarrow y$ in $G$ or simply $x \longleftrightarrow y$. We call $\C(x)=\C(x;G) = \{y \in V(G): x \longleftrightarrow y\}$ the \emph{cluster} (or \emph{connected component)} of $x$ in $G$. If there is only one cluster in $G$, we say that $G$ is connected.

For $x \longleftrightarrow y$ in $G$, we let $\piv{x,y;G}$ denote the set of \emph{pivotal} vertices for the connection between $x$ and $y$ . That is, $v\notin \{x,y\}$ is in $\piv{x,y;G}$ if every path from $x$ to $y$ in $G$ passes through $v$. We say that $x$ is doubly connected to $y$ in $G$ (and write $\dconn{x}{y}{G}$) if $\piv{x,y;G}=\varnothing$.  We remark that in the physics literature, pivotal points are usually known as \emph{nodal} points.

In the pathological case $x=y$, we use the convention $x \longleftrightarrow x$ in $G$ and set $\piv{x,x;G}=\varnothing$ for any graph $G$ with $x\in V(G)$ (equivalently, $\dconn{x}{x}{G}$).

We observe that the pivotal points $\{u_1,\ldots, u_k\}$ can be ordered in a way such that every path from $x$ to $y$ passes through the pivotal points in the order $(u_1, \ldots, u_k)$. We define $\textsf{PD}(x,y,G) = \textsf{PD}(G)$ to be the \emph{pivot decomposition} of $G$, that is, a partition of the vertex set $V$ into a sequence, $(x,V_0, u_1, V_1, \ldots, u_k, V_k, y)$, where $(u_1, \ldots, u_k)$ are the ordered pivotal points and $V_i$ is the (possibly empty) set of vertices that can be reached only by passing through $u_i$ and that is still connected to $x$ after removing $u_{i+1}$. See Figure~\ref{fig:core_pivot_decomp}.

\begin{figure}%[!ht]
	\centering
 \includegraphics[scale=0.9]{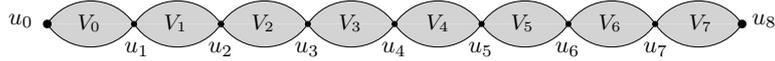}
	\caption{A schematic sketch of the pivot decomposition $(u_0,V_0, \ldots, V_7, u_8)$ of $G$, setting $x=u_0$ and $y=u_{k+1}$.}
	\label{fig:core_pivot_decomp}
\end{figure}

\paragraph{Classes of graphs.} Given a (locally finite) set $X \subset \Rd$, we let $\mathcal G(X)$ be the set of graphs with vertex set $X$. We let $\mathcal C(X)$ be the set of connected graphs on $X$. Moreover, for $x,y \in X$, we let $\mathcal D_{x,y}(X)\subseteq \mathcal C(X)$ be the set of non-pivotal graphs, i.e.,~the set of connected graphs such that $\piv{x,y;G} = \varnothing$.

Given $m$ \emph{bags} $X_1, \ldots, X_m \subset \Rd$ with $|X_i \cap X_j| \leq 1$ for all $1 \leq i<j \leq m$ , we let $\mathcal G(X_1,\ldots, X_m)$ denote the set of \emph{$m$-partite} graphs on $X_1, \ldots, X_m$, i.e.,~the set of graphs $G$ with $V(G) = \cup_{i=1}^m X_i$ and $E(G[X_i]) =\varnothing$ for $i\in[m]$. Note that we allow bags to have (at most) one vertex in common, which is a slight abuse of the notation in graph theory, where $m$-partite graphs have disjoint bags.

\paragraph{The notion of ($\pm$)-graphs.} We introduce a ($\pm$)-graph as a triple $G^\pm = (V(G),E^+(G), E^-(G)) = (V,E^+,E^-)$, where $V$ is the vertex set and $E^+,E^- \subseteq \binom{V}{2}$ are disjoint. In other words $G^\pm$ is a graph where every edge is of exactly one of two types (plus or minus). We set $E := E^+ \cup E^-$ and associate to $G^\pm$ the two simple graphs $G^{|\pm|} := (V,E)$ and $G^+:=(V^+,E^+)$, where $V^+ := \{x \in V: \exists e \in E^+: x \in e\}$ are the vertices incident to at least one $(+)$-edge.

We extend all the notions for simple graphs to ($\pm$)-graphs. In particular, given $X \subset \Rd$, we let $\mathcal G^\pm(X)$ be the set of ($\pm$)-graphs on $X$. Moreover, $\mathcal C^\pm(X)$ are the ($\pm$)-connected graphs on $X$, that is, the graphs such that $G^{|\pm|}$ is connected. Similarly, $\mathcal C^+(X)\subset \mathcal C^\pm(X)$ are the $(+)$-connected graphs, that is, those where $G^+$ is connected and $V(G)=V^+$. For $x,y\in X$, we denote by $\mathcal D^\pm_{x,y}(X)$ the set of those ($\pm$)-connected graphs on $X$ where $\piv{x,y;G^{|\pm|}}= \varnothing$ and $\mathcal D^+_{x,y}(X)\subset \mathcal D^\pm_{x,y}(X)$ are those ($\pm$)-connected graphs on $X$ where $\piv{x,y;G^+}=\varnothing$. We also define the $(\pm)$-pivot decomposition $\textsf{PD}^\pm(x,y,G^\pm) = \textsf{PD}^\pm(G^\pm) = \textsf{PD}(G^{|\pm|})$ and the $(+)$-pivot decomposition $\textsf{PD}^+(x,y,G^\pm) = \textsf{PD}^+(G^\pm) = \textsf{PD}(G^{+})$. Lastly, we write $x \overset{+}{\longleftrightarrow} y$ if there is a path from $x$ to $y$ in $E^+$.

Given a ($\pm$)-graph $G$ and a simple graph $H$, we define $G \oplus H := (V(G) \cup V(H), E^+(G), E^-(G) \cup E(H))$.

\paragraph{Weights.} Given a simple graph $G$, a ($\pm$)-graph $H$ on $X \subset \Rd$ and the connection function $\connf$, we define the weights
	\[ \weight (G) := (-1)^{|E(G)|} \prod_{\{x,y\} \in E(G)} \connf(x-y), \qquad  \weight^\pm (H) := (-1)^{|E^- (H)|} \prod_{\{x,y\} \in E(H)} \connf(x-y). \]

\subsection{The random connection model}

The RCM $\xi$ can be formally constructed as a point process, that is, a random variable taking values in the space of locally finite counting measures $(\mathbf N, \mathcal N)$ on some underlying metric space $\mathbb X$. There are various ways to choose $\mathbb X$. One option is to let $\mathbb X = \Rd \times \mathbb M$ for an appropriate mark space $\mathbb M$ (see~\cite{MeeRoy96}), another way can be found in~\cite{HeyHofLasMat19, LasZie17}. In any way, one can reconstruct from $\xi$ the point process $\eta$ on $\Rd$ which makes up the vertex set of $\xi$. We treat $\eta$ both as a counting measure as well as a set, giving meaning to statements of the form $x \in \eta$.

If $e=\{x,y\}$ is an edge, then we write $\connf(e)= \connf(x-y)$. For a bounded set $\Lambda\subset \Rd$, we write $\eta_\Lambda=\eta\cap\Lambda$ and let $\xi_\Lambda$ denote the RCM restricted to $\Lambda$, that is $\xi[\eta_\Lambda]$. The two-point function restricted to $\Lambda$ is defined as
$\tlam^\Lambda(x,y) = \pla\big(\conn{x}{y}{\xi_\Lambda^{x,y}}\big)$ for $x,y \in \Lambda$ and zero otherwise.

For $V \subset W$, there is a natural way to couple the models $\xi^V$ and $\xi^W$, which is by deleting from $\xi^W$ all points in $W \setminus V$ along with their incident edges. We implicitly assume throughout this paper that this coupling for different sets of added points is used.

\paragraph{The Mecke equation.} Since it is used repeatedly throughout this paper, we state the Mecke equation, a standard tool in point process theory, in its version for the RCM (see~\cite{LasZie17}). For $m \in \N$ and a measurable function $f\colon \mathbf N \times \R^{dm} \to \R_{\geq 0}$, the Mecke equation states that
	\eqq{ \E_\lambda \bigg[ \sum_{\vec x_{[m]} \in \eta^{(m)}} f(\xi, \vec{x}_{[m]})\bigg] = \lambda^m \int
				\E_\lambda\Big[ f\left(\xi^{x_1, \ldots, x_m}, \vec x_{[m]}\right)\Big] \dd \vec{x}_{[m]},  \label{eq:prelim:mecke_m} }
where $\eta^{(m)}=\{\vec x_{[m]} \in \eta^m: x_i \neq x_j \text{ for } i \neq j\}$ are the pairwise distinct tuples.

\paragraph{Re-scaling.} It is a standard trick in continuum percolation to re-scale space in order to normalize a quantity of interest, which is $\int \connf(x) \dd x$ in our case. We point to~\cite[Section 2.2]{MeeRoy96}. As a consequence, we may without loss of generality assume that $\int \connf(x) \dd x=1$.

\paragraph{The BK inequality.} We say that $A\in\mathcal{N}$ {\em lives on} $\Lambda$ if $\mathds 1_{A}(\mu) = \mathds 1_{A}(\mu_\Lambda)$ for every $\mu \in \mathbf N$. We call an event $A \in \mathcal N$ \emph{increasing} if $\mu \in A$ implies $\nu\in A$ for each $\nu\in\mathbf{N}$ with $\mu\subseteq\nu$. Let $\mathcal{R}$ denote the ring of all finite unions of  half-open rectangles with rational coordinates. For two increasing events $A,B\in\mathcal{N}$ we define
	\eqq{ A \circ B := \{ \mu\in\mathbf{N}: \exists K,L \in \mathcal{R} \text{ s.t.~} K \cap L = \varnothing \text{ and } \mu_K \in A,\ \mu_L\in B \}.
\label{def:odot}	} 
Informally, this is the event that $A$ and $B$ take place in spatially disjoint regions. It is proved in~\cite[Theorem~2.1]{HeyHofLasMat19} that for two increasing events $A$ and $B$ living on $\Lambda$, we have 
\[
\pla(A\circ B) \leq \pla(A) \pla(B).\]

\paragraph{The RCM on a fixed vertex set.} Given some (finite) set $X \subset \Rd$ and a function $\connf\colon\Rd\to [0,1]$, we will often have to deal with the following random graph: Its vertex set is $X$, and two vertices $x,y\in X$ are adjacent with probability $\connf(x-y)$, independently of other pairs of vertices. This is simply the RCM conditioned to have the vertex set $X$. To highlight the difference to $\xi$, which depends on the PPP $\eta$, we denote this random graph by $\rg(X)$. If $Y \subset X$, then we write $\rg(Y)$ for $\rg(X)[Y]$. Since there is no dependence on $\lambda$, we write $\p$ for the probability measure of the RCM with fixed vertex set.

%%%%%%%%%%%%%%%%%%%%%%%%%%%%%%%%%%%%%%%%%%%%%%%%%%%%%%%%%%%%%%%%%%%%%%%%%%%%%%%%%%%%%%%%%%%%%%%%%%%%%%%%%%%%%%%%%%%%%%%%%%%%%%%%%%%%%%%%%%%%%%%%%%%%%%%%%%%%%%%%%%%%%%%%%%%%%%%%%%%%%%%%%%%%%%%%%%%%
%%%%%%%%%%%%%%%%%%%%%%%%%%%%%%%%%%%%%%%%%%%%%%%%%%%%%%%%%%%%%%%%%%%%%%%%%%%%%%%%%%%%%%%%%%%%%%%%%%%%%%%%%%%%%%%%%%%%%%%%%%%%%%%%%%%%%%%%%%%%%%%%%%%%%%%%%%%%%%%%%%%%%%%%%%%%%%%%%%%%%%%%%%%%%%%%%%%%
\section{Fixing ideas: The two-point function in finite volume} \label{sec:twopoint_fct}

We use this section to put the definitions of Section~\ref{sec:defs} into action and to derive a power-series expansion for $\tlam$ in finite volume. We start by motivating the introduction of ($\pm$)-graphs by linking them to the RCM $\rg$.

\begin{observation}[Connection between $(\pm)$-graphs and probabilities] \label{obs:pm_graphs_probabilities}
Let $X \subset \Rd$ be finite. Let $\mathfrak P \subseteq \mathcal G(X)$ be a graph property. Then
	\[ \sum_{\substack{ G \in \mathcal G^\pm(X): \\ (V(G), E^+(G)) \in \mathfrak P}} \textup{\weight}^\pm(G) 
						= \p \big( \rg(X) \in \mathfrak P  \big). \]
\end{observation}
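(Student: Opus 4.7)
The plan is to start from the definition of $\rg(X)$ as a product-measure random graph: each potential edge $\{x,y\} \in \binom{X}{2}$ is present with probability $\connf(x-y)$ independently. So I would first write
\[
\p\big(\rg(X) \in \mathfrak P\big) = \sum_{\substack{F \subseteq \binom{X}{2}: \\ (X,F) \in \mathfrak P}} \prod_{e \in F} \connf(e) \prod_{e \in \binom{X}{2} \setminus F} \bigl(1-\connf(e)\bigr),
\]
simply by summing over the realizations of the edge set.

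The key step is then to expand each factor $1-\connf(e)$ binomially, i.e.\ to write
\[
\prod_{e \in \binom{X}{2} \setminus F} \bigl(1-\connf(e)\bigr) = \sum_{F' \subseteq \binom{X}{2} \setminus F} (-1)^{|F'|} \prod_{e \in F'} \connf(e).
\]
Substituting this back, the double sum runs over pairs $(F,F')$ of disjoint edge subsets of $\binom{X}{2}$ with $(X,F) \in \mathfrak P$.

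Finally, I would make the bijective identification $F \leftrightarrow E^+(G)$, $F' \leftrightarrow E^-(G)$. Since every $(\pm)$-graph $G \in \mathcal G^\pm(X)$ is precisely a pair of disjoint edge sets on $X$, this matches exactly the objects summed over, and the integrand becomes
\[
(-1)^{|F'|}\prod_{e \in F \cup F'} \connf(e) = (-1)^{|E^-(G)|}\prod_{e \in E(G)}\connf(e) = \weight^\pm(G),
\]
while the constraint $(X,F) \in \mathfrak P$ is exactly $(V(G), E^+(G)) \in \mathfrak P$. This yields the claimed identity.

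I do not anticipate a genuine obstacle: the argument is a one-line inclusion--exclusion masquerading as a combinatorial identity, and the only point requiring mild care is verifying that $\weight^\pm$ is defined using both $E^+$ and $E^-$ in the $\connf$-product but only $E^-$ in the sign, which is precisely what makes the identification with the binomial expansion of $(1-\connf(e))$ work.
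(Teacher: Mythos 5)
Your proof is correct and follows exactly the same route as the paper: write $\p(\rg(X)\in\mathfrak P)$ as a sum over realizations of the edge set, expand each factor $(1-\connf(e))$ binomially, and identify the resulting pair of disjoint edge sets with $(E^+,E^-)$ of a $(\pm)$-graph. The only difference is that you spell out the identification step, which the paper leaves implicit.
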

\begin{proof}
Note that
	\[ \p \big( \rg(X) \in \mathfrak P  \big) = \sum_{\substack{G \in \mathcal G(X): \\G \in\mathfrak P }} \prod_{e \in E(G)} \connf(e) 
					\prod_{e \in \binom{X}{2} \setminus E(G)} (1-\connf(e)).  \]
Expanding the factor $\prod_{e \in \binom{X}{2} \setminus E(G)} (1-\connf(e))$ into a sum proves the claim.
\end{proof}

Note that the weight of a ($\pm$)-graph may also be calculated by taking the product over all its edges, with factors $\connf(\cdot)$ and $-\connf(\cdot)$ for edges in $E^+$ and $E^-$, respectively. Observation~\ref{obs:pm_graphs_probabilities} motivates that the edges in $E^+$ correspond to the edges in the random graph $\rg$. %This should be reminiscent of the split of the Mayer-$f$ function into $f^+$ and $f^*$, the former inducing a set of ``physical clusters'' among the ``mathematical clusters'', see~\cite{Hil55} or~\cite{ConDeAngFor77}. Indeed, in the case of a perfect gas (i.e., a Poisson point process), we have $f^+=-f^*$.

Next we prove a power-series expansion for $\tlam$ in terms of the intensity $\lambda$. The expansion~\eqref{eq:tlam_cluster_exp} was already given by Coniglio, De Angelis and Forlani~\cite[Eq.~(12)]{ConDeAngFor77}, who work in the more general context of Gibbs point processes but do not prove convergence. The proposition enters the proof of Proposition~\ref{thm:OZE_finite}. 

Notice that the coefficients of power series expansions like \eqref{eq:tlam_cluster_exp} are given by integrals with respect to the Lebesgue measure and it is sufficient that the integrands are defined up to Lebesgue null sets for those integrals to be well-defined. Since vectors $\vec x_{[3, n+2]}\in\R^{dn}$ with less than $n$ distinct entries constitute a Lebesgue null set, we can assume that for $x_1\neq x_2$ only graphs with vertex sets of cardinality $n+2$ contribute to the $n$-th coefficient in \eqref{eq:tlam_cluster_exp}. The same considerations apply to all graphical expansions appearing from here on, including our main definition \eqref{eq:def:dcf}.

\begin{prop}[Graphical expansion for the two-point function] \label{thm:tlam_finite_cluster_expansion}
Consider the RCM restricted to a bounded measurable set $\Lambda \subset \Rd$, and let $x_1,x_2 \in \Lambda$. Then
	\eqq{ \tlam^\Lambda(x_1,x_2) = \sum_{n \geq 0} \frac{\lambda^n}{n!} \int_{\Lambda^n} \sum_{\substack{G \in {\mathcal C}^\pm(\vec x_{[n+2]}): 
				\\ x_1 \overset{+}{\longleftrightarrow} x_2}} \textup{\weight}^\pm(G) \dd \vec x_{[3, n+2]} \label{eq:tlam_cluster_exp}}
with 
\[
	\sum_{n \geq 0} \frac{\lambda^n}{n!} 
	\int_{\Lambda^n} \Bigl| \sum_{\substack{G \in {\mathcal C}^\pm(\vec x_{[n+2]}): 
				\\ x_1 \overset{+}{\longleftrightarrow} x_2}} \textup{\weight}^\pm(G)\Bigr| \dd \vec x_{[3, n+2]} \leq \exp\Big\{2\lambda + \lambda |\Lambda|\mathrm e^\lambda\Big\} <\infty.
\]				
%and the series in~\eqref{eq:tlam_cluster_exp} converges absolutely.
\end{prop}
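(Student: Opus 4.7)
The plan is to combine a Poisson-conditioning step with Observation~\ref{obs:pm_graphs_probabilities} and a cluster-type decomposition. Conditioning on $|\eta_\Lambda|$ (which has a Poisson$(\lambda|\Lambda|)$ distribution and yields, given its value, an i.i.d.\ uniform sample in $\Lambda$), one obtains
\[
\tlam^\Lambda(x_1,x_2)=\sum_{n\ge 0}\mathrm{e}^{-\lambda|\Lambda|}\frac{\lambda^n}{n!}\int_{\Lambda^n}\p\bigl(x_1\overset{+}{\longleftrightarrow}x_2 \textrm{ in } \rg(\vec x_{[n+2]})\bigr)\,\dd\vec x_{[3,n+2]}.
\]
Applying Observation~\ref{obs:pm_graphs_probabilities} with $\mathfrak P=\{G:\,x_1\overset{+}{\longleftrightarrow} x_2\}$ expresses this probability as $\sum_{G\in\mathcal G^\pm(\vec x_{[n+2]}):\,x_1\overset{+}{\longleftrightarrow}x_2}\weight^\pm(G)$; it then remains to cut this sum down to $(\pm)$-\emph{connected} graphs while absorbing the prefactor $\mathrm{e}^{-\lambda|\Lambda|}$.

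The key combinatorial step is to split each such $G$ into its $(\pm)$-connected components. Since a $(+)$-path is also a $(\pm)$-path, $x_1$ and $x_2$ must lie in the same $(\pm)$-component, say $\{x_1,x_2\}\cup S$ for some $S\subseteq\{x_3,\dots,x_{n+2}\}$, while the remaining vertices $R:=\{x_3,\dots,x_{n+2}\}\setminus S$ carry an arbitrary $(\pm)$-graph with no $(\pm)$-edges to $\{x_1,x_2\}\cup S$ (each absent pair contributing the factor $1$). The algebraic identity driving the proof is
\[
\sum_{G'\in\mathcal G^\pm(V)}\weight^\pm(G')=\prod_{\{u,v\}\in\binom{V}{2}}\bigl(1+\connf(u-v)-\connf(u-v)\bigr)=1,
\]
valid for any finite $V\subset\Rd$, so the $(\pm)$-subgraph on $R$ contributes exactly $1$. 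Writing $\kappa=|S|$, using symmetry of the integrand to collect the factor $\binom{n}{\kappa}$, and integrating out the $n-\kappa$ free variables to pull out $|\Lambda|^{n-\kappa}$, I swap the summations over $n$ and $\kappa$ and substitute $m=n-\kappa$; the resulting series $\sum_{m\ge 0}(\lambda|\Lambda|)^m/m!=\mathrm{e}^{\lambda|\Lambda|}$ cancels $\mathrm{e}^{-\lambda|\Lambda|}$ and leaves exactly~\eqref{eq:tlam_cluster_exp}.

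For the absolute convergence bound, one has $|\weight^\pm(H)|\le\prod_{e\in E(H)}\connf(e)$, and the task is to estimate the positive sum $\sum_{H\in\mathcal C^\pm:\,x_1\overset{+}{\longleftrightarrow}x_2}\prod_e\connf(e)$ after integration. The natural strategy is to extract from each $H$ a canonical $(+)$-path $\pi$ from $x_1$ to $x_2$ (say the lex-first shortest one), which handles the $x_1\overset{+}{\longleftrightarrow}x_2$ constraint; using $\int\connf=1$, the sum of path contributions over path length and intermediate-vertex positions yields the factor $\mathrm{e}^{2\lambda}$. The remaining vertices are $(\pm)$-attached to the growing skeleton, with edge-by-edge contribution bounded by $1+2\connf\le\mathrm{e}^{2\connf}$ (the $2$ absorbing the sign choice on each $(\pm)$-edge); integration over $\Lambda$ yields at most $|\Lambda|\mathrm{e}^\lambda$ per extra vertex, and summing the Poisson series in their number gives $\exp(\lambda|\Lambda|\mathrm{e}^\lambda)$. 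The combined estimate is $\exp\{2\lambda+\lambda|\Lambda|\mathrm{e}^\lambda\}$, as claimed. The main technical obstacle I expect is the careful canonicalization of $\pi$ and bookkeeping of $(\pm)$-attachments so as to preserve $(\pm)$-connectedness without overcounting; fortunately, the target bound is loose enough that these coarse edge-by-edge estimates suffice.
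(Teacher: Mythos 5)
Your derivation of the identity~\eqref{eq:tlam_cluster_exp} is correct and takes a genuinely different (and arguably cleaner) route than the paper: you condition on the Poisson number of points, apply Observation~\ref{obs:pm_graphs_probabilities}, partition by the $(\pm)$-component of $x_1$, and resum the binomial and exponential series, whereas the paper partitions $\tlam$ by cluster size, invokes a known probabilistic formula, expands the exponential, and then argues that adding back certain graphs produces zero net contribution. Both are valid; your version trades the paper's cancellation lemma for the simple algebraic fact $\sum_{G'\in\mathcal G^\pm(R)}\weight^\pm(G')=1$.

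However, the absolute-convergence argument is not correct, and the gap is fundamental rather than technical. You propose to bound the \emph{term-by-term} positive sum $\sum_{H\in\mathcal C^\pm:\,x_1\overset{+}{\leftrightarrow}x_2}\prod_{e\in E(H)}\connf(e)$. This quantity \emph{diverges} after integration, for every $\lambda>0$: if you restrict the integral to a small ball where $\connf\equiv 1$, the integrand equals the number of $(\pm)$-connected graphs on $n+2$ vertices with $x_1\overset{+}{\leftrightarrow}x_2$, which grows like $2^{\Theta(n^2)}$, and $\lambda^n 2^{\Theta(n^2)}/n!\to\infty$. No choice of canonical path or ``edge-by-edge'' estimate $1+2\connf\le\e^{2\connf}$ can fix this, because you are already bounding the wrong object. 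The bound in the statement has the absolute value \emph{outside} the sum over $G$, $\bigl|\sum_G\weight^\pm(G)\bigr|$, and this is finite only because of massive cancellation; the paper captures it by first rewriting the inner sum (via the cluster-size partition) as a probability $\p(\rg(\vec x_{[n+2]})\in\mathcal C)\le 1$ times a product of factors $\prod_i(1-\connf(x_i-y_j))-1$, each of modulus $\le 1$, \emph{before} any term-by-term estimate is taken. This is exactly why the remark after Proposition~\ref{thm:tlam_finite_cluster_expansion} emphasizes that the all-$\lambda$ validity is ``completely different'' from ordinary cluster expansions: the positive-term majorant has zero radius of convergence, so your ``target bound is loose enough'' heuristic is inverted — the bound is in fact only attainable through cancellation. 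As a secondary point, the swap of $\sum_n$ and $\sum_\kappa$ in your identity proof needs the very absolute-convergence estimate you attempt afterwards (or a non-negativity argument for the inner sums), so the two halves of your argument cannot be kept logically independent as written.
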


Note that Proposition~\ref{thm:tlam_finite_cluster_expansion} is valid for all intensities $\lambda\geq 0$. This situation is completely different from familiar cluster expansions~\cite{brydges1986leshouches}, where the radius of convergence of relevant expansions is finite in finite volume as well.

The expansion~\eqref{eq:tlam_cluster_exp} amounts to the physicists' expansion in powers of the activity. The expansion in powers of the density instead involves sums over a smaller class of graphs. For Poisson point processes, activity and density are the same and the two expansions must coincide. In our context, we point out that the sum over graphs in~\eqref{eq:tlam_cluster_exp} can be reduced to the sum over the subset of graphs in $\mathcal C^\pm$ that contain a $(+)$-path from $x_1$ to $x_2$ and that have no \emph{articulation points} (with respect to~$x_1,x_2$). To define articulation points, recall that a cut vertex leaves a connected graph disconnected upon its deletion. Now, an articulation point is a cut vertex that is not pivotal for the $x_1$-$x_2$-connection. It is not difficult to see that for fixed points $x_{[n+2]}$, the graphs with articulation points in the sum over graphs $G$ in~\eqref{eq:tlam_cluster_exp} exactly cancel out. This cancellation happens at fixed $n$ and does not require any re-summations between graphs with different numbers of vertices.

The proof of Proposition~\ref{thm:tlam_finite_cluster_expansion} builds on yet another equivalent representation: In Eq.~\eqref{eq:tlam_cluster_exp} we can discard those graphs $G$ for which $G^+$ is not connected and those for which not every $(-)$-edge has at least one endpoint in $V(G^+)$, see Eq.~\eqref{eq:rcmfin:tlam_exp_one_sum} below for a precise statement. 
To the best of our knowledge, Eq.~\eqref{eq:rcmfin:tlam_exp_one_sum} is new.

\begin{proof}[Proof of Proposition~\ref{thm:tlam_finite_cluster_expansion}]
We write $\tlam = \tlam^\Lambda$ and $\eta=\eta_\Lambda$. Given $x_1,x_2 \in\Lambda$, we can partition
	\algn{ \tlam(x_1,x_2) &= \sum_{n\geq 0} \pla\big(\conn{x_1}{x_2}{\xi_\Lambda^{x_1,x_2}}, |\C(x_1, \xi_\Lambda^{x_1, x_2})|=n+2\big) \label{eq:tlam_inital_exp} \\
			&= \sum_{n \geq 0} \frac{\lambda^n}{n!} \int_{\Lambda^n} \p\big( \rg(\vec x_{[n+2]})  \in \mathcal C(\vec x_{[n+2]})\big)
					\exp \Big\{ -\lambda \int_\Lambda \big( 1-\prod_{i=1}^{n+2} (1-\connf(x_i-y)) \big) \dd y \Big\} \dd \vec x_{[3,n+2]} \notag. }
The second identity can be found, for example, in~\cite[Proposition 3.1]{LasZie17}. Set
	\[
	f(\vec x_{[n+2]},\vec y_{[m]}) = \p\big( \rg(\vec x_{[n+2]})  \in \mathcal C(\vec x_{[n+2]})\big) \prod_{j=1}^m \Bigl( \prod_{i=1}^{n+2}( 1- \connf(x_i-y_j))-1 \Bigr).
	\]
Expanding the exponential in~\eqref{eq:tlam_inital_exp}, we find 
	\eqq{ \label{eq:tlam-exp-2} 
	\tlam(x_1,x_2) = \sum_{n,m\geq 0}\frac{\lambda^{n+m}}{m! n!} \int_{\Lambda^n}\int_{\Lambda^m} f(\vec x_{[n+2]},\vec y_{[m]}) \dd \vec y_{[m]} \dd \vec x_{[3,n+2]}. }
with 
	\algn{ &\sum_{n,m\geq 0}\frac{\lambda^{n+m}}{m! n!} \int_{\Lambda^n}\int_{\Lambda^m} \bigl| f(\vec x_{[n+2]},\vec y_{[m]})\bigr| \dd \vec y_{[m]} \dd \vec x_{[3,n+2]} \notag \\
		& \qquad = \sum_{n \geq 0} \frac{\lambda^n}{n!} \int_{\Lambda^n} \p\big( \rg(\vec x_{[n+2]})  \in \mathcal C(\vec x_{[n+2]})\big)
			\exp \Big\{ \lambda \int_\Lambda \big( 1-\prod_{i=1}^{n+2} (1-\connf(x_i-y)) \big) \dd y \Big\} \dd \vec x_{[3,n+2]} \notag  \\
		&\qquad \leq \sum_{n \geq 0} \frac{\lambda^n}{n!} \int_{\Lambda^n} \e^{\lambda( n+2)} \dd \vec x_{[3,n+2]} \notag  \\
		&\qquad = \exp\big\{2\lambda + \lambda|\Lambda|\e^\lambda \big\}<\infty. \label{eq:tbb}}
In the third line, we have used the inequality 
	\eqq {\label{ineq:combbound}
	\int_\Lambda \big( 1-\prod_{i=1}^{n+2} (1-\connf(x_i-y)) \big) \dd y \leq \int_\Lambda
	\sum_{i=1}^{n+2} \connf(x_i- y) \dd y \leq n+2,
	}
which can be shown as follows: Let $n\in \mathbb N$ and let $0\leq a_1,\ldots, a_n\leq 1$. Notice that the identity $ 1-\prod_{i=1}^{n} (1-a_i)= (1-a_n)\big(1-\prod_{i=1}^{n-1} (1-a_i)\big)+a_n$ and the estimate $(1-a_n)\leq 1$ hold for all $n\in\mathbb{N}$. The inequality between the integrands in $\eqref{ineq:combbound}$ now follows by induction with the choice $a_i=\varphi(x_i-y)$. The re-scaling introduced in Section 2.3 ensures that  $\int_\Lambda\varphi(x_i-y)\dd y\leq 1$, $i\in[n+2]$, yielding the second inequality. \\

Next we turn to a combinatorial representation of $f$ as a sum over $(\pm)$-graphs. Recall that $\mathcal C^+$ denotes sets of ($\pm$)-graphs that are ($+$)-connected. The definition of $f$ and Observation~\ref{obs:pm_graphs_probabilities} yield 
	\[
	f(\vec x_{[n+2]},\vec y_{[m]}) = \Bigl(\sum_{G \in \mathcal C^{+}(\vec x_{[n+2]})} \weight^\pm(G)\Bigr) \Bigl(
					\sum_{\substack{H \in \mathcal G(\vec x_{[n+2]}, \vec y_{[m]}):\\ y_i \sim \vec x_{[n+2]} \; \forall i \in [m] }} 
					\weight(H)\Bigr)
	 = \sum_{G \oplus H} \weight^\pm(G\oplus H) 
	\]
where the last sum is over all ($\pm$)-graphs $G' = G\oplus H$ in $\mathcal C^\pm(\vec x_{[n+2]} \cup \vec y_{[m]})$ such that first, there are no edges between points of $\vec y$, secondly, $(G \oplus H)^+$ is connected, and thirdly, the vertices of $(G \oplus H)^+$ are precisely $\vec x_{[n+2]}$. 

We re-arrange the double sum~\eqref{eq:tlam-exp-2} over $m,n$ into one sum, indexed by the value of $m+n$, and obtain 
	\algn{\tlam(x_1,x_2) &= \sum_{n \geq 0} \frac{\lambda^n}{n!} \int_{\Lambda^n} 
					\sum_{\substack{G \in \mathcal C^\pm(\vec x_{[n+2]}): \\ \{x_1, x_2\} \subseteq V(G^+), G^+ \text{ connected}, \\
					E(G^{|\pm|}[V \setminus V^+]) = \varnothing}} \weight^\pm(G) \dd \vec x_{[3, n+2]} \label{eq:rcmfin:tlam_exp_one_sum}\\
		&= \sum_{n \geq 0} \frac{\lambda^n}{n!} \int_{\Lambda^n} \sum_{\substack{G \in \mathcal C^\pm(\vec x_{[n+2]}): \\ x_1 \overset{+}{\longleftrightarrow} x_2}} 
					\weight^\pm(G) \dd \vec x_{[3, n+2]}. \label{eq:rcmfin:tlam:final_exp}}
In the second identity, we added some graphs to the sum, namely those in which $G^+$ is not connected or where there exist edges between vertices of $V \setminus V^+$. 

We claim that the weight of these added graphs sums up to zero. To see this, first identify $[n+2]$ with the vertices $\vec x_{[n+2]}$ and fix a graph $G \in \mathcal C([n+2])$. Now, let $C \subseteq [n+2]$ with $\{1,2\} \subseteq C$  and consider the set $\mathcal G_G(C)$ of all $(\pm)$-connected graphs $G^\pm$ on $[n+2]$ so that $G^{|\pm|} = G$ and $C$ is the vertex set of the $(+)$-component of $1$ in $G^\pm$. If there is at least one edge $e$ in $G$ that has both endpoints outside of $C$, we partition $\mathcal G_G(C)$ into those graphs where $e$ is in $E^+$ and those where $e$ is in $E^-$. This induces a pairing between the graphs of $\mathcal G_G(C)$, and they cancel out. What remains are precisely the graphs in~\eqref{eq:rcmfin:tlam_exp_one_sum}.
\end{proof}

%%%%%%%%%%%%%%%%%%%%%%%%%%%%%%%%%%%%%%%%%%%%%%%%%%%%%%%%%%%%%%%%%%%%%%%%%%%%%%%%%%%%%%%%%%%%%%%%%%%%%%%%%%%%%%%%%%%%%%%%%%%%%%%%%%%%%%%%%%%%%%%%%%%%%%%%%%%%%%%%%%%%%%%%%%%%%%%%%%%%%%%%%%%%%%%%%%%%
%%%%%%%%%%%%%%%%%%%%%%%%%%%%%%%%%%%%%%%%%%%%%%%%%%%%%%%%%%%%%%%%%%%%%%%%%%%%%%%%%%%%%%%%%%%%%%%%%%%%%%%%%%%%%%%%%%%%%%%%%%%%%%%%%%%%%%%%%%%%%%%%%%%%%%%%%%%%%%%%%%%%%%%%%%%%%%%%%%%%%%%%%%%%%%%%%%%%
\section{The direct-connectedness function} \label{sec:dcfshell}

\subsection{Motivation and rough outline} \label{sec:motivation}

The expansion of the direct-connectedness function in powers of the activity given by~\cite{ConDeAngFor77}, without proofs and convergence bounds, is 
\eqq{ \dcf^\Lambda(x_1, x_2) = \sum_{n \geq 0} \frac{\lambda^n}{n!} \int_{\Lambda^n} 
					\sum_{\substack{G \in \mathcal D_{x_1,x_2}^\pm(\vec x_{[n+2]}): \\ x_1 \overset{+}{\longleftrightarrow} x_2}} 
					\weight^\pm(G) \dd \vec x_{[3, n+2]}. \label{eq:dcf_rewriting}}
It is obtained from the expansion of the pair-connectedness function in Proposition~\ref{thm:tlam_finite_cluster_expansion} by discarding graphs that have pivotal points (i.e., graphs $G$ where $\textsf{Piv}^\pm(G)$ is nonempty). Before we pass to the thermodynamic limit, we perform a resummation and find another representation of $\dcf^\Lambda$ which has the conjectured advantage of increasing the domain of convergence.

Let $G=(V, E^+, E^-) \in \mathcal C^\pm(\vec x_{[n+2]})$ be a $(\pm)$-graph appearing in the expansion~\eqref{eq:rcmfin:tlam_exp_one_sum}. Thus $V= \{ x_i: 1\leq i \leq n+2\}$, the graph $G^+$ is connected, $x_1$ and $x_2$ belong to $V^+=V(G^+)$, every vertex $y\in V(G) \setminus V(G^+)$ is linked by at least one $(-)$-edge to $V^+$, and there are no edges between two vertices in $ V \setminus V^+$. We impose the additional constraint that $G^{|\pm|} = (\vec x_{[n+2]}, E^+\cup E^-)$ has no pivotal points for paths from $x_1$ to $x_2$. 

Since $x_1, x_2$ are connected by a path of $(+)$-edges, $G$ admits a $(+)$-pivot decomposition $\vec W= (u_0, V_0, \ldots, u_k, V_k, u_{k+1})$ (with $u_0=x_1$ and $u_{k+1}=x_2$), where $k \in \N_0$ is the number of pivotal points in $\textsf{Piv}^+(x_1, x_2;G)$. Then, $G$ decomposes into a \emph{core} graph $G_{\text{core}} = (V(G^+), E^+, E_{\text{core}}^-)$ with $E_{\text{core}}^-$ the set of $(-)$-edges of $G$ with both endpoints in $V_i \cup \{u_i, u_{i+1}\}$ for some $i \in [k]_0$, and a \emph{shell} graph $H = (V, \varnothing, E^-\setminus E^-_{\text{core}})$. By our choice of $E^-_{\text{core}}$, we have $\textsf{PD}^\pm(G_{\text{core}}) = \textsf{PD}^+(G_{\text{core}})= \vec W$. Clearly
	\[ 	\weight^\pm (G) = \weight^\pm (G_{\text{core}}) \weight^\pm (H). 	\]
In the right-hand side of~\eqref{eq:dcf_rewriting}, we restrict to graphs that also appear in~\eqref{eq:rcmfin:tlam_exp_one_sum} and rewrite the resulting sum as a double sum over core graphs and shell graphs. This gives rise to the series 
	\[\sum_{r=0}^\infty \frac{\lambda^r}{r!} \int_{\Lambda^r} \sum_{\vec W}\sum_{G_{\text{core}}} \weight^\pm (G_{\text{core}})\Bigl( \sum_{m=0}^\infty \frac{\lambda^m}{m!} 
				\int_{\Lambda^m} \sum_H \weight^\pm (H) \dd \vec y_{[m]} \Bigr) \dd \vec x_{[3,r+2]}.  \] 
The outer sum is over potential pivot decompositions $\vec W$ of core vertices $\vec x_{[r+2]}$, the second sum over $(\pm)$-graphs $G_{\text{core}}= (\vec x_{[r+2]}, E^+,E_{\text{core}}^-)$ that are $(+)$-connected and for which $\vec W$ is both the $(\pm)$-pivot decomposition and the $(+)$-pivot decomposition (in other words, the simple graph $(\vec x_{[r+2]}, E^+)$ is connected and $\textsf{PD}^\pm(x_1,x_2,G) =\textsf{PD}^+(x_1,x_2,G)=\vec W$). The inner sum is over $(\pm)$-graphs $H = (V(H),\varnothing, E^-(H))$ with vertex set $\vec x_{[r+2]}\cup \vec y_{[m]}$ and $(-)$-edges $\{y_i,x_j\}$ such that every vertex $y_i$ is linked to at least one vertex $x_j$, under the additional constraint that $(\vec x_{[r+2]} \cup \vec y_{[m]}, E^+, E_{\text{core}}^-\cup E^-(H))$ has no $(\pm)$-pivotal points for paths from $x_1$ to $x_2$. Let us denote the series associated to such graphs $H$ by $h_\lambda^\Lambda(G_{\text{core}})$,
	\eqq{ h_\lambda^\Lambda(G_{\text{core}}) = \sum_{m=0}^\infty \frac{\lambda^m}{m!} \int_{\Lambda^m} \sum_H \weight^\pm (H) \dd \vec y_{[m]}. \label{eq:shell-prelim}}
The right-hand side of \eqref{eq:shell-prelim} depends on $G_{\text{core}}$ through the pivot decomposition $\vec W$ only. We obtain the representation 
	\eqq{g_\lambda^\Lambda(x_1,x_2) = \sum_{r=1}^\infty \frac{\lambda^r}{r!}\int_{\Lambda^r} \sum_{\vec W} \sum_{G_{\text{core}}} \weight^\pm (G_{\text{core}}) 
			h_\lambda^\Lambda(G_{\text{core}}) \dd \vec x_{[3,r+3]}. \label{eq:gprelim} }
This expression, written in a slightly different form (see Definition~\ref{def:dcfshell:shell_dcf}), forms the starting point of this section. The main results of this section are the following. 
\begin{enumerate} 
\item Let $G_{\text{core}}$ be a $(\pm)$-graph as above. Then the corresponding power series $h_\lambda^\Lambda(G_{\text{core}})$ is absolutely convergent for  all intensities $\lambda \geq 0$ (Proposition~\ref{thm:dcfshell:bounds_on_shell_fct}). In addition, $h_\lambda^\Lambda(G_{\text{core}})$ can be expressed in terms of probabilities involving the random connection model on the fixed vertex set $V(G_{\text{core}})$ and of Poisson processes in $\Lambda$. This alternative expression is used to show that the (pointwise) limit 
	\[	h_\lambda(G_{\text{core}}) = \lim_{\Lambda\nearrow \R^d} h_\lambda^\Lambda(G_{\text{core}})	\] 
exists for all $\lambda>0$ (Lemma~\ref{lem:hlimit}). 
\item Then we show in Theorem~\ref{thm:dcf_inf:limit_theorem} that 
	\[ 	\sum_{r=0}^\infty \frac{\lambda^r}{r!}\int_{(\R^d)^r} \sum_{\vec W} \sum_{G_{\text{core}}} \weight^\pm (G_{\text{core}}) \Bigl|h_\lambda(G_{\text{core}})\Bigr| \dd \vec x_{[3,r+3]}< \infty.  \] 	
This allows us to define 
	\[	g_\lambda(x_1,x_2):= \sum_{r=0}^\infty \frac{\lambda^r}{r!}\int_{(\R^d)^r} \sum_{G_{\text{core}}} \weight^\pm (G_{\text{core}}) h_\lambda(G_{\text{core}}) \dd \vec x_{[3,r+3]} 	\] 
and to pass to the limit in~\eqref{eq:gprelim} showing that
	\[	\lim_{\Lambda\nearrow \R^d} g_\lambda^\Lambda(x_1,x_2) = g_\lambda(x_1,x_2)\] 
as part of Theorem~\ref{thm:dcf_inf:limit_theorem}. 
\end{enumerate} 

\subsection{Definition} 

Here we introduce the precise definitions of core graphs and shell graphs as well as of the functions $h_\lambda^\Lambda$ and $g_\lambda^\Lambda$. We follow the ideas outlined in the previous section but make two small changes. First, shell graphs $H$ are defined not as $(\pm)$-graphs with minus edges only but right away as standard graphs. Second, a close look reveals that the shell function $h_\lambda^\Lambda(G_{\text{core}})$ defined in~\eqref{eq:shell-prelim} depends on the core graph only via $\vec W$; accordingly we view $h_\lambda^\Lambda$ as a function of a sequence of sets.
In addition we drop the index from the core graph; thus the graph $G$ in Definition~\ref{def:graphs:shell_core} below corresponds to $G_{\text{core}}$ in the previous section, see Figure~\ref{fig:New_shell-core_v3}.

\begin{definition}[Core graphs and shell graphs] \label{def:graphs:shell_core}  \
\begin{enumerate}
\item Let $x_1,x_2\in\mathbb R^d$ and let $\{x_1,x_2\}\subset W\subset\Rd$ be a finite set of vertices. We call a graph $G\in\mathcal C^+(W)$ with $\textsf{PD}^\pm(x_1,x_2,G) =\textsf{PD}^+(x_1,x_2,G) = \vec W$ a \emph{core} graph with pivot decomposition $\vec W$ and denote the set of such graphs by $\mathcal G^{\vec W}_{\text{core}}$.
\item Let $G\in\mathcal C^+(W)$ be a core graph with pivot decomposition $\vec W=(u_0, V_0, \ldots, V_k, u_{k+1})$, $k\in\mathbb{N}_0$, where we set $u_0:=x_1$ and $u_{k+1}:=x_2$. Moreover, let $\overline V_i := V_i \cup \{u_i, u_{i+1}\}$ and let $Y$ be a finite subset of $\mathbb{R}^d$. A \emph{shell} graph on $W\cup Y$ associated to $\vec W$ is a $(k+1)$-partite graph $H\in \mathcal G(\overline V_1, \ldots, \overline V_k, Y)$ such that $G \oplus H \in \mathcal D_{x_1,x_2}^\pm(W \cup Y)$. We call the vertices $Y\subset V(H)$ satellite vertices and write $S(H)=Y$. Notice that the set of all shell graphs on $W\cup Y$ associated to $\vec W$ does not depend on the choice of the core graph $G$. We denote it by $\mathcal G^{Y, \vec W}_{\text{shell}}$. 
\end{enumerate}
\end{definition}

\begin{figure}
	\centering
\includegraphics[scale=0.5]{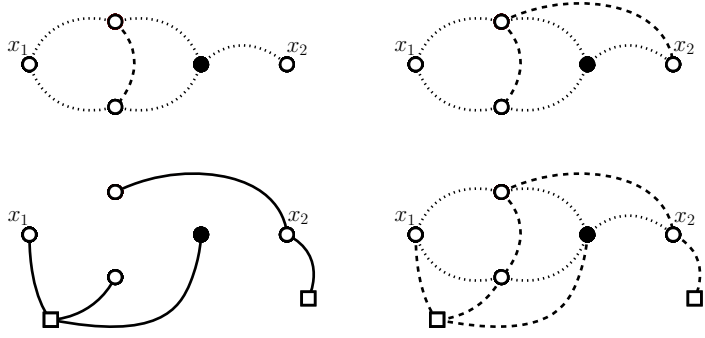}
\caption{In the first line, we see an example of two ($\pm$)-graphs; the ($+$)-edges are depicted by dotted and the ($-$)-edges by dashed lines. Notice that both graphs are ($+$)-connected. However, the solid black vertex---which is ($+$)-pivotal for the $x_1$-$x_2$-connection in both graphs---is ($\pm$)-pivotal for the $x_1$-$x_2$-connection in the graph on the left but not in the graph on the right. Hence, the graph on the left is a core graph according to Definition \ref{def:graphs:shell_core} but the graph on the right is not.  In the second line, the simple graph on the left is a shell graph for the core graph above, since the ($\pm$)-graph given by their sum (depicted on the right) is ($\pm$)-doubly connected, in particular there are no ($\pm$)-pivotal points for the $x_1$-$x_2$-connection.\label{fig:New_shell-core_v3}}
\end{figure}
We define $h_\lambda^\Lambda$ and $g_\lambda^\Lambda$ by expansions similar to~\eqref{eq:shell-prelim} and~\eqref{eq:gprelim} and postpone the proof of convergence to Proposition~\ref{thm:dcfshell:bounds_on_shell_fct} and Theorem~\ref{thm:dcf_inf:limit_theorem}.
By some abuse of language, we refer to the series~\eqref{eq:def:dcf} as the direct-connectedness function, and we use the same letter $g_\lambda$ as in~\eqref{eq:intro:oze_2d}. This is justified a posteriori by the proof of Theorem~\ref{thm:main_thm}, where we show that the series is indeed the expansion for the direct-connectedness function $\dcf$ defined as the unique solution of the OZE~\eqref{eq:intro:oze_2d}.

\begin{definition}[Shell functions and direct-connectedness function] \label{def:dcfshell:shell_dcf} \
\begin{enumerate}

\item Let $W\subset \mathbb R^d$ be finite and $\vec W$ be given as in Definition \ref{def:graphs:shell_core}. For $m\in\mathbb N_0$, define the $m$-\emph{shell function} $h^{(m)}$ by
	\eqq{ h^{(m)}(\vec W, Y) :=  \sum_{\substack{H \in \mathcal G^{Y, \vec W}_{\text{shell}}}}  \normalfont{\weight}(H) ,\quad Y=\{y_1,\ldots,y_m\}\subset \R^d, \label{eq:def:mshell_fct}}
and the \emph{shell function} $h^\Lambda_\lambda$ in finite volume $\Lambda\subset \R^d$ by 
	\eqq {h^\Lambda_\lambda(\vec W)  := \sum_{m \geq 0} \frac{\lambda^m}{m!} \int_{\Lambda^m}  h^{(m)}(\vec W, \vec y_{[m]}) \dd \vec y_{[m]}.\label{eq:def:shell_fct}}
\item Let $\lambda < \lambda_\ast$. We define the \emph{direct-connectedness function} as $\dcf\colon \Rd\times\Rd \to \R$, 
	\eqq{ \dcf^{\Lambda}(x_1,x_2) := \sum_{r \geq 0} \frac{\lambda^r}{r!} \int_{\Lambda^{r}} \sum_{\vec W} 
				\Big(\sum_{G \in \mathcal G^{\vec W}_{\text{core}}}
				\weight^\pm(G) \Big) h_\lambda^\Lambda(\vec W)  \dd \vec x_{[3,r+2]}, \label{eq:def:dcf}}
where  $W:=\{x_1,\ldots,x_{r+2}\}$ and we sum over decompositions $\vec W$ of $W$ given as in Definition \ref{def:graphs:shell_core}. In the pathological case $x_1=x_2$, $\eqref{eq:def:dcf}$ is to be read as $\dcf^{\Lambda}(x_1,x_2) := 1$. Let $\dcf^{\Lambda}\colon\Rd \to \R$ be defined by $\dcf^{\Lambda}(x) = \dcf^{\Lambda}(\orig, x)$. 
\end{enumerate}
\end{definition}

The $0$-shell function $h^{(0)}$ is understood to be given in terms of shell graphs without satellite vertices, i.e.,
	\[ h^{(0)}(\vec W) = \sum_{\substack{H \in \mathcal G^{\varnothing, \vec W}_{\text{shell}}}}  \normalfont{\weight}(H). \] 
Note that due to translation invariance, $\dcf^\Lambda(x_1,x_2) = \dcf^\Lambda(\orig, x_2-x_1) = \dcf^{\Lambda}(x_2-x_1) $.

\subsection{Analysis of the shell functions: Laces} \label{sec:laces}

If we take a look at the graphs that are summed over in the shell function, we note that the associated minimal structures have a form which is very reminiscent of graphs that are known as \emph{laces} and famously appeared in the analysis of, for example, self-avoiding walk~\cite{BrySpe85, Sla06}. They are also the namesake of the lace-expansion technique.

Proposition~\ref{thm:dcfshell:bounds_on_shell_fct} is the central result of this section. It allows to bound the shell function by the probability that the points in a PPP $\eta$ are not connected to the core vertices $W$. Moreover, we introduce laces and partition the shell graphs with respect to them. For every lace, we obtain a precise expression of its contribution to the shell function.

To prove Proposition~\ref{thm:dcfshell:bounds_on_shell_fct}, we will need quite a few definitions (see Definitions~\ref{def:laces:skeletons},~\ref{def:laces},~\ref{def:laces:span}) and some intermediate results thereon.

\begin{prop}[Bounds on the shell functions] \label{thm:dcfshell:bounds_on_shell_fct}
Let $\lambda\geq 0$ and let $\Lambda\subset \Rd$ be bounded. Let $u_0, \ldots, u_{k+1} \in \Lambda$ for $k\in\N_0$, let $V_0, \ldots, V_k \subset \Lambda$ be finite sets, and set $\vec W=(u_0, V_0, \ldots, V_k, u_{k+1})$. Then
	\eqq{|h_\lambda^\Lambda(\vec W)| \leq \pla(\nconn{\eta_\Lambda}{W}{\xi^{W}}). \label{eq:dcfshell:main_bound}}
Moreover,
	\eqq{ \sum_{m \geq 0}\frac{\lambda^m}{m!} \int_{\Lambda^m} \big| h^{(m)} (\vec W, \vec y_{[m]}) \big| \dd\vec y_{[m]} 
			\leq \frac{1}{\sqrt{5}} \textup{\e}^{3 \lambda|W|} \big( 3+\sqrt{5} \big)^{|W|}. \label{eq:dcfshell:m_shell_summability} }
\end{prop}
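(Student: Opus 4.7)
I plan to prove Proposition~\ref{thm:dcfshell:bounds_on_shell_fct} via a \emph{lace expansion} for the shell function, as hinted in the preamble to the statement. Call a shell graph $L \in \mathcal G^{Y,\vec W}_{\text{shell}}$ a \emph{lace} if it is edge-minimal: removing any edge would destroy the doubly-connected property of $G \oplus L$. To each shell graph $H$ I would associate a canonical sub-lace $L(H) \subseteq H$ via a deterministic extraction rule (to be introduced in Definitions~\ref{def:laces:skeletons}--\ref{def:laces:span}). Call $e \notin L$ \emph{compatible with} $L$ if $L \cup \{e\}$ extracts to $L$, and write $\mathrm{Comp}(L)$ for the set of such edges. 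The point of the canonical extraction is that any subset of $\mathrm{Comp}(L)$ may be adjoined to $L$ without changing the extracted lace, yielding the resummation identity
\[\sum_{H:\, L(H) = L}\weight(H) \;=\; \weight(L) \prod_{e \in \mathrm{Comp}(L)}(1-\connf(e)),\]
and therefore
\[h^{(m)}(\vec W, Y) \;=\; \sum_{\text{laces } L \text{ on } W \cup Y} \weight(L) \prod_{e \in \mathrm{Comp}(L)}(1-\connf(e)).\]

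For the main bound~\eqref{eq:dcfshell:main_bound}, observe that $\prod_{e \in \mathrm{Comp}(L)}(1-\connf(e))$ is the probability in the fixed-vertex RCM $\rg(W \cup Y)$ that all compatible edges are absent, while $|\weight(L)|=\prod_{e \in L}\connf(e)$ is the probability that all lace edges are present. Taking absolute values and summing over laces bounds $|h^{(m)}(\vec W, Y)|$ by probabilities that prescribed edge patterns appear in $\rg(W \cup Y)$. Summing over satellite counts via $\sum_m \tfrac{\lambda^m}{m!}\int_{\Lambda^m}(\cdot)\dd\vec y_{[m]}$ and applying the Mecke equation~\eqref{eq:prelim:mecke_m} turns the integrals into a Poisson expectation; combined with the monotonicity of the event in the PPP, the bound collapses to the non-connection probability $\pla(\nconn{\eta_\Lambda}{W}{\xi^W})$.

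For~\eqref{eq:dcfshell:m_shell_summability} I would bound $\prod_{e \in \mathrm{Comp}(L)}(1-\connf(e)) \leq 1$ and reduce to a sum over laces weighted by $\prod_{e \in L}\connf(e)$. Each satellite contributes at most $\int \connf = 1$ per incident lace edge (after the rescaling of Section~2.3), and a bounded total lace-degree at each core vertex produces the factor $\e^{3\lambda|W|}$. The number of admissible lace skeletons on the sequence of bags satisfies a Fibonacci-type recursion: each pivot barrier is either bypassed by a dedicated edge, or a bypass is shared between two consecutive bags. Binet's formula then accounts for the constants $1/\sqrt{5}$ and $3+\sqrt{5} = 2\phi^2$, where $\phi$ denotes the golden ratio.

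The principal difficulty is the passage from the algebraic lace-expansion identity to the clean non-connection probability in~\eqref{eq:dcfshell:main_bound}, since the signed sum over laces is not literally a probability. Matching it to the decreasing event $\nconn{\eta_\Lambda}{W}{\xi^W}$ requires carefully identifying, for each PPP realization, which lace structure on $\eta_\Lambda \cup W$ it corresponds to. I expect an extension of Observation~\ref{obs:pm_graphs_probabilities} to the shell setting, together with the independence of RCM edges incident to $W$ from those among $\eta$, to carry this through; if a direct domination argument fails, the BK inequality from Section~2.3 should serve as a fallback.
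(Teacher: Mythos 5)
Your proposal correctly identifies the two pillars---a lace-expansion of the shell function and a Fibonacci count of laces---but the laces you define live at the wrong level, and the crucial bound~\eqref{eq:dcfshell:main_bound} is not established by the route you sketch. In the paper, the laces are graphs on the abstract index set $\{0,\ldots,k+1\}$: one first passes from a shell graph $H$ to its \emph{skeleton} $\hat H$ on $\{0,\ldots,k+1\}$, recording which pairs of bags are linked by a direct or indirect stitch (Definition~\ref{def:laces:skeletons}), and laces are minimal irreducible skeletons. Your edge-minimal subgraphs of $H$ on $W\cup Y$ are a different object whose count grows with $|Y|$ and does not satisfy the Fibonacci recursion of Lemma~\ref{lem:laces:size_L}; the Fibonacci asymptotics only kick in at the skeleton level, where $k$ is fixed. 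You would thus need an additional reduction from edge-level minimal objects to skeleton laces before the count $|\mathcal L_k|\sim \tfrac{1}{\sqrt 5}((3+\sqrt 5)/2)^k$ is available, and you are also missing the factor $2^m$ per lace from~\eqref{eq:dcfshell:hWL_lemma_abs_conv} that upgrades $((3+\sqrt5)/2)^k$ to $(3+\sqrt5)^{|W|}$ in~\eqref{eq:dcfshell:m_shell_summability}.

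More seriously, the step you flag as the principal difficulty is the essential content of the proposition and your proposed resolution fails. Taking absolute values inside the sum over laces gives a sum of probabilities of disjoint edge-configurations in $\rg(W\cup Y)$, which is at most $1$; after $\sum_m \lambda^m/m! \int_{\Lambda^m}$ this gives $\mathrm e^{\lambda|\Lambda|}$, far worse than $\pla(\nconn{\eta_\Lambda}{W}{\xi^W})$. Monotonicity of events cannot rescue this: the non-connection event is decreasing while $h_\lambda^\Lambda$ is an alternating series in $m$, so there is no monotone coupling to invoke, and the BK inequality addresses a different structural issue (disjoint occurrence) than the one here. What actually makes~\eqref{eq:dcfshell:main_bound} work in the paper is a closed-form probabilistic expression for $h_\lambda^\Lambda(\vec W;L)$ at each fixed skeleton lace $L$ (Lemma~\ref{lem:laces:shell_fct_fixed_lace}), obtained by classifying satellites into types, resumming exponentials via the Mecke equation, and recognizing the factor $\pla(\nconn{\eta_\Lambda}{W}{\xi^W})$ \emph{exactly} in that product. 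The remaining product over bonds of $L$, once summed over all laces, is then shown to be bounded by $1$ in absolute value through a delicate telescoping estimate with the probabilities $Q_{\alpha,\beta}$. Without that explicit formula and the telescoping argument, the bound does not follow.
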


Proposition~\ref{thm:dcfshell:bounds_on_shell_fct} consists of two parts, and it is~\eqref{eq:dcfshell:m_shell_summability} that guarantees the well-definedness of the shell function $h_\lambda^\Lambda$ of Definition~\ref{def:dcfshell:shell_dcf}. 

Proposition~\ref{thm:dcfshell:bounds_on_shell_fct} is easy to prove for $k=0$, and we mostly focus on $k\geq 1$. Throughout the remainder of this section, we fix a pivot decomposition $\vec W = (u_0, V_0, \ldots, V_k, u_{k+1})$ and recall that $\overline{V}_i = V_i \cup \{u_i,u_{i+1}\}$.

We now work towards a deeper understanding of the shell graphs $H$ summed over in~\eqref{eq:def:mshell_fct}. 
\begin{definition}[Skeletons] \label{def:laces:skeletons}
Let $W\subset \R^d$ and let $\vec W=(u_0,V_0,\ldots, u_{k+1})$ be a pivot decomposition of some core graph on $W$. Further let $Y\subset \R^d$ be finite and let $H$ be a shell graph associated to $\vec W$ with satellite vertices $S(H)=Y$. Then we define the \emph{skeleton} $\hat H$ of $H$ as the following graph: Its vertex set is $V(\hat H) = \{0,\ldots, k+1\}$. A bond $\alpha\beta$ is in $E(\hat H)$ if and only if $\vert\alpha-\beta\vert\geq 2$ and there exist $s\in \{u_\alpha\}\cup V_\alpha$, $t\in V_{\beta-1}\cup\{u_{\beta}\}$ such that	
	\begin{compactitem}
		\item  $st\in E(H)$, or
		\item $sy$, $yt\in E(H)$ for some $y\in S(H)$.
	\end{compactitem}
In the first case we call $\{s,t\}$ a \emph{direct stitch} and in the second case an \emph{indirect stitch}. We call an edge $\alpha\beta$ in $E(\hat H)$ a \emph{bond} to distinguish it from the edge of the underlying graph $H$.
\end{definition}

Thus, the graph $\hat H$ has no nearest-neighbor bonds and $\alpha\beta$ with $\vert\alpha-\beta\vert\geq 2$ is a bond in $E(\hat H)$ if and only if $\{u_\alpha\}\cup V_\alpha$ and $V_{\beta-1}\cup\{u_{\beta}\}$  are connected by a direct or indirect stitch. See Figure~\ref{fig:laces_equivalence_classes} for an illustration. We may now apply the standard vocabulary of lace expansion (for self-avoiding walks) to the graph $\hat H$~\cite[Section 3.3]{Sla06}. 
		
\begin{definition}[Laces] \label{def:laces} \hfill
\begin{compactitem}
\item The graph $\hat H$ with vertex set $\{0,\ldots, k+1\}$ is \emph{irreducible} if $0$ and $k+1$ are endpoints of edges in $E(\hat H)$ and for every $i\in [k]$ there exists $\alpha\beta\in E(\hat H)$ with $\alpha<i< \beta$.
\item The graph $\hat H$ is a \emph{lace} if it is irreducible and, for every bond $\alpha\beta\in E(\hat H)$, removal of the bond destroys the irreducibility.
\item We denote by $\mathcal L_k$ the set of all laces on $\{0, \ldots, k+1\}$. 
	\end{compactitem} 
\end{definition}
In the context of lace expansions, usually the word ``connected'' is used instead of ``irreducible'', but ``connected'' is clearly misleading in our setup; Brydges and Spencer originally called those graphs ``primitive''~\cite{BrySpe85}. We observe that the skeleton graphs $\hat H$ arising from our shell graphs $H$ are precisely the irreducible graphs (and so $G\oplus H$ being $2$-connected corresponds to the skeleton $\hat H$ being irreducible).

	We map irreducible graphs to laces by following a standard procedure~\cite[Section 3.3]{Sla06}, performed backwards. That is, we define bonds $\alpha'_j\beta'_j$ with $\beta'_1>\beta'_2>\cdots$ inductively as follows: we set 
	\[	\beta'_1 :=k+1,\quad \alpha'_1 := \min\{\alpha:\ \alpha\beta'_1\in E(\hat H)\}. \] 
and 
	\[	\alpha'_{j+1} = \min \{\alpha:\ \exists \beta > \alpha'_j\text{ with } \alpha \beta\in E(\hat H)\},	\quad
			\beta'_{j+1} = \max \{\beta:\ \alpha'_{j+1} \beta \in E(\hat H)\}.\]
The procedure terminates when $\alpha'_j = 0$. At the end, we let $\alpha_j\beta_j$ be a relabelling of the bonds $\alpha'_j \beta'_j$ from left to right. 

It is well-known that the algorithm maps irreducible graphs to laces, moreover the set of irreducible graphs that are mapped to a given lace $L$ can be characterized as follows.

\begin{figure}%[!ht]
	\centering
	\includegraphics[scale=0.9]{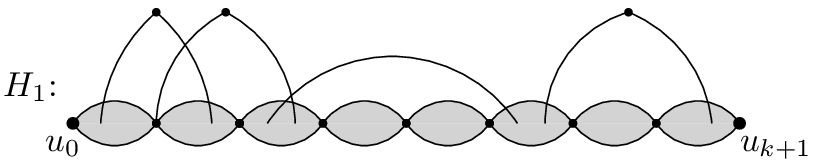} \quad
	\includegraphics[scale=0.9]{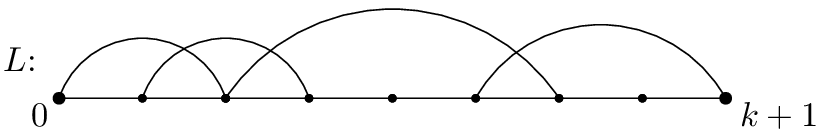} \\ \vspace{0.5cm}
	\includegraphics[scale=0.9]{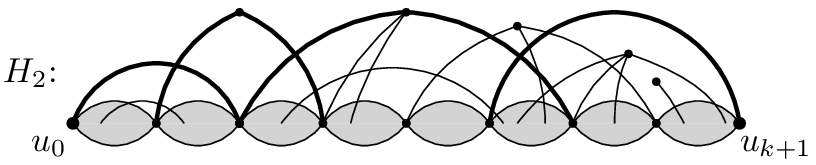} \quad
	\includegraphics[scale=0.9]{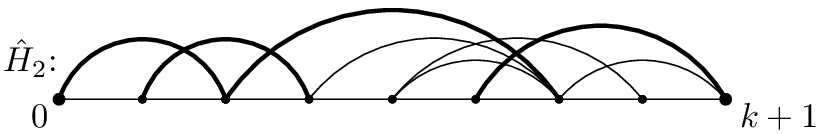}
	\caption{In the first line, we see a schematic shell graph $H_1$. Its skeleton $\hat H_1$ is already a lace, namely $L$. The skeleton of the graph $H_2$ in the second line is not a lace, but $H_2 \in \gen{L}$. The structure of $L$ is indicated in $H_2$ and in $\hat H_2$ by the thicker edges.}
	\label{fig:laces_equivalence_classes}
\end{figure}

\begin{definition}[Compatible bonds and the span of a lace] \ \label{def:laces:span}
\begin{enumerate}
\item Let $L$ be a lace with vertex set $\{0,\ldots,k+1\}$. A bond is \emph{compatible} with a lace $L$ if the algorithm described above maps the graph $(V(L), E(L)\cup \{\alpha\beta\})$ to the lace $L$.
\item Let $W\subset \R^d$ and let $\vec W=(u_0,V_0,\ldots, u_{k+1})$ be a pivot decomposition of some core graph on $W$. Further let $Y\subset \R^d$ be finite and let $H$ be a shell graph associated to $\vec W$ with $S(H)=Y$. Then we say that $H$ belongs to the span of the lace $L$, written $H\in \gen{L}$, if $E(L) \subseteq E(\hat H)$ and every bond $\alpha\beta \in E(\hat H) \setminus E(L)$ is compatible with $L$.
\end{enumerate}

\end{definition} 

In other words, $H$ is in the span of $L$ if the above algorithm maps $\hat H$ to $L$. See Figure~\ref{fig:laces_equivalence_classes}.

Given $\vec W$ and a lace $L$, we define
		\eqq{ h_\lambda^\Lambda(\vec W; L) := \sum_{m \geq 0} \frac{\lambda^m}{m!} \int_{\Lambda^m} \sum_{H \in \gen{L}: \mathcal S(H) = \vec y_{[m]}}
				 	\weight(H) \dd \vec y_{[m]}. \label{eq:def:laces:shell_fixed_lace}}
The series $h_\lambda^\Lambda(\vec W;L)$ converges absolutely for every fixed $\lambda$. This is shown as part of the proof of~\eqref{eq:dcfshell:m_shell_summability} in Proposition~\ref{thm:dcfshell:bounds_on_shell_fct}. Now,
	\[h_\lambda^\Lambda(\vec W) = \sum_{L \in \mathcal L_k} h_\lambda^\Lambda(\vec W; L). \]

The following characterization of compatible bonds will be useful. We recall that the bonds of a lace with $m$ bonds can be labelled as $\alpha_j\beta_j$ with 
\[
	0 = \alpha_1 < \alpha_2 < \beta_1\leq \alpha_3 < \beta_2\leq \cdots \leq \alpha_m < \beta_{m-1}< \beta_m = k+1.
\] 
see~\cite[Eqs.~(3.15) and~(3.16)]{Sla06}.

\begin{lemma} [Characterization of compatible bonds] \label{lem:compatible}
Let $L$ be a lace with vertex set $V(L) = \{0,\ldots, k+1\}$ and bonds $\alpha_j\beta_j$, $j=1,\ldots, m$, labeled from left to right (i.e., $\alpha_j<\alpha_{j+1}$). Then, a bond $\alpha\beta\notin E(L)$ with $\alpha <\beta-1$ is compatible with $L$ if and only if either
	\begin{compactitem}
	\item[(a)] $\alpha_i \leq \alpha <\beta \leq \beta_i$ for  $i\in[m]$ or
	\item[(b)] $\alpha_i < \alpha<\beta \leq \alpha_{i+2}$ for  $i\in[m-1]$ (where we set $\alpha_{m+1} :=k$). 
\end{compactitem}
\end{lemma}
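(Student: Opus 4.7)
The plan is to work directly from the algorithmic definition of compatibility: run the backward-selection procedure recalled just before Definition~\ref{def:laces:span} on the graph $(V(L), E(L)\cup\{\alpha\beta\})$ and determine for which positions of the pair $(\alpha,\beta)$ the output still equals $L$. Observe first that when run on $E(L)$ alone, the algorithm trivially returns the bonds of $L$ in reverse order: $(\alpha'_1,\beta'_1)=(\alpha_m,\beta_m)$, $(\alpha'_2,\beta'_2)=(\alpha_{m-1},\beta_{m-1})$, and so on. Thus compatibility amounts to checking that the presence of the extra bond $\alpha\beta$ does not alter any of the $\min$/$\max$ selections performed during these $m$ iterations. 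The whole argument is a careful case analysis driven by the lace overlap pattern
\[
 0=\alpha_1<\alpha_2<\beta_1\leq \alpha_3<\beta_2\leq \cdots \leq \alpha_m<\beta_{m-1}<\beta_m=k+1.
\]

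For sufficiency, I would proceed by induction on the iteration index~$j$, maintaining as hypothesis that the first $j-1$ pairs selected are exactly $(\alpha_{m-\ell+1},\beta_{m-\ell+1})$ for $\ell=1,\ldots,j-1$. Suppose $\alpha\beta$ satisfies~(a), say $\alpha_i\leq \alpha<\beta\leq\beta_i$. In the iterations that process the bonds $(\alpha_j,\beta_j)$ with $j>i$, the selection window for $\alpha'$ lies strictly to the right of $\alpha_{j-1}>\alpha$, so the extra bond is invisible; in the iteration that produces $(\alpha_i,\beta_i)$, the candidate $\beta$ for the $\max$-step is dominated by $\beta_i$, while in the subsequent iteration the candidate $\alpha$ for the $\min$-step is dominated by $\alpha_i$, so $\alpha\beta$ is never chosen. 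Case~(b), $\alpha_i<\alpha<\beta\leq \alpha_{i+2}$, is similar but requires using that at the stage where the algorithm produces $(\alpha_{i+1},\beta_{i+1})$, the right-endpoint candidate $\beta$ does not exceed $\beta_{i+1}$ (since $\beta\leq \alpha_{i+2}\leq \beta_{i+1}$ by the overlap relations), and at the following stage the left-endpoint candidate $\alpha$ is at least $\alpha_i+1>\alpha_i$, so again $L$ is reproduced.

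For necessity, I would argue contrapositively. If neither~(a) nor~(b) holds, I identify the unique index~$i$ such that $\alpha_i\leq\alpha<\alpha_{i+1}$ and analyze the possible positions of~$\beta$. The failure of~(a) forces $\beta>\beta_i$, while the failure of~(b) forces either $\beta>\alpha_{i+2}$ or $\alpha=\alpha_i$ (the latter coinciding with being inside~$L$'s structure in a way that is already excluded). In the first subcase, at the iteration selecting what should be $(\alpha_{i+1},\beta_{i+1})$, the bond $\alpha\beta$ would either supply a strictly smaller $\alpha'$ or a strictly larger $\beta'$, producing an output bond different from $\alpha_{i+1}\beta_{i+1}$. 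Symmetric reasoning handles the remaining subcase by exhibiting the first iteration at which the algorithm deviates from $L$. In each situation the algorithm's output therefore differs from $L$, contradicting compatibility.

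The main obstacle is simply bookkeeping: the $\min$/$\max$ asymmetry of the algorithm, together with the many boundary subcases ($i=1$ or $m$, $\beta=k+1$, $\alpha=0$, and the borderline equalities $\beta=\alpha_{i+2}$ or $\alpha=\alpha_i$), makes the case distinction delicate. A clean way to tame it is to describe each iteration's state by the window $(\alpha'_{j-1},\beta'_{j-1}]$ in which new bonds can affect the outcome and to verify in each geometric configuration that $\alpha\beta$ either falls outside this window or is dominated by a bond of $L$ already present.
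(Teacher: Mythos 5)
Your proposal follows essentially the same strategy as the paper's proof: run the backward min/max selection algorithm on $E(L)\cup\{\alpha\beta\}$, use the lace overlap pattern $0=\alpha_1<\alpha_2<\beta_1\leq\alpha_3<\beta_2\leq\cdots$ to track exactly at which step $\alpha\beta$ first enters the candidate pool, and check whether it survives the $\min$/$\max$ comparisons. The paper phrases the ``(a) or (b) $\Rightarrow$ compatible'' direction directly and the converse directly as well, whereas you argue the converse contrapositively; this is a trivial logical rearrangement, not a different method, so I won't elaborate further.
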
 	

\begin{proof} 
Let $\alpha\beta \notin E(L)$ be compatible with $L$; that is, the algorithm below Definition~\ref{def:laces} maps $E(L) \cup \{\alpha\beta\}$ to $E(L)$, which in turn means that $\alpha\beta$ is not selected to be part of the output lace. We show that then, either (a) or (b) is satisfied. Assume the algorithm has already constructed the partial lace up to some $j < m$, producing the bonds $(\alpha_i', \beta_i')_{i=1}^j$ (note that they are in reverse order and make up the last $j$ bonds of the lace). Assume moreover that $\alpha_j'< \beta \leq \alpha_{j-1}'$, that is, $\alpha\beta$ is a potential candidate to be chosen as the next bond of the lace. Since it is \emph{not} chosen, there is $\alpha_{j+1}'\beta_{j+1}'$ with $\beta_{j+1}' \in (\alpha_j',\alpha_{j-1}']$ so that either
\begin{enumerate}%itemize
\item[$\bullet$] $\alpha_{j+1}'<\alpha$, or 
\item[$\bullet$] $\alpha_{j+1}' = \alpha$ and $\beta_{j+1}' > \beta$.
\end{enumerate}
Both the second case as well as the first case under the additional assumption $\beta_{j+1}' \geq \beta$ imply that $\alpha\beta$ satisfies (a). Let us thus focus on the case where $\alpha_{j+1}'<\alpha$ and $\beta_{j+1}' < \beta$. Remembering the stage of the algorithm, we have $\beta \leq \alpha_{j-1}'$, implying (b).

Let now $\alpha\beta \notin E(L)$ be a bond that satisfies (a) or (b). We claim that $\alpha\beta$ is compatible with $L$. Let $i$ be the index such that $\alpha_i \beta_i$ satisfies (a) or (b). Note that in the execution of the algorithm below Definition~\ref{def:laces}, $\alpha\beta$ does not appear as a candidate to be added to the constructed lace up until the point where $\alpha_m\beta_m, \alpha_{m-1}\beta_{m-1}, \ldots, \alpha_{i+1}\beta_{i+1}$ are already added to the partial lace. At this stage of the algorithm, if $\alpha\beta$ satisfies (b), then it is not picked because the bond $\alpha_i\beta_i$ has a smaller value of its left endpoint (i.e., $\alpha_i<\alpha$). If $\alpha\beta$ satisfies (a) however, then either also $\alpha_i<\alpha$, or $\alpha_i=\alpha$, but $\alpha_i\beta_i$ has its right endpoint further to the right (i.e., $\beta<\beta_i$, since the two bonds cannot be equal), and so again, $\alpha_i\beta_i$ is picked by the algorithm.
\end{proof}

To prove the second result of Proposition~\ref{thm:dcfshell:bounds_on_shell_fct}, we need the following counting lemma, which may be of independent interest.

\begin{lemma}[On the number of laces] \label{lem:laces:size_L}
Let $f_i$ be the $i$-th Fibonacci number with $f_1 =0, f_2=1$. Then
	\[ |\mathcal L_k| = 1 + \sum_{i=1}^{k} \binom{k}{i} f_i \qquad \text{and, as } k \to\infty, \qquad |\mathcal L_k| \sim \frac{1}{\sqrt{5}} \Big( \frac{3+\sqrt{5}}{2} \Big)^k.\]
\end{lemma}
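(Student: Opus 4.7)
The plan is to parameterize each lace by two pieces of data: (i) its \emph{support} in $\{1,\dots,k\}$, defined as $T_L := \{\alpha_j : 2 \leq j \leq m\} \cup \{\beta_j : 1 \leq j \leq m-1\}$, which collects all bond endpoints other than the fixed $\alpha_1 = 0$ and $\beta_m = k+1$; and (ii) an assignment of types to each element of $T_L$ recording its role. Note that $T_L = \varnothing$ forces $m = 1$ and yields the unique single-bond lace $(0, k+1)$, contributing the leading $1$ in the formula. For $i \geq 1$, the goal is to show that the number of laces with $|T_L| = i$ equals $\binom{k}{i} f_i$: $\binom{k}{i}$ choices of the support times $f_i$ admissible internal configurations per support.

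Fix a support $T = \{t_1 < \dots < t_i\}$. The lace inequalities $\alpha_{j+1} < \beta_j \leq \alpha_{j+2}$ (strict on the left, non-strict on the right) imply that a single support vertex can simultaneously play the role of an $\alpha$ and a $\beta$ only as $\beta_{j-2} = \alpha_j$ (the tight second inequality); any other coincidence would violate a strict inequality. Accordingly I assign each $t_\ell$ a type $A$ (only an $\alpha$), $B$ (only a $\beta$), or $BA$ (the merged case). The pair (support, type sequence) determines the lace uniquely, since reading the sequence left to right and incrementing the running pointers into the $\alpha$-list and $\beta$-list recovers every bond endpoint. Conversely, letting $d$ denote the running difference between the numbers of placed $\alpha$'s and placed $\beta$'s, the inequalities force $d \in \{0,1\}$ throughout, with allowed transitions $A: 0 \to 1$, $B: 1 \to 0$, and $BA: 1 \to 1$, starting and ending at $d = 0$; admissible type sequences of length $i$ are thus exactly the walks on $\{0,1\}$ with these transitions.

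Such a walk decomposes uniquely as a concatenation of $\ell \geq 1$ excursions padded by $BA$'s: $A(BA)^{r_1}B \, A(BA)^{r_2}B \cdots A(BA)^{r_\ell}B$ with $r_j \geq 0$ and $2\ell + \sum_j r_j = i$. Stars-and-bars gives $\sum_{\ell=1}^{\lfloor i/2 \rfloor} \binom{i-\ell-1}{\ell-1}$ admissible sequences, which is the classical closed-form expression for the shifted Fibonacci numbers $f_i = F_{i-1}$. Summing $\binom{k}{i}$ times this count over $i$ yields the exact formula. For the asymptotic, I would apply Binet's formula $f_n = (\phi^{n-1} - \psi^{n-1})/\sqrt{5}$ with $\phi = (1+\sqrt{5})/2$, $\psi = (1-\sqrt{5})/2$, then use the binomial theorem together with $1+\phi = \phi^2$ and $1+\psi = \psi^2$ to rewrite $\sum_{i=1}^k \binom{k}{i} f_i = \tfrac{1}{\sqrt{5}}\bigl[\phi^{-1}(\phi^{2k}-1) - \psi^{-1}(\psi^{2k}-1)\bigr]$; since $|\psi| < 1$, the surviving $\phi^{2k}$-term gives the claimed growth rate $((3+\sqrt{5})/2)^k$. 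The main bookkeeping concern is verifying the bijection between admissible type sequences and laces with prescribed support; the counting and asymptotic analysis then reduce to standard Fibonacci identities.
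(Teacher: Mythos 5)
Your argument is correct, and it takes a genuinely different route from the paper's. The paper fixes a subset $T\subset\{1,\dots,k\}$ of size $i$, lets $A_i$ be the set of laces on $\{0,\dots,i+1\}$ in which every interior vertex is used, and then derives a Fibonacci recursion $|A_i|=|A_{i-1}|+|A_{i-2}|$ directly: the first bond must end at vertex $2$, and the remaining lace lives either on $\{1,\dots,i+1\}$ or on $\{1,3,4,\dots,i+1\}$ depending on whether the next bond is incident to $2$. You instead parameterize each lace by its support together with a type sequence in $\{A,B,BA\}$, prove a bijection to walks on $\{0,1\}$ starting and ending at $0$, decompose each walk into excursions $A(BA)^{r_j}B$, and apply stars-and-bars to land on the diagonal binomial identity $\sum_{\ell}\binom{i-\ell-1}{\ell-1}=f_i$. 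Both routes hinge on the lace ordering $0=\alpha_1<\alpha_2<\beta_1\leq\alpha_3<\beta_2\leq\cdots\leq\alpha_m<\beta_{m-1}<\beta_m=k+1$; the paper exploits it through a two-case recursion, while you exploit it through a constrained-walk bijection, which has the advantage of making the role of each interior vertex (pure left endpoint, pure right endpoint, or merged) explicit and the bijection to the counting problem fully transparent. One small point of precision, affecting the paper and your write-up identically: with the convention $f_1=0$, $f_2=1$ the Binet formula gives $f_n\sim\Phi^{n-1}/\sqrt5$, not $\Phi^n/\sqrt5$, so your own intermediate identity $\sum_{i=1}^k\binom ki f_i=\tfrac1{\sqrt5}\bigl[\phi^{-1}(\phi^{2k}-1)-\psi^{-1}(\psi^{2k}-1)\bigr]$ shows the true leading constant is $\tfrac1{\phi\sqrt5}=\tfrac{\sqrt5-1}{2\sqrt5}$ rather than $\tfrac1{\sqrt5}$; the exponential growth rate $\bigl(\tfrac{3+\sqrt5}2\bigr)^k$ and the upper bound used later in the paper are unaffected, but the symbol $\sim$ in the lemma is taken literally at face value.
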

\begin{figure}%[!ht]
	\centering
 \includegraphics{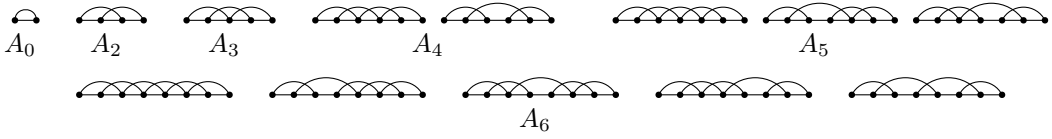}
	\caption{Schematic illustration of $A_i$ from the proof of Lemma~\ref{lem:laces:size_L} for $i=0,2,3,4,5,6$.}
	\label{fig:laces_count}
\end{figure}

\begin{proof}
We first choose $i$ vertices in $\{1, \ldots, k\}$ and then count the laces that use exactly those vertices. To this end, let $A_i$ be the set of laces $L$ with $V(L)=\{0, \ldots, i+1\}$ so that every vertex is the endpoint of at least one stitch. We claim that $|A_i|=f_i$ for $i \geq 1$. Clearly, $|A_0|=1, |A_1|=0, |A_2|=1$. See Figure~\ref{fig:laces_count} for an illustration.

Let $i \geq 3$. We now establish the Fibonacci recursion. First, note that the bond incident to $0$ (the ``first'' bond) must always have $2$ as the second endpoint. Now, depending on whether or not the third bond is incident to $2$, the remaining lace lives on $\{1, 2, \ldots, i+1\}$ or on $\{1, 3, 4, \ldots, i+1\}$, and so $|A_i| = |A_{i-1}| + |A_{i-2}|$.

The asymptotic behavior follows from the fact that $f_n \sim \Phi^n /\sqrt{5}$, where $\Phi = \tfrac 12 (1+\sqrt{5})$ is the golden ratio.
\end{proof}

We can now work towards finding an explicit expression for $h_\lambda^\Lambda(\vec W;L)$ for a fixed lace. The next lemma is in the spirit of Observation~\ref{obs:pm_graphs_probabilities} and will help us find probabilistic factors in the shell function.

\begin{lemma}[Bipartite graphs and probabilities] \label{lem:laces:weight_sums_probabilities}
Let $Y, A, B,C \subset \Rd$ be finite, disjoint sets.
\begin{enumerate}
\item Then \[ \sum_{\substack{H \in \mathcal G(A\cup C,Y): \\ \forall y \in Y: y \sim A}} \textup{\weight} (H) 
				= \prod_{y \in Y} \big( -\p(A \sim y \nsim C) \big) 
				= (-1)^{|Y|} \p(\forall y \in Y: A \sim y\nsim C).\]
\item Moreover,
	\[ \sum_{\substack{H \in \mathcal G(A \cup B \cup C,Y): \\ \forall y \in Y: A \sim y \sim B}} \textup{\weight} (H)
				= \prod_{y\in Y}\p\big(A \sim y \sim B, y \nsim C \big). \]
\item Lastly,
	\[ \sum_{\substack{H \in \mathcal G(A,Y): \\ E(H) \neq \varnothing}} \textup{\weight} (H) = - \p(A \sim Y). \]
\end{enumerate}
\end{lemma}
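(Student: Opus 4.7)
The central observation is that for a bipartite graph $H\in\mathcal G(X,Y)$ (with, say, $X=A\cup C$, $X=A\cup B\cup C$, or $X=A$), the edge set decomposes as a disjoint union $E(H)=\bigcup_{y\in Y} E_y$ where $E_y$ denotes the edges incident to $y$. Since the potential edges incident to different $y$'s are disjoint, both the weight $\weight(H)=(-1)^{|E(H)|}\prod_{e\in E(H)}\connf(e)$ and any constraint of the form ``$\forall y\in Y: y$ satisfies some local property at $y$'' factorize over $y\in Y$. Therefore, in each of the three statements, the total sum equals a product over $y\in Y$ of a local sum at $y$, which I would evaluate via the elementary identity
\al{ \sum_{T\subseteq X}(-1)^{|T|}\prod_{x\in T}\connf(x-y)=\prod_{x\in X}\bigl(1-\connf(x-y)\bigr). }

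For part (1), I would fix $y\in Y$ and sum over the nonempty pieces $T\cap A$ while summing freely over $T\cap C$. The identity above yields
\al{ \sum_{\substack{T\subseteq A\cup C\\ T\cap A\neq\varnothing}}(-1)^{|T|}\prod_{x\in T}\connf(x-y)=\Bigl(\prod_{x\in A}(1-\connf(x-y))-1\Bigr)\prod_{x\in C}(1-\connf(x-y)). }
Recognizing $1-\prod_{x\in A}(1-\connf(x-y))=\p(A\sim y)$ and $\prod_{x\in C}(1-\connf(x-y))=\p(y\nsim C)$, and noting that these two events depend on disjoint edge collections and are hence independent under $\p$, the per-$y$ factor equals $-\p(A\sim y\nsim C)$. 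Taking the product over $y\in Y$ and using independence across different $y$'s (again, disjoint edge sets) gives both equalities in part (1). Part (3) is the degenerate special case $A=A$, $C=\varnothing$, $Y$ replaced by a single ``aggregated'' constraint: summing over \emph{all} nonempty edge sets gives $\prod_{(a,y)\in A\times Y}(1-\connf(a-y))-1=\p(A\nsim Y)-1=-\p(A\sim Y)$.

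For part (2), I would fix $y\in Y$ and use the same factorization trick, now with two independent constraints $T\cap A\neq\varnothing$ and $T\cap B\neq\varnothing$ and the unconstrained factor on $C$, giving
\al{ \sum_{\substack{T\subseteq A\cup B\cup C\\ T\cap A,T\cap B\neq\varnothing}}(-1)^{|T|}\prod_{x\in T}\connf(x-y)=\bigl(-\p(A\sim y)\bigr)\bigl(-\p(B\sim y)\bigr)\p(y\nsim C). }
The two minus signs cancel and the three events again depend on pairwise disjoint edge sets, so their probabilities multiply to $\p(A\sim y\sim B,\,y\nsim C)$. Taking the product over $y\in Y$ gives the claim.

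There is no real obstacle: the argument is a clean application of ``weight $=$ signed generating function of subsets'' combined with the inclusion--exclusion identity and the independence of disjoint edge collections. The only point requiring a moment of care is keeping track of the disjointness of the underlying edge sets so that the products of probabilities are legitimately rewritten as probabilities of intersections; this is automatic because $A,B,C,Y$ are disjoint and edges in $\mathcal G(\,\cdot\,,Y)$ are specified by pairs $(x,y)$ with $y\in Y$.
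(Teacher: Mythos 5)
Your proof is correct and follows essentially the same approach as the paper: factorize the sum over bipartite graphs into per-$y$ star-graph contributions (since the potential edge sets incident to distinct $y\in Y$ are disjoint, which gives factorization of both the weight and the per-$y$ constraints in parts (1) and (2)), evaluate each star-graph sum via the subset-sum identity $\sum_{T\subseteq X}(-1)^{|T|}\prod_{x\in T}\connf(x-y)=\prod_{x\in X}(1-\connf(x-y))$, and reinterpret the resulting products as probabilities using independence of the disjoint edge collections. Your treatment of part (3) is also correct, and you are slightly more careful than the paper in flagging that the constraint $E(H)\neq\varnothing$ is global rather than per-$y$, so it does not factorize and must be computed directly as a whole.
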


\begin{proof}
The first part of the statement is rather straightforward. If $Y=\{y\}$, then $\mathcal G(A\cup C, \{y\})$ is the set of star graphs (with center $y$). Observe first that
	\[ \sum_{H \in\mathcal G(A \cup C, \{y\}): y \sim A} \weight(H) = \Big( \sum_{H' \in\mathcal G(A, \{y\}): y \sim A} \weight(H') \Big)
				\Big( \sum_{H'' \in\mathcal G(C, \{y\})} \weight(H'') \Big).\]
The first sum is over all star graphs in $\mathcal G(A,\{y\})$ except the empty one, the second is over all star graphs in $\mathcal G(C,\{y\})$, and so
	\[ \sum_{H \in\mathcal G(A \cup C, \{y\}): y \sim A} \weight(H) = - \big(1- \prod_{x \in A}(1- \connf(y,x))\big) \prod_{x \in C}(1- \connf(y,x)) = -\p(A \sim y \nsim C).\]
It is an easy induction to prove that for general $Y$, the sum factorizes into a product over sums over star graphs. For the second statement, assume again that $Y = \{y\}$ and observe that
	\[\sum_{\substack{H \in\mathcal G(A \cup B \cup C, \{y\}): \\ A \sim y \sim B}} \weight(H) =
		\Big( \sum_{H \in\mathcal G(A \cup C, \{y\}): y \sim A} \weight(H) \Big) \Big( \sum_{H \in\mathcal G(B, \{y\}): y \sim B} \weight(H) \Big) = \p(A \sim y \sim B, y \nsim C),\]
where the last identity is due to independence. The statement easily extends to general $Y$ (again, the sum factorizes).

For the third statement, note that we sum over every graph except the empty one.
\end{proof}

Since the explicit expression for $h_\lambda^\Lambda(\vec W;L)$ is a lengthy product of probabilities, we first introduce some notation to represent the factors of this product compactly. Let $A,B$ be two subsets of $[k+1]_0$. We define the set of all possible direct stitches in $H$ leading to bonds $\alpha \beta\in E(\hat H)$ with $\alpha\in A$, $\beta \in B$, as
\[
	 \Upsilon(A,B) := \Big\{ xy \subset W:\ \exists \alpha \in A, \beta\in B \text{ with } \alpha<\beta-1 \text{ and }  x\in \{u_\alpha\}\cup V_\alpha,\ y \in V_{\beta-1}\cup \{u_\beta\} \Big\}
\]
and we write $\Upsilon(A) = \Upsilon(A,A)$. We define
	\[ q_{\alpha,\beta} := \prod_{xy \in \Upsilon([\alpha,\beta)) \cup \Upsilon((\alpha,\beta])}
				(1-\connf(x-y))\]
and, for $\alpha_1<\alpha_2<\alpha_3$,
	\[ q_{\alpha_1, \alpha_2, \alpha_3} := \prod_{xy \in \Upsilon([\alpha_1+1,\alpha_2),[\alpha_2, \alpha_3))} (1-\connf(x-y)).\]
Note that these products encode the sum over all $\weight$-weighted graphs on the set of edges multiplied over. 

To lighten notation, for $0 \leq\alpha\leq \beta\leq k+1$, set
\begin{gather*}
	[u_\alpha]\!] :=\{u_\alpha\}\cup V_\alpha,\qquad [\![u_\beta] := V_{\beta-1} \cup \{u_\beta\},\\
	[u_\alpha,u_\beta]:= \{u_\alpha\} \cup V_\alpha \cup \cdots \cup V_{\beta-1}\cup \{u_\beta\}.
\end{gather*}
We extend this notation further: For $a,b \in \{u_0, \ldots, u_{k+1}\}$, let $(a,b) := [a,b] \setminus \{a, b\}$, let $[a,b) := [a,b] \setminus \{b\}$, and $(a,b]:= [a,b]\setminus\{a\}$. We set $(\!(a,b)\!) := [a,b] \setminus ([a]\!] \cup [\![b]) $ and define sets $(\!(a,b]$ etc.~accordingly. 

Moreover, define
	\[ Q_{\alpha,\beta} = \pla\big(\nexists y \in \eta_\Lambda \text{ s.t. } [u_\alpha]\!] \sim y \sim [\![u_\beta], y \nsim [u_{\alpha+1}, u_{\beta-1}] \big) \]
for $\beta \geq \alpha+2$. We extend this notation by writing
	\[ Q_{A,B} = \prod_{\alpha  \in A} \prod_{\beta \in B} Q_{\alpha,\beta}\]
for sets of pivotal points $A,B$; we abbreviate $Q_{a,[b,c]} = Q_{\{a\},[b,c]}$. 

We are now ready to state Lemma~\ref{lem:laces:shell_fct_fixed_lace}, for which we recall the definition of $h_\lambda^\Lambda(\vec W;L)$ in~\eqref{eq:def:laces:shell_fixed_lace}.

\begin{lemma}[The shell function of a lace] \label{lem:laces:shell_fct_fixed_lace}
Let $\lambda \geq 0$ and let $\Lambda \subset \Rd$ be bounded. Let $W\subset \Rd$ be a core vertex set with pivot decomposition $\vec W=(u_0, V_0, \ldots, u_{k+1})$. Let $L$ be a lace with vertex set $[k+1]_0$ and $m$ bonds $\alpha_i\beta_i$, $i \in [m]$. Then, setting $\alpha_{m+1}=k$,
	\algn{  h_\lambda^\Lambda(\vec W;L) &= \pla(\eta_\Lambda \centernot\longleftrightarrow W) \prod_{i=1}^{m} q_{\alpha_i,\beta_i}  
				\Big[1- Q_{\alpha_i,\beta_i} - \p([u_{\alpha_i}]\!] \sim [\![ u_{\beta_i}]) \Big] \notag\\
		& \qquad\qquad \times \prod_{i=1}^{m-1} q_{\alpha_i,\alpha_{i+1},\alpha_{i+2}} Q_{\alpha_i,(\beta_i,k+1]} Q_{(\alpha_i,\alpha_{i+1}),(\alpha_{i+2},k+1]}.
			\label{eq:dcfshell:hWL_lemma_identity}}
Moreover,
	 \eqq{ \sum_{n \geq 0} \frac{\lambda^n}{n!} \int_{\Lambda^n} \Big| \sum_{H \in \gen{L}: \mathcal S(H) = \vec y_{[n]}} \textup{\weight}(H) \Big| \dd \vec y_{[n]} 
	 		\leq 2^m \textup{\e}^{3\lambda|W|}	\label{eq:dcfshell:hWL_lemma_abs_conv}}
\end{lemma}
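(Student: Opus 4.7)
My plan is to directly evaluate the sum over shell graphs $H\in\gen{L}$ by splitting the edges of $H$ into direct stitches (edges with both endpoints in $W$) and indirect stitches (pairs of edges from a satellite vertex $y\in Y$ to two vertices of $W$), then handling each group in turn. By Lemma~\ref{lem:compatible}, every bond $\alpha\beta$ with $|\alpha-\beta|\geq 2$ is either a lace bond, a type-(a) compatible bond nested inside some $[\alpha_i,\beta_i]$, a type-(b) compatible bond contained in $(\alpha_i,\alpha_{i+2}]$ for some $i$, or incompatible. Membership in $\gen{L}$ is equivalent to requiring that all bonds of $\hat H$ be non-incompatible and that every lace bond be realized by at least one stitch.

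First I would sum over those direct stitches that create compatible bonds. Their presence or absence is unconstrained, so the summation collapses via $\sum_D\prod_{e\in D}(-\connf(e))=\prod_e(1-\connf(e))$ and produces the $q_{\alpha_i,\beta_i}$ factors for type-(a) regions and the $q_{\alpha_i,\alpha_{i+1},\alpha_{i+2}}$ factors for type-(b) regions. Direct stitches that would create incompatible bonds must be absent, contributing trivial factors of $1$.

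Next I would treat the ``at least one stitch realizes lace bond $\alpha_i\beta_i$'' constraints via inclusion--exclusion, applied to the indicator of each lace bond being realized. Applying the Mecke/Campbell formula to the Poisson integral $\sum_n\frac{\lambda^n}{n!}\int_{\Lambda^n}\cdots\dd\vec y$ turns the satellite summation into an exponential of a per-satellite integrand. These exponentials reassemble into Poisson-void probabilities of the form $Q_{\alpha,\beta}$ and $Q_{A,B}$, together with the background factor $\pla(\eta_\Lambda\centernot\longleftrightarrow W)$ that absorbs all remaining satellites whose allowed patterns do not participate in any lace-bond realization. For each $i$, the inclusion--exclusion yields the bracket $[1-Q_{\alpha_i,\beta_i}-\p([u_{\alpha_i}]\!]\sim[\![u_{\beta_i}])]$, and Lemma~\ref{lem:compatible} ensures that the realizing-stitch sets for distinct lace bonds are essentially disjoint, producing the factorization~\eqref{eq:dcfshell:hWL_lemma_identity}.

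For the absolute-convergence bound~\eqref{eq:dcfshell:hWL_lemma_abs_conv}, I would rerun the computation with absolute values. Since $(1-\connf)$, $Q$, $\p$, and $\pla(\eta_\Lambda\centernot\longleftrightarrow W)$ all lie in $[0,1]$, each bracket is bounded in absolute value by $2$, yielding the prefactor $2^m$; the $\e^{3\lambda|W|}$ factor follows by overestimating the satellite integral via $\prod_{s\in W}(1+\connf(y-s))\leq\e^{\sum_{s\in W}\connf(y-s)}$ together with the normalization $\int\connf=1$, after which Fubini and the exponential identity $\sum_n\frac{\lambda^n}{n!}I^n=\e^{\lambda I}$ give the bound. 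The hard part will be the inclusion--exclusion/satellite step: matching signs with Poisson-void probabilities and establishing the factorization over lace bonds requires careful bookkeeping of which direct and indirect stitches contribute to each bond, especially at the shared indices $\alpha_{i+1}\in(\alpha_i,\beta_i)$ arising from overlapping consecutive lace bonds.
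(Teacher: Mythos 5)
Your proposal sketches a direct (non-inductive) evaluation of the sum over shell graphs, classifying bonds via Lemma~\ref{lem:compatible} and then using Mecke/Campbell to turn the satellite integral into exponentials. The paper instead argues by induction on the number of lace bonds $m$: in the base case $m=1$ it introduces three regions $A=[u_0]\!]$, $B=[u_1,u_k]$, $C=[\![u_{k+1}]$, partitions the satellites into the classes $\mathcal S_1,\mathcal S_2,\mathcal S_3$ according to their adjacency pattern, applies Lemma~\ref{lem:laces:weight_sums_probabilities}, and reads off the exponentials; in the inductive step it peels off the first lace bond, applies the hypothesis to the shortened lace $L'$ on the reduced vertex set $W'$, and analyzes the contribution of the first interval $[s_1,s_3]$ via a four-class partition $\mathcal S_1,\ldots,\mathcal S_4$. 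This induction is the mechanism that handles the overlap $\alpha_{i+1}\in(\alpha_i,\beta_i)$ of consecutive lace bonds, because at each step only one new lace bond and one overlap with the next has to be resolved.

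Your direct approach leaves the central difficulty unresolved, and in one place mis-states why the factorization should hold. You write that Lemma~\ref{lem:compatible} ``ensures that the realizing-stitch sets for distinct lace bonds are essentially disjoint,'' but Lemma~\ref{lem:compatible} only characterizes compatible bonds; it says nothing about disjointness of sets of satellites. In fact that disjointness is not literally true: a single satellite $y$ can simultaneously realize indirect stitches for several bonds (e.g.\ $y\sim [u_{\alpha_1}]\!]$ and $y\sim[\![u_{\beta_2}]$). What makes the factorization work is a careful partition of satellites into disjoint classes according to the \emph{full} neighborhood pattern of $y$ in $W$ (the paper's $\mathcal S_1,\ldots,\mathcal S_4$), not disjointness of realization. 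Concretely, without working this out you cannot explain how the cross-region void probabilities $Q_{\alpha_i,(\beta_i,k+1]}$ and $Q_{(\alpha_i,\alpha_{i+1}),(\alpha_{i+2},k+1]}$ arise; these encode absence of indirect stitches to incompatible regions, not merely ``background satellites,'' and they emerge from a non-trivial rearrangement in the exponent after combining the inductive factor $\pla(\eta\centernot\longleftrightarrow [s_2,u_{k+1}])$ with the new factors. The ``trivial factors of $1$'' for incompatible direct stitches are correct, but the analogous indirect case is precisely where the $Q_{A,B}$-factors live.

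For the absolute-convergence bound~\eqref{eq:dcfshell:hWL_lemma_abs_conv}, your sketch also does not produce the stated constant. Bounding each bracket by $2$ gives the $2^m$, but the factor $\e^{3\lambda|W|}$ does not follow from $\prod_{s\in W}(1+\connf(y-s))\leq\e^{\sum_s\connf(y-s)}$ together with $\int\connf=1$, which would only give $\e^{\lambda|W|}$. In the paper the extra factor of $3$ in the exponent tracks the separate exponential contributions from the classes $\mathcal S_1,\ldots,\mathcal S_4$ at each inductive step (each bounded by $\e^{\lambda|A\cup B|}$ and contributing at most thrice), and this accounting is only accessible once the partition is done correctly. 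As it stands, your plan reaches the right destination but the key step—the disjoint-class partition of satellites and the resulting rearrangement of exponents—is assumed rather than proven, and it is exactly that step that makes the statement non-trivial.
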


\begin{figure}%[!ht]
	\centering
 \includegraphics[scale=0.9]{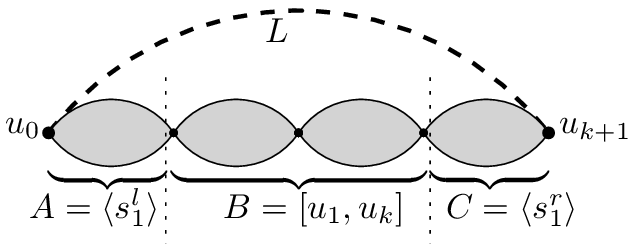} \quad
 \includegraphics[scale=0.9]{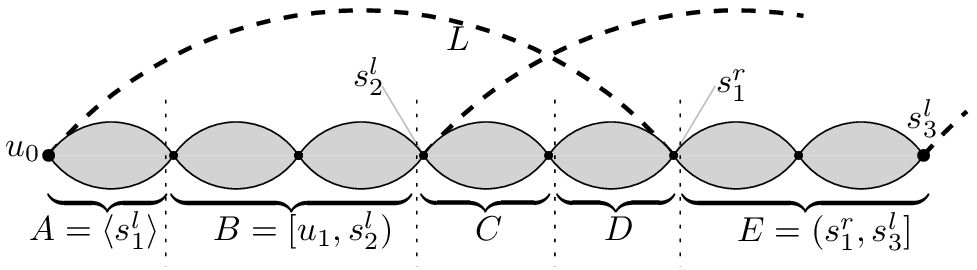}
	\caption{Illustration of the induction proof of Lemma~\ref{lem:laces:shell_fct_fixed_lace}. The lace $L$ is sketched by the dashed lines. The left picture shows the base case $m=1$, where $u_0=s_1$ and $u_{k+1}=t_1$. To the right, the first three stitches of $L$ are (partially) sketched. The sets $C,D$ are defined as $C = [s_2,t_1)\!)$ and $D=[\![t_1]$.}
	\label{fig:h_bound_base}
\end{figure}

\begin{proof} We abbreviate $\eta=\eta_\Lambda, h = h_\lambda^\Lambda$, and prove the statement by induction on $m$.

\underline{Base case.} Let $m=1$, which means that $\alpha_1=0$ and $\beta_1=k+1$. Set $A=[u_0]\!], B=[u_1, u_k]$, and $C=[\![u_{k+1}]$. See Figure~\ref{fig:h_bound_base} for an illustration of $A,B,C$.

Note first that the edge set $\Upsilon([k+1]_0) \setminus E(A,C)$, that is, the possible direct stitches between points of $W$ except the direct ones between $A$ and $C$, do not determine membership of $H$ in $\gen{L}$. Any such edge $xy$ may or may not be present, resulting in a factor $(1-\connf(x-y))$ that can be extracted. In total, this produces the factor $q_{0,k+1}$ and we can restrict to considering graphs $H\in\gen{L}$ that do not possess any such edge. The remaining graphs $H$ only have edges that are incident to $A\cup C\cup \mathcal S(H)$.

We split this set of remaining graphs $H$ into those that have a direct stitch between $A$ and $C$ and those that do not. Among the former, the sum over graphs factorizes into graphs $H' \in \mathcal G(A,C)$ (the direct stitches) and graphs $H'' \in \mathcal G(W, \mathcal S(H))$. With Lemma~\ref{lem:laces:weight_sums_probabilities},
	\algn{h(\vec W;L) &= q_{0,k+1} \bigg[\sum_{n \geq 0} \frac{\lambda^n}{n!} \int_{\Lambda^n}\Big(\sum_{H' \in \mathcal G(A,C): E(H') \neq \varnothing} \weight(H') \Big) 
					\Big(\sum_{\substack{H'' \in \mathcal G(W,\vec y_{[n]}): \\ y_i \sim W \forall i \in[n]}} \weight(H'') \Big) \dd \vec y_{[n]} \notag\\
		& \hspace{2cm} + \sum_{n \geq 0} \frac{\lambda^n}{n!}	\int_{\Lambda^n}
				\sum_{\substack{H \in \mathcal G(W,\vec y_{[n]}): \\ \deg(y_i) \geq 1 \forall i \in[n], \\ \exists i: A \sim y_i \sim C}} \weight(H) \dd \vec y_{[n]} \bigg] \notag \\
		&= q_{0,k+1} \bigg[-\p(A\sim C) \pla(\eta \centernot\longleftrightarrow W) + \sum_{n \geq 0} \frac{\lambda^n}{n!}	\int_{\Lambda^n}
				\sum_{\substack{H \in \mathcal G(W,\vec y_{[n]}): \\ \deg(y_i) \geq 1 \forall i \in[n], \\ \exists i: A \sim y_i \sim C}} \weight(H) \dd \vec y_{[n]} \bigg]
							 \label{eq:dcfshell:hWL_base_split}.}
For now the power series are treated as formal power series, convergence is proven later.
To treat the sum in~\eqref{eq:dcfshell:hWL_base_split}, we define
	\[ \mathcal S_1 := \{y: A \sim y \sim C\}, \quad \mathcal S_2:= \{y: C \nsim y \sim (A \cup B)\}, \quad \text{and } \mathcal S_3:= \{ y: C \sim y \nsim A \}. \]
With these definitions, we can partition $\vec y = \mathcal S(H)= \mathcal S_1 \cup \mathcal S_2 \cup \mathcal S_3$. Moreover, we know that $\mathcal S_1 \neq \varnothing$. Re-summing and then applying Lemma~\ref{lem:laces:weight_sums_probabilities}, the sum over $n$ in~\eqref{eq:dcfshell:hWL_base_split} becomes
	\algn{& \sum_{n_1,n_2,n_3 \geq 0} \frac{\lambda^{n_1+n_2+n_3}}{n_1!n_2!n_3!} \int_{\Lambda^{n_1+n_2+n_3}}
			\sum_{\substack{H \in \gen{L}: \\ \mathcal S_i (H)=\vec y_ {i,[n_i]} \forall i \in [3]}} \weight(H)
			\dd \big(\vec y_{1,[n_1]}, \vec y_{2,[n_2]}, \vec y_{3,[n_3]} \big) \notag\\
		=&  \bigg( \sum_{n \geq 1} \frac{\lambda^n}{n!} \int_{\Lambda^n} \sum_{\substack{H \in \mathcal G(A \cup B \cup C,\vec y_{[n]}): \\ \forall i \in [n]: A \sim y_i \sim C }}
			 	\weight(H) \dd \vec y_{[n]} \bigg)
			 \bigg( \sum_{n \geq 0} \frac{\lambda^n}{n!} \int_{\Lambda^n} \sum_{\substack{H \in \mathcal G(A \cup B,\vec y_{[n]}): \\ \forall i \in [n]: y_i \sim (A \cup B) }}
			 	 \weight(H) \dd \vec y_{[n]} \bigg) \notag\\
		& \qquad \times \bigg( \sum_{n \geq 0} \frac{\lambda^n}{n!} \int_{\Lambda^n} \sum_{\substack{H \in \mathcal G(B \cup C,\vec y_{[n]}): \\ \forall i \in [n]: y_i \sim C }} 
						\weight(H) \dd \vec y_{[n]} \bigg) \notag\\
		=& \bigg(\sum_{n \geq 1} \frac{\lambda^n}{n!} \Big(\int_\Lambda \p(A \sim y \sim C, y \nsim B) \dd y \Big)^n \bigg) 
			\bigg(\sum_{n \geq 0} \frac{\lambda^n}{n!} \Big(-\int_\Lambda \p( y \sim (A \cup B)) \dd y \Big)^n \bigg) \notag\\
		& \qquad \times \bigg(\sum_{n \geq 0} \frac{\lambda^n}{n!} \Big(-\int_\Lambda \p( C \sim y \nsim B) \dd y \Big)^n \bigg) . \label{eq:dcfshell:hWL_bound_base}}
Recognizing the exponential series in the expression above, we can rewrite the  probabilities with respect to $\p$ (appearing in the exponents) as probabilities with respect to $\p_\lambda$ associated to $\xi^y$, e.g., $\p( y \sim (A \cup B))= \p_\lambda( y \sim (A \cup B) \text{ in }\xi^y)$. Then we can apply the univariate Mecke's formula (see~\eqref{eq:prelim:mecke_m} for $m=1$) to rewrite~\eqref{eq:dcfshell:hWL_bound_base} as
	\al{ & \Big( \e^{\ela[|\{y \in \eta: A \sim y \sim C, y\nsim B \}|] } - 1\Big) \e^{-\ela[|\{y \in \eta: y \sim (A \cup B) \}|]} \e^{-\ela[|\{y \in \eta: C \sim y \nsim B \}|]}  \\
		=& \Big(1- \e^{-\ela[|\{y \in \eta: A \sim y \sim C, y\nsim B \}|] } \Big) \e^{-\ela[|\{y \in \eta: y \sim (A \cup B) \}|]} \e^{-\ela[|\{y \in \eta: C \sim y \nsim (A \cup B) \}|]} \\ 
		=& (1-Q_{0,k+1})  \e^{-\ela[|\{y \in \eta: y \sim (A \cup B \cup C) \}|]}.  }
Since $\e^{-\ela[|\{y \in \eta: y \sim (A \cup B \cup C) \}|]} = \pla ( \eta \centernot\longleftrightarrow W)$, we can plug this back into~\eqref{eq:dcfshell:hWL_base_split} and obtain
	\[  h(\vec W;L) = \pla ( \eta \centernot\longleftrightarrow W) q_{0,k+1} \Big( 1-Q_{0,k+1} - \p(A\sim C)\Big) \]
on the level of formal power series. Now we prove convergence and check that the previous computational steps are justified not only on the level of formal power series.
We revisit first the Eq.~\eqref{eq:dcfshell:hWL_bound_base}. On the left-hand side, let us 
put absolute values inside the integral (but outside the sum over shell graphs $H$). The resulting expression is bounded by the middle part of~\eqref{eq:dcfshell:hWL_bound_base}, again with absolute values inside the integral. Each integrand is bounded in absolute value by a probability, hence it is smaller or equal to $1$. The resulting series are exponential series and in particular, absolutely convergent. 
As a consequence, Eq.~\eqref{eq:dcfshell:hWL_bound_base} is justified and the last sum in~\eqref{eq:dcfshell:hWL_base_split} is bounded as
	\algn{ \sum_{n \geq 0} \frac{\lambda^n}{n!} &\int_{\Lambda^n} \biggl|
				\sum_{\substack{H \in \mathcal G(W,\vec y_{[n]}): \\ \deg(y_i) \geq 1 \forall i \in[n], \\ \exists i: A \sim y_i \sim C}} \weight(H) \biggr| \dd \vec y_{[n]} \notag \\
		& \leq \e^{\ela[|\{y \in \eta: A \sim y \sim C, y\nsim B \}|] } \e^{\ela[|\{y \in \eta: y \sim (A \cup B) \}|]} \e^{\ela[|\{y \in \eta: C \sim y \nsim B \}|]}  \notag\\
		&\leq \e^{\ela[|\{y \in \eta: y \sim  A\}|] } \e^{\ela[|\{y \in \eta: y \sim (A \cup B) \}|]} \e^{\ela[|\{y \in \eta:  y \sim C\}|]}  \notag\\
		& \leq \e^{2 \lambda|W|},  \label{eq:tbb2}}
where for the last inequality we use the fact that the expected number of direct neighbors of any fixed element of $W$ with respect to $\eta$ is given by $\lambda\int\varphi(x)\dd x$ as well as the re-scaling introduced in Section 2.3 ensuring that $\int\varphi(x)\dd x=1$; compare this bound to the one used in~\eqref{eq:tbb}. For the other contribution to $h(\vec W;L)$, we notice 
	\algn{ & \sum_{n \geq 0} \frac{\lambda^n}{n!} \int_{\Lambda^n}\Biggl|\Big(\sum_{H' \in \mathcal G(A,C): E(H) \neq \varnothing} \weight(H') \Big) 
					\Big(\sum_{\substack{H'' \in \mathcal G(W,\vec y_{[n]}): \\ y_i \sim W \forall i \in[n]}} \weight(H'') \Big) \Biggr| \dd \vec y_{[n]} \notag\\
		&\qquad \leq \p(A\sim C)\, \e^{ \ela[|\{ y \in \eta: y \sim W \}|]}  \leq \e^{\lambda|W|}, \label{eq:tbb3} }
by the same argument as in~\eqref{eq:tbb2}.

Combining~\eqref{eq:tbb2} and~\eqref{eq:tbb3} with~\eqref{eq:dcfshell:hWL_base_split} and $0\leq q_{0,k+1} \leq 1$, we deduce 
	\[ \sum_{n \geq 0} \frac{\lambda^n}{n!} \int_{\Lambda^n} \Bigl| \sum_{H \in \gen{L}: \mathcal S(H) = \vec y_{[n]}}	\weight(H)\Bigr| \dd \vec y_{[n]} 
				\leq \e^{\lambda|W|} + \e^{ 2\lambda|W|} \leq 2\e^{ 2\lambda|W|} < \infty. \] 

%%%%%%%%%%%%%%%%%%%%%%%%%%%%%%%%%%%%%%%%%%%%%%%%%%%%%%%%%%%%%%%%%%%%%%%%%%%
\underline{Inductive step.} For the inductive step, let $m>1$. We write the lace $L$ in terms of its vertices $(s_i,t_i)$ in $W$ (that is $s_i=u_{\alpha_i}$ and $t_i=u_{\beta_i}$) and let $L'$ be the lace on $W' := W \setminus [s_1, s_2)$ obtained from $L$ by deleting the first stitch. We note that if $H \in \gen{L}$, then $H[[s_2,u_{k+1}]]\in \gen{L'}$. Observe that
	\eqq{ h(\vec W;L) = h(\vec {W'};L') \sum_{n \geq 0} \frac{\lambda^n}{n!} \int_{\Lambda^n} 
				\sum_{\substack{H \in \mathcal G(\overline V_0, \ldots, \overline V_{\alpha_3-1}, \vec y_{[n]}): \\ H \oplus L' \in \gen{L}}} \weight(H) \dd \vec y_{[n]}. 
						\label{eq:dcfshell_hWL_step_L'_extract}}
Again we first prove~\eqref{eq:dcfshell:hWL_lemma_identity} and carry out computations on the level of formal power series; we prove convergence (and thus~\eqref{eq:dcfshell:hWL_lemma_abs_conv}) at the end.
We can apply the induction hypothesis to $h(\vec {W'};L)$; it remains to deal with the second factor. We partition the vertices in $[s_1, s_3]$ as $A = [s_1]\!], B=(\!(s_1,s_2), C=[s_2,t_1)\!), D = [\![t_1]$, and $E= (t_1,s_3]$ (see Figure~\ref{fig:h_bound_base}). If $m=2$, we let $E=(t_1,u_k]$.

The graphs summed over in~\eqref{eq:dcfshell_hWL_step_L'_extract} must satisfy the following restraints: There must be at least one direct or indirect stitch between $A$ and $D$, and there cannot be any (direct or indirect) edge between $A$ and $E$. In particular, the remaining direct stitches may or may not be there, and thus can be extracted as the factor $q_{\alpha_1, \alpha_2, \alpha_3}$.

We partition $\mathcal S(H) = \cup_{i=1}^4 \mathcal S_i$, where
	\al{\mathcal S_1 &= \{y: A \sim y \sim D, N(y) \subseteq [s_1,t_1] \}, \quad \mathcal S_2 = \{y: A \sim y \sim C, N(y) \subseteq [s_1, t_1)\!) \}, \\
		\mathcal S_3 &= \{y: \varnothing \neq N(y) \subseteq [s_1, s_2)\},  \quad \mathcal S_4 = \{y: B \sim y \sim (C \cup D \cup E), N(y) \subseteq (\!(s_1, s_3]\}.}
Again, we intend to split the sum over graphs into those that have at least one direct stitch between $A$ and $D$, and those that do not. We can thus rewrite the second factor in~\eqref{eq:dcfshell_hWL_step_L'_extract} as
	\algn{ & q_{\alpha_1, \alpha_2, \alpha_3} \sum_{n \geq 0} \frac{\lambda^n}{n!} \int_{\Lambda^n} 
				\sum_{\substack{H \in \mathcal G(\overline V_0, \ldots, \overline V_{\alpha_3-1}, \vec y_{[n]}): \\ H \oplus L' \in \gen{L}, \notag\\
					\forall e\in E(H): e \cap (A\cup D \cup \vec y_{[n]}) \neq \varnothing}} \weight(H) \dd \vec y_{[n]}\\
			=& q_{\alpha_1, \alpha_2, \alpha_3}  \prod_{i=2}^{4}\bigg( \sum_{n_i \geq 0} \frac{\lambda^{n_i}}{n_i!} \int_{\Lambda^{n_i}} 
					\sum_{\substack{H \in \mathcal G(W, \vec y_{i,[n_i]}): \\ \mathcal S(H) = \mathcal S_i}} \weight(H) \dd \vec y_{[i,[n_i]} \bigg) \notag\\
			& \hspace{2cm} \times \bigg[-\p(A\sim D) \sum_{n \geq 0} \frac{\lambda^n}{n!}
					\int_{\Lambda^n}\sum_{\substack{H \in \mathcal G([s_1,s_3],\vec y_{[n]}): \\ y_i \sim A\cup B \forall i \in[n]}} \weight(H) \dd \vec y_{[n]} \notag\\
			& \hspace{3cm} +\sum_{n \geq 1} \frac{\lambda^n}{n!}
				\int_{\Lambda^n}\sum_{\substack{H \in \mathcal G([s_1,s_3],\vec y_{[n]}): \\ y_i \sim A\cup B \forall i \in[n]}} \weight(H) \dd \vec y_{[n]}  \bigg] \notag\\
			=& q_{\alpha_1, \alpha_2, \alpha_3} \big(-\p(A\sim D) \e^{\ela[|\{y \in \eta: y  \in \mathcal S_1\}|]} + \e^{\ela[|\{y \in \eta: y  \in \mathcal S_1\}|]}-1 \big) \notag\\
			& \qquad \times \exp\Big\{\ela[|\{y \in \eta: y  \in \mathcal S_2\}|] 
				- \ela[|\{y \in \eta: y  \in \mathcal S_3\}|] + \ela[|\{y \in \eta: y  \in \mathcal S_4\}|]\Big\}, \label{eq:dcfshell:hWL_step_big_rewrite} }
where the last identity was obtained using Lemma~\ref{lem:laces:weight_sums_probabilities}. Note that the factor $h(\vec{W'};L')$ contains the factor $\p(\eta\centernot\longleftrightarrow [s_2,u_{k+1}]) = \e^{-\ela[|\{ y\in\eta: y \sim [s_2,u_{k+1}]\}|]}$. Together with this factor,~\eqref{eq:dcfshell:hWL_step_big_rewrite} equals
	\algn{& \exp \Big\{ \ela\Big[-|\{y \in \eta: (A \cup B \sim y \sim [s_2,u_{k+1}]\}|+ |\{y \in \eta: A \sim y \sim D, y \nsim (B \cup C)\}| \notag \\
		& \qquad \qquad + |\{y \in\eta: A \sim y \sim C, y\nsim B\}|+|\{y \in \eta: B \sim y \sim (C \cup D \cup E)\}| \Big] \Big\}  \label{eq:dcfshell:hWL_bound_exponent}\\
		& \qquad \times \pla( \eta \centernot\longleftrightarrow W) \big(1-Q_{A,D} - \p(A \sim D) \big). \notag }
It remains to rewrite the argument in the expectation of the exponent in~\eqref{eq:dcfshell:hWL_bound_exponent}. Note that
	\al{ &- |\{y \in \eta: A \sim y \sim [s_2,u_{k+1}], y\nsim B \}| - |\{y \in \eta: B \sim y \sim [s_2,u_{k+1}] \}| \\
		 &+ |\{y \in \eta: A \sim y \sim (C \cup D), y \nsim B\}| + |\{y \in \eta: B \sim y \sim (C \cup D \cup E) \}| \\
		=&  - |\{y \in \eta: A \sim y \sim (t_1,u_{k+1}], y\nsim (B \cup C \cup D) \}| - |\{y \in \eta: B \sim y \sim (s_3,u_{k+1}], y \nsim (C \cup D \cup E) \}|. }
This gives two exponential terms. The first is
	\al{ &\exp\Big\{-\ela[|\{y \in \eta: [u_{\alpha_1}]\!] \sim y \sim (u_{\beta_1},u_{k+1}], y \nsim (\!(u_{\alpha_1},u_{\beta_1}]\}|]\Big\}\\
			=& \prod_{j=\beta_1+1}^{k+1} \exp\Big\{{-\ela[|\{y \in \eta: [u_{\alpha_1}]\!] \sim y \sim [\![u_j], y \nsim (\!(u_{\alpha_1},u_j)\!)\}|]}\Big\}  \\
			=& Q_{\alpha_1,(\beta_1,k+1]} .}
Similarly, the second exponential term equals $Q_{(\alpha_1,\alpha_2), (\alpha_3,k+1]}$. 

Again, we prove convergence and justify the previous computational steps. Revisiting the left-hand side of~\eqref{eq:dcfshell:hWL_step_big_rewrite}, we insert absolute values inside the integral (and outside the sum over graphs $H$). As in the base case, this is bounded by the middle part of~\eqref{eq:dcfshell:hWL_step_big_rewrite} with absolute values in the integrals, and each integrand is a probability. With the Mecke equation, we obtain
	\al{ &\sum_{n \geq 0} \frac{\lambda^n}{n!}  \int_{\Lambda^n} \bigg|
				\sum_{\substack{H \in \mathcal G(\overline V_0, \ldots, \overline V_{\alpha_3-1}, \vec y_{[n]}): \\ H \oplus L' \in \gen{L}, \notag\\
					\forall e\in E(H): e \cap (A\cup D \cup \vec y_{[n]}) \neq \varnothing}} \weight(H) \bigg| \dd \vec y_{[n]} \notag \\
		& \leq 2 \exp\Big\{\ela[|\{y \in \eta: y  \in \mathcal S_1\}|] + \ela[|\{y \in \eta: y  \in \mathcal S_2\}|] 
				+ \ela[|\{y \in \eta: y  \in \mathcal S_3\}|] + \ela[|\{y \in \eta: y  \in \mathcal S_4\}|]\Big\} \notag \\
		& \leq 2 \e^{3\lambda |A \cup B|},}arguing as in~\eqref{eq:tbb2} for the last inequality.

Note that by induction hypothesis, the term $h(\vec{W'};L')$ with absolute values in the respective integrals is bounded by $2^{m-1} \e^{3 \lambda|W'|}$. Since $A\cup B$ and $W'$ are disjoint, this proves~\eqref{eq:dcfshell:hWL_lemma_abs_conv}.
\end{proof}

\begin{proof}[Proof of Proposition~\ref{thm:dcfshell:bounds_on_shell_fct}]
Again, we abbreviate $\eta=\eta_\Lambda$ and $h=h_\lambda^\Lambda$. First, consider $k=0$, i.e., pivot decompositions with no pivotal points. Then there are no direct stitches and
	\[h^{(m)} (\vec W, \vec y_{[m]}) = (-1)^m \prod_{i=1}^{m} \p(y_i \sim W) \qquad \text{and} \qquad h(\vec W) = \pla(\eta \centernot\longleftrightarrow W). \]
Moreover,
	\[ \sum_{m \geq 0} \frac{\lambda^m}{m!} \int_{\Lambda^m} | h^{(m)} (\vec W, \vec y_{[m]}) | \dd\vec y_{[m]} = \e^{\ela[|\{ y \in \eta: y \sim W\}|]} \leq \e^{\lambda|W|}, \]
using the same bound as in~\eqref{eq:tbb3}. Since this proves the proposition for $k=0$, we turn to $k\geq 1$ and we first prove~\eqref{eq:dcfshell:main_bound}.

We rewrite $h(\vec W)$ by explicitly writing out the sum over laces $L$ in terms of the endpoints of their stitches in $W$ (note that any lace can have at most $k$ stitches). We first exhibit this for $k=2$, where $\vec W=(u_0,V_0,u_1,V_1,u_2,V_2,u_3)$ and there are two different laces. With the abbreviation $\tilde Q_{i,j} = Q_{i,j} + \p([u_i]\!] \sim [\![u_j])$,
	\algn{ h(\vec W) &= h(\vec W; L_1) + h(\vec W; L_2) = \pla(\eta \centernot\longleftrightarrow W)
							\Big( q_{0,3} (1-\tilde Q_{0,3}) + Q_{0,3} (1-\tilde Q_{0,2})(1-\tilde Q_{1,3}) \Big)  \notag\\
			&= \pla(\eta \centernot\longleftrightarrow W) \sum_{\beta_1 =2}^{3} q_{0,\beta_1}(1-\tilde Q_{0,\beta_1})
						 \Big[ \mathds 1_{\{\beta_1=3\}} + \mathds 1_{\{\beta_1<3\}} \sum_{\alpha_2=1}^{\beta_1-1} Q_{0,3} (1-\tilde Q_{\alpha_2,3})  \Big]. \label{eq:shell:k2_rewriting}}
Clearly, this is unnecessarily complicated for $k=2$, as the sum over $\alpha_2$ contains only one term and $q_{0,2}=1$. However, this turns out to be convenient for general $k$. We use the convention that $Q_{[a,b],\varnothing} = Q_{\varnothing, [a,b]} =1$. Carefully re-arranging the sum over all laces yields
	\al{ h(\vec W) &= \sum_{L \in \mathcal L(\vec W)} h(\vec W;L) = \pla(\eta\centernot\longleftrightarrow W) \sum_{\beta_1=2}^{k+1} 
					q_{0,\beta_1} (1-\tilde Q_{0,\beta_1}) Q_{0,(\beta_1,k+1]} \\
		& \quad \times \bigg[ \mathds 1_{\{\beta_1=k+1\}} + \sum_{\alpha_2=1}^{\beta_1-1} \sum_{\beta_2=\beta_1+1}^{k+1} 
					q_{\alpha_2, \beta_2}(1-\tilde Q_{\alpha_2,\beta_2}) Q_{(0,\alpha_2],(\beta_2,k+1]} Q_{(0,\alpha_2),\beta_2} \\
		& \quad \times \bigg[ \mathds 1_{\{\beta_2=k+1\}} +  \sum_{\alpha_3=\beta_1}^{\beta_2-1} \sum_{\beta_3=\beta_2+1}^{k+1} 
					q_{\alpha_3, \beta_3}(1-\tilde Q_{\alpha_3,\beta_3}) q_{\alpha_1,\alpha_2,\alpha_3} \\
		&		\hspace{6cm} Q_{(\alpha_2,\alpha_3],(\beta_3,k+1]} Q_{(\alpha_2,\alpha_3),\beta_3} Q_{(0,\alpha_2),(\alpha_3,\beta_2)} \\
		& \quad \times \bigg[ \mathds 1_{\{\beta_3=k+1\}} + \sum_{\alpha_4=\beta_2}^{\beta_3-1} \sum_{\beta_4=\beta_3+1}^{k+1} 
					q_{\alpha_4, \beta_4}(1-\tilde Q_{\alpha_4,\beta_4}) q_{\alpha_2,\alpha_3,\alpha_4} \\
		&		\hspace{6cm} Q_{(\alpha_3,\alpha_4],(\beta_4,k+1]} Q_{(\alpha_3,\alpha_4),\beta_4} Q_{(\alpha_2,\alpha_3),(\alpha_4,\beta_3)} \times \cdots \\
		& \quad \times \Big[ \mathds 1_{\{\beta_{k-1}=k+1\}} +  \mathds 1_{\{\beta_{k-1}<k+1\}} \sum_{\alpha_k=\beta_{k-2}}^{\beta_{k-1}-1}
					q_{\alpha_k, k+1}(1-\tilde Q_{\alpha_k,k+1}) \prod_{j=k,k+1} q_{\alpha_{j-2},\alpha_{j-1},\alpha_j}\\ 
		&		\hspace{6cm} Q_{(\alpha_{k-1},\alpha_k),k+1} Q_{(\alpha_{k-2},\alpha_{k-1}),(\alpha_k,\beta_{k-1})} \Big] \cdots \bigg]\bigg]\bigg].  }
Note that if $\beta=k+1$ for some $i$, then the double sum following the respective indicator breaks down to $0$. Also, only the innermost bracketed term contains two factors of $q_{a,b,c}$.

We now show that, starting with the innermost square brackets, the bracketed terms are bounded by $1$ in absolute value. 

To lighten notation, we write the innermost sum as $\sum_{\alpha=b_1}^{b_2-1} R(\alpha)$. We split the factor $1-\tilde Q_{\alpha_k,k+1} = (1-Q_{\alpha_k,k+1}) - \p([u_{\alpha_k}\!] \sim [\![ u_{k+1}])$. This yields two sums $\sum_{\alpha=b_1}^{b_2-1} R(\alpha) = \sum_{\alpha=b_1}^{b_2-1} R'(\alpha) - \sum_{\alpha=b_1}^{b_2-1} R''(\alpha)$, where $R'$ and $R''$ are both non-negative. Now, with the estimate $Q_{(\alpha_{k-1},\alpha),k+1} \leq Q_{[\beta_{k-2},\alpha),k+1} = Q_{[b_1,\alpha),k+1}$, we can bound
	\algn{  \sum_{\alpha=b_1}^{b_2-1} R'(\alpha) &\leq \sum_{\alpha=b_1}^{b_2-1} (1- Q_{\alpha,k+1}) Q_{[b_1,\alpha),k+1} \notag \\
		&= (1-Q_{b_1,k+1}) + Q_{b_1,k+1}  \sum_{\alpha=b_1+1}^{b_2-1} (1- Q_{\alpha,k+1}) Q_{[b_1+1,\alpha),k+1}, \label{eq:shell:Q_innermost_bracket_pos}}
which is readily proven to be at most $1$ by induction. Moreover,
	\eqq{ \sum_{\alpha=b_1}^{b_2-1} R''(\alpha) \leq  \sum_{\alpha=b_1}^{b_2-1} q_{\alpha,k+1} \p([u_\alpha]\!] \sim [\![ u_{k+1}]). \label{eq:shell:Q_innermost_bracket_neg}}
The above summands can be rewritten as the probability of the event that the direct stitch $(\alpha,k+1)$ is present, while all direct stitches $(j,k+1)$ for $j\in (\alpha,k+1]$ are not. Hence, these are disjoint events for different values of $\alpha$, and so the sum is at most $1$.

In total, we rewrote $\sum_{\alpha=b_1}^{b_2-1} R(\alpha)$ as the difference of two nonnegative values, both at most $1$, proving our claim.

To deal with the summands for $2 \leq i <k$, we write the double sum as $\sum_{\alpha=b_1}^{b_2-1} \sum_{\beta = b_2+1}^{k+1} R(\alpha, \beta)$ and split the term $1-\tilde Q_{\alpha_i,\beta_i} = (1- Q_{\alpha_i,\beta_i}) - \p([u_{\alpha_i}\!] \sim [\![ u_{\beta_i}])$ so that
	\eqq{ \sum_{\alpha=b_1}^{b_2-1} \sum_{\beta = b_2+1}^{k+1} R(\alpha, \beta) 
				= \sum_{\alpha=b_1}^{b_2-1} \sum_{\beta = b_2+1}^{k+1} R'(\alpha, \beta) 
				- \sum_{\alpha=b_1}^{b_2-1} \sum_{\beta = b_2+1}^{k+1} R''(\alpha, \beta) \label{eq:shell:Q_general_bracket_split}}
for non-negative summands $R', R''$. We prove a bound on the sum over $R'(\alpha,\beta)$ by induction on $k-b_2$. If $b_2=k$, then the bound is the same as for the bound~\eqref{eq:shell:Q_innermost_bracket_pos}. For $b_2<k$, we first bound $Q_{(\alpha_i,\alpha],(\beta,k+1]} \leq Q_{[b_1,\alpha],(\beta,k+1]}$ and then extract the summand for $\beta=k+1$, yielding
	\algn{\sum_{\alpha=b_1}^{b_2-1} \sum_{\beta = b_2+1}^{k+1} R'(\alpha, \beta) & \leq \sum_{\alpha=b_1}^{b_2-1} \sum_{\beta = b_2+1}^{k+1}
					(1-Q_{\alpha,\beta}) Q_{[b_1,\alpha],(\beta,k+1]} Q_{[b_1,\alpha),\beta} \notag\\
			&\leq \sum_{\alpha=b_1}^{b_2-1} (1-Q_{\alpha,k+1}) Q_{[b_1,\alpha),k+1} \notag\\
			& \qquad + Q_{[b_1,b_2-1],k+1} \sum_{\alpha=b_1}^{b_2-1} \sum_{\beta = b_2+1}^{k} (1-Q_{\alpha,\beta}) Q_{[b_1,\alpha],(\beta,k]} Q_{[b_1,\alpha),\beta}. 
							\label{eq:shell:Q_general_bracket_induction}}
By the induction hypothesis, the double sum in~\eqref{eq:shell:Q_general_bracket_induction} is at most $1$. Therefore,
	\al{\sum_{\alpha=b_1}^{b_2-1} \sum_{\beta = b_2+1}^{k+1} R'(\alpha, \beta) & \leq \sum_{\alpha=b_1}^{b_2-2} (1-Q_{\alpha,k+1}) Q_{[b_1,\alpha),k+1} \\
		& \qquad + (1-Q_{b_2-1,k+1}) Q_{[b_1,b_2-1),k+1} + Q_{b_2-1,k+1} Q_{[b_1,b_2-1),k+1} \\
		&=  \sum_{\alpha=b_1}^{b_2-2} (1-Q_{\alpha,k+1}) Q_{[b_1,\alpha),k+1} + Q_{[b_1,b_2-2],k+1} \\
		&= 1,  }
where the last identity is now an easy induction.

Turning to the second summand in~\eqref{eq:shell:Q_general_bracket_split}, by the same argument used to treat~\eqref{eq:shell:Q_innermost_bracket_neg}, the summands $R''(\alpha,\beta)$ are probabilities of events which are disjoint for different values of $(\alpha,\beta)$, and so they sum to at most $1$.

The observation that the bracket term for $i=1$ is handled analogously finishes the proof of~\eqref{eq:dcfshell:main_bound}.
%%%%%%%%%%%%%%%%%%%%%%%%%%%%%%%%%%%%%%%%%%%%%%%%%%%%%%%%%%%%%%%%%%%%%%%%%%%%%%%%%%%%%%%%%%%%%%%%%%%%%%%%%%%%%%%%%%%%%%%%%%

We proceed to prove~\eqref{eq:dcfshell:m_shell_summability} for $k>1$. By combining Lemma~\ref{lem:laces:size_L} with Lemma~\ref{lem:laces:shell_fct_fixed_lace},
	\al{ \sum_{m\geq 0} \frac{\lambda^m}{m!} \int_{\Lambda^m} \big| h^{(m)}(\vec W,\vec y_{[m]}) \big| \dd\vec y_{[m]}
		& \leq \sum_{L \in \mathcal L_k}\sum_{m \geq 0} \frac{\lambda^m}{m!} 
				\int_{\Lambda^m} \Big| \sum_{H \in \gen{L}: \mathcal S(H) = \vec y_{[m]}} \textup{\weight}(H) \Big| \dd \vec y_{[m]} \\
		& \leq \frac{1}{\sqrt{5}} \Big( \frac{3+\sqrt{5}}{2} \Big)^{k} 2^k \e^{3\lambda|W|}.  }
Using the bound $k \leq |W|$ finishes the proof.
\end{proof}

\begin{lemma}[Thermodynamic limit of the shell function] \label{lem:hlimit} 
For every $\lambda \geq 0$, the pointwise limit 
	\[ \lim_{\Lambda \nearrow \R^d} h_\lambda^\Lambda(\vec W) = h_\lambda(\vec W) 	\] 
along $\R^d$-exhausting sequences exists. 
\end{lemma}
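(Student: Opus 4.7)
The plan is to reduce the question to the explicit formula for $h_\lambda^\Lambda(\vec W; L)$ provided by Lemma~\ref{lem:laces:shell_fct_fixed_lace} and to then verify convergence factor by factor. Since $\vec W$ is fixed and the set of laces $\mathcal L_k$ is finite (the number of pivots $k$ is determined by $\vec W$), the identity
\[
h_\lambda^\Lambda(\vec W) = \sum_{L \in \mathcal L_k} h_\lambda^\Lambda(\vec W; L)
\]
is a finite linear combination. Hence it is enough to show, for each fixed lace $L$, that $h_\lambda^\Lambda(\vec W; L)$ converges pointwise as $\Lambda \nearrow \R^d$.

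Examining the product representation~\eqref{eq:dcfshell:hWL_lemma_identity}, the only factors that depend on $\Lambda$ are $\pla(\eta_\Lambda \centernot\longleftrightarrow W)$ and the quantities $Q_{\alpha,\beta}$; all of the $q$-factors and the terms $\p([u_\alpha]\!] \sim [\![u_\beta])$ are deterministic products of factors $1-\connf(x-y)$ indexed by edges on the fixed vertex set $W$ and do not depend on $\Lambda$. The first of the $\Lambda$-dependent factors is, by standard PPP computation together with independence of the edge marks,
\[
\pla(\eta_\Lambda \centernot\longleftrightarrow W) = \exp\Bigl\{-\lambda \int_\Lambda \Bigl(1 - \prod_{w \in W}(1 - \connf(w-y))\Bigr) \dd y\Bigr\},
\]
and each $Q_{\alpha,\beta}$ is similarly an exponential of $-\lambda$ times the integral over $\Lambda$ of a probability of the form $\p([u_\alpha]\!] \sim y \sim [\![u_\beta], y \nsim [u_{\alpha+1},u_{\beta-1}])$.

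In each case the integrand is nonnegative, increases to its $\R^d$-version as $\Lambda \nearrow \R^d$, and is dominated by $\sum_{w\in W}\connf(w-y)$ (because the events at hand require $y \sim w$ for some $w \in W$). Since $\int_{\R^d}\connf(w-y)\dd y = 1 < \infty$ by the re-scaling convention of Section~\ref{sec:defs}, each of the corresponding exponents has a finite limit. Monotone convergence, combined with continuity of the exponential, then gives pointwise convergence of $\pla(\eta_\Lambda \centernot\longleftrightarrow W)$ and of each $Q_{\alpha,\beta}$ to finite limits; multiplying together the finitely many converging factors, together with the $\Lambda$-independent ones, yields convergence of $h_\lambda^\Lambda(\vec W;L)$.

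Summing the finitely many lace contributions then defines $h_\lambda(\vec W) := \lim_{\Lambda \nearrow \R^d} h_\lambda^\Lambda(\vec W)$. No real obstacle arises: the bound~\eqref{eq:dcfshell:hWL_lemma_abs_conv} already ensured convergence of the series defining $h_\lambda^\Lambda(\vec W; L)$ uniformly in $\Lambda$, so once the explicit exponential factors are handled by dominated convergence, passing to the infinite-volume limit is straightforward. The mildly delicate point is purely bookkeeping, namely identifying that each $Q_{\alpha,\beta}$ can indeed be written as an exponential via the void probability formula for the Poisson process $\eta_\Lambda$, so that the $\Lambda$-dependence is isolated in integrals that are controlled by the integrability of $\connf$.
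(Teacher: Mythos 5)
Your proof is correct and follows essentially the same route as the paper's: reduce to the finite sum over laces $\mathcal L_k$, invoke the product representation from Lemma~\ref{lem:laces:shell_fct_fixed_lace}, and argue factor by factor. The paper dispatches the $\Lambda$-dependent factors $\pla(\eta_\Lambda \centernot\longleftrightarrow W)$ and $Q_{\alpha,\beta}$ a bit more tersely, simply observing that they are probabilities decreasing in $\Lambda$ (hence converge); your unpacking of these as Poisson void probabilities $\exp\{-\lambda\int_\Lambda(\cdots)\dd y\}$ with integrable exponents via monotone convergence is a valid, slightly more explicit version of the same step (and shows, as a bonus, that the limits are strictly positive, which is not needed here).
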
 

\begin{proof}
Let $(\Lambda_n)_{n\in\N}$ be an $\Rd$-exhausting sequence. For fixed $\vec W=(u_0, V_0, \ldots, u_{k+1})$, note that
	\[ 	h_\lambda^{\Lambda_n}(\vec W) = \sum_{L\in\mathcal L_k} h_\lambda^{\Lambda_n}(\vec W;L). 	\]
For each lace $L$, the limit
	\[	h_\lambda(\vec W;L) = \lim_{n\to \infty} h_\lambda^{\Lambda_n}(\vec W;L)	\] 	
exists and does not depend on the precise choice of $\R^d$-exhausting sequence. This is clear from the representation for $h_\lambda^\Lambda(\vec W;L)$ proven in Lemma~\ref{lem:laces:shell_fct_fixed_lace}: In particular, $h_\lambda^\Lambda(\vec W;L)$ is given as the finite product of $\Lambda$-independent factors and factors that describe the probability of certain point processes containing no points (namely, $\pla(\eta_\Lambda \centernot\longleftrightarrow W)$ and the factors $Q_{i,j}$). As probabilities that are decreasing in the volume, the latter admit a $\Lambda \nearrow \Rd$ limit. It follows that the limit of the shell function exists as well and is given by 
	\[	h_\lambda(\vec W) = \sum_{L\in\mathcal L_k} h_\lambda(\vec W;L).   	\]%\qedhere
\end{proof}

%%%%%%%%%%%%%%%%%%%%%%%%%%%%%%%%%%%%%%%%%%%%%%%%%%%%%%%%%%%%%%%%%%%%%%%%%%%%%%%%%%%%%%%%%%%%%%%%%%%%%%%%%%%%%%%%%%%%%%%%%%%%%%%%%%%%%%%%%%%%%%%%%%%%%%%%%%%%%%%%%%%%%%%%%%%%%%%%%%%%%%%%%%%%%%%%%%%%
\subsection{The direct-connectedness function in infinite volume} \label{sec:dcf_inf}
In this section, we consider the limit $\lim_{\Lambda \nearrow \Rd} \dcf^{\Lambda}$ with $\dcf^\Lambda$ as in~\eqref{eq:def:dcf} and give sufficient conditions under which it exists, thereby proving the two convergence statements from Theorem~\ref{thm:main_thm}.

The candidate limit is given by the analogue of~\eqref{eq:def:dcf} with $\Lambda$ replaced by $\R^d$; the existence of $h_\lambda^{\R^d}\equiv h_\lambda$ has been checked in Lemma~\ref{lem:hlimit}. Thus, 
	\eqq{ g_\lambda(x_1,x_2) = \sum_{r=0}^\infty \frac{\lambda^r}{r!} \int_{(\R^d)^r} \sum_{\vec W} 
				\Bigl( \sum_{G\in\mathcal G^{\vec W}_{\text{core}}} \weight^\pm(G)\Bigr) h_\lambda(\vec W) \dd \vec x_{[3,r+2]}, \label{eq:gfinal} }
where the inner sum is over core graphs $G$ on $\vec x_{[r+2]}$ with pivot decomposition $\vec W$, i.e., over $(+)$-connected graphs $G$ on $\vec x_{[r+2]}$ with $\textsf{PD}^+(x_1,x_2,G)= \textsf{PD}^\pm(x_1,x_2,G)=\vec W$. Remember the quantities $0<\tilde \lambda_*\leq \lambda_*$ introduced before Theorem~\ref{thm:main_thm}. We will see in~\eqref{eq:dcf:inf_vol:rewrite_core_probabilities} that the sum over core graphs for a given pivot decomposition is a probability, hence in particular non-negative.

\begin{theorem}[The thermodynamic limit of $\dcf^\Lambda$: pointwise convergence.] \label{thm:dcf_inf:limit_theorem}
If $\lambda< \lambda_\ast$, then 
	\[	\sum_{r=0}^\infty \frac{\lambda^r}{r!} \int_{(\R^d)^r}  \sum_{\vec W} \Bigl( \sum_{G\in\mathcal G^{\vec W}_{\mathrm{core}}} \textup{\weight}^\pm(G)\Bigr) \bigl| h_\lambda(\vec W)\bigr| \dd \vec x_{[3,r+2]}<\infty 	\]
for all $x_1,x_2\in \R^d$. Moreover, for every $\R^d$-exhausting sequence $(\Lambda_n)_{n\in \N}$, we have the pointwise convergence 
	\[ \lim_{n\to \infty} g_\lambda^{\Lambda_n} (x_1,x_2) = g_\lambda(x_1,x_2) 	\]
with $g_\lambda$ given in~\eqref{eq:gfinal} (equivalently, Eq.~\eqref{eq:def:dcf} with $\Lambda$ replaced by $\Rd$).
\end{theorem}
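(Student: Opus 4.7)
The plan is to majorize the series for $g_\lambda$ by a quantity which, by the very definition of $\lambda_\ast$, is finite for $\lambda<\lambda_\ast$, and then deduce the pointwise limit via dominated convergence using Lemma~\ref{lem:hlimit}. Applying Proposition~\ref{thm:dcfshell:bounds_on_shell_fct} and passing $\Lambda\nearrow\R^d$ first gives $|h_\lambda(\vec W)| \leq \pla(\nconn{\eta}{W}{\xi^W})$. Since a core graph can contain neither $(+)$- nor $(-)$-edges between different blocks (either would destroy the common pivot structure required by $\textsf{PD}^\pm = \textsf{PD}^+$), Observation~\ref{obs:pm_graphs_probabilities} applied to each block $\overline V_i$ in isolation yields
\[ \sum_{G \in \mathcal G^{\vec W}_{\text{core}}}\weight^\pm(G) = \prod_{i=0}^{k}\p\bigl(\rg(\overline V_i)\text{ is connected on }\overline V_i,\ u_i \Longleftrightarrow u_{i+1}\bigr), \]
which is nonnegative, so the absolute-value bars in the theorem are automatic.

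Next, I would reorganize the series by the number of pivots $k\in\N_0$ and use the Mecke equation~\eqref{eq:prelim:mecke_m} in reverse: the integrations over the block vertex sets $V_i$, weighted by $\lambda^{|V_i|}/|V_i|!$, together with the factor $\pla(\nconn{\eta}{W}{\xi^W})$ are absorbed into an expectation in the ambient Poisson model in which the points of the $V_i$ are realized as Poisson points of $\eta$ and the remaining Poisson points are forbidden from attaching to $W$. Since distinct blocks share only pivots, the edge sets $\binom{\overline V_i}{2}$ are pairwise disjoint, so the block-wise two-connection events $\{u_i \Longleftrightarrow u_{i+1} \text{ in }\rg(\overline V_i)\}$ are witnessed by disjoint regions and realize the disjoint occurrence
\[ \bigl\{u_0 \Longleftrightarrow u_1\bigr\} \circ \bigl\{u_1 \Longleftrightarrow u_2\bigr\} \circ \cdots \circ \bigl\{u_k \Longleftrightarrow u_{k+1}\bigr\} \]
in the sense of~\eqref{def:odot}. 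Iterated application of the continuum BK inequality~\cite[Theorem~2.1]{HeyHofLasMat19}, first in bounded volume $\Lambda_n$ and then transferred to the infinite-volume limit by monotonicity along $\Lambda_n\nearrow\R^d$, bounds this disjoint-occurrence probability by $\prod_{i=0}^{k}\dtlam(u_{i+1}-u_i)$. Integrating over the pivot positions with weight $\lambda^k$ and summing in $k\geq 0$ then produces
\[ \sum_{r\geq 0}\frac{\lambda^r}{r!}\int_{(\R^d)^r}\sum_{\vec W}\Bigl(\sum_G \weight^\pm(G)\Bigr)|h_\lambda(\vec W)|\,\dd\vec x_{[3,r+2]} \ \leq\ \sum_{k\geq 0}\lambda^k\,\dtlam^{\ast(k+1)}(x_2-x_1), \]
whose right-hand side coincides with $\sum_{k\geq 1}\lambda^{k-1}\dtlam^{\ast k}(x_2-x_1)$ and is finite for every $\lambda<\lambda_\ast$ by~\eqref{eq:def:lambda_ast}.

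The very same estimate applies to $g_\lambda^{\Lambda_n}$ with $\R^d$ replaced by $\Lambda_n$ uniformly in $n$, providing an integrable and summable majorant. Combined with the pointwise convergence $h_\lambda^{\Lambda_n}(\vec W)\to h_\lambda(\vec W)$ from Lemma~\ref{lem:hlimit} and the nonnegativity of $\sum_G \weight^\pm(G)$, dominated convergence applied first inside each fixed-$r$ integral and then to the outer sum over $r$ yields $g_\lambda^{\Lambda_n}(x_1,x_2)\to g_\lambda(x_1,x_2)$. I expect the delicate step to be the rigorous implementation of the BK bound: one has to make precise the Mecke-based embedding of the block and pivot vertices into $\eta$, verify the disjoint-witness structure and the increasing character of the block events in that embedding, and control the passage from the bounded-volume BK estimate to a bound involving the infinite-volume two-point function $\dtlam$. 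The remainder is routine measure-theoretic bookkeeping via Tonelli and the standard exchange of sums and integrals.
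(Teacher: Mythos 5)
Your proof is correct and follows essentially the same route as the paper's: factorize the core-graph sum into a product of non-pivotal probabilities on the blocks $\overline V_i$, bound $|h_\lambda(\vec W)|$ by $\pla(\nconn{\eta}{W}{\xi^W})$ via Proposition~\ref{thm:dcfshell:bounds_on_shell_fct}, reorganize by the number of pivots, use the Mecke equation to recognize a disjoint-occurrence event, apply the continuum BK inequality to obtain $\sum_{k\geq0}\lambda^k\dtlam^{\ast(k+1)}(x_2-x_1)$, and finish with dominated convergence using Lemma~\ref{lem:hlimit}. The details you flag as delicate (the Mecke embedding, the disjoint-witness structure, and passing the BK bound to $\dtlam$) are exactly the steps spelled out in~\eqref{eq:dcf:inf_vol:disj_occurrence}--\eqref{eq:dcf:inf_vol:BK} of the paper's argument.
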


\begin{theorem}[Integrability and convergence in the $L^1$-norm] \label{thm:dcf_inf:limit_theorem:L1}
If $\lambda< \tilde \lambda_\ast$, then for all $x_1\in \R^d$, 
	\[ \int_{\R^d} |g_\lambda(x_1,x_2)|\dd x_2\leq \sum_{r=0}^\infty\frac{\lambda^r}{r!} \int_{\R^d}\Bigl(  \int_{(\R^d)^r}  \sum_{\vec W} 
				\Bigl( \sum_{G\in\mathcal G^{\vec W}_{\mathrm{core}}} \textup{\weight}^\pm(G)\Bigr) \bigl|h_\lambda(\vec W)\bigr| \dd \vec x_{[3,r+2]} \Bigr) \dd x_2 <\infty. \]
\end{theorem}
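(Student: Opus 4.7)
The plan is to bound $\int |g_\lambda(x_1,x_2)| \dd x_2$ by the integrated pair-connectedness function $\int \tlam$, and then apply the BK inequality to $\tlam$ to recover convergence for $\lambda<\tilde\lambda_\ast$.

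First, the inequality in the statement follows from the triangle inequality applied to the series~\eqref{eq:gfinal} defining $g_\lambda$, combined with Tonelli's theorem to interchange $\int\dd x_2$ with the summation over $r$ and the integrals. The absolute value can be moved inside past the core sum because $\sum_G \weight^\pm(G)$ is nonnegative: for any fixed pivot decomposition $\vec W=(u_0,V_0,\ldots,u_{k+1})$, the $(-)$-edges of a core graph must be confined to a single block $\overline V_i=V_i\cup\{u_i,u_{i+1}\}$ (otherwise the pivot decomposition would change), so by an extension of Observation~\ref{obs:pm_graphs_probabilities}
\[ \sum_{G\in\mathcal G^{\vec W}_{\mathrm{core}}} \weight^\pm(G) = \prod_{i=0}^k \p\bigl(\rg(\overline V_i)\text{ is doubly connected from }u_i\text{ to }u_{i+1}\text{ with every vertex incident to a }(+)\text{-edge}\bigr), \]
a product of nonnegative probabilities.

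For the finiteness, I would apply the sharp bound from Proposition~\ref{thm:dcfshell:bounds_on_shell_fct} (passed to the infinite-volume limit via Lemma~\ref{lem:hlimit}), giving $|h_\lambda(\vec W)|\leq \pla(\nconn{\eta}{W}{\xi^W})$. The key observation is that the product
\[ \Bigl(\sum_{G\in\mathcal G^{\vec W}_{\mathrm{core}}} \weight^\pm(G)\Bigr)\,\pla(\nconn{\eta}{W}{\xi^W}) \]
factorizes into a joint probability on disjoint edge sets in the RCM $\xi^{x_1,x_2}$: the edges within $W$ determine the core property of the induced subgraph on $W$, while the edges between $W$ and $\eta\setminus W$ enforce that no PPP point connects to $W$. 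Thus this product equals the probability that the connected component of $x_1$ in $\xi^{x_1,x_2}$ equals $W$ and has pivotal decomposition $\vec W$. Converting the series $\sum_{r\geq 0}\frac{\lambda^r}{r!}\int\dd\vec x_{[3,r+2]}$ into an expectation over finite subsets $V\subseteq \eta$ via Campbell's formula, the sum over $(V,\vec W)$ telescopes, because for each realization the cluster of $x_1$ is unique (and a.s.~finite for $\lambda<\lambda_c$, in particular for $\lambda<\tilde\lambda_\ast\leq\lambda_c$) with a unique pivot decomposition. This yields
\[ \int |g_\lambda(x_1,x_2)|\dd x_2 \leq \int \pla\bigl(\conn{x_1}{x_2}{\xi^{x_1,x_2}}\bigr)\dd x_2 = \int \tlam(y)\dd y. \]

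To conclude, applying the BK inequality from Section~\ref{sec:defs} to the pivotal decomposition of the event $\{\conn{x_1}{x_2}{\xi^{x_1,x_2}}\}$ gives $\tlam(y)\leq \sum_{k\geq 0}\lambda^k \dtlam^{\ast(k+1)}(y)$, and integration then yields the geometric series $\sum_{k\geq 0}\bigl(\lambda\int\dtlam\bigr)^k\int\dtlam$, which converges precisely when $\lambda\int\dtlam<1$, i.e., when $\lambda<\tilde\lambda_\ast$. The main obstacle in this plan is the probabilistic identification in the second paragraph: one must extend Observation~\ref{obs:pm_graphs_probabilities} to joint events involving both within-$W$ and between-$W$-and-PPP edges in the RCM, and carefully verify that the block-factorized core sum multiplied by the no-connection factor captures precisely the event that the cluster of $x_1$ is the prescribed set with the correct pivot structure.
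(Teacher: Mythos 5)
Your first paragraph is fine and matches the paper: the triangle inequality plus Fubini--Tonelli gives the stated inequality, and the sum over core graphs factorizes by block into probabilities (this is exactly~\eqref{eq:dcf:inf_vol:rewrite_core_probabilities}), so it is non-negative.

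The problem is in the second paragraph, at the step you yourself flag as the main obstacle. You claim that the product
\[ \Bigl(\sum_{G\in\mathcal G^{\vec W}_{\mathrm{core}}} \weight^\pm(G)\Bigr)\,\pla(\nconn{\eta}{W}{\xi^W}) \]
equals the probability that $\C(x_1,\xi^{x_1,x_2})=W$ \emph{and} the pivot decomposition of the cluster is exactly $\vec W$, and that the sum over $(W,\vec W)$ therefore telescopes to $\int\tlam$. This identification is not correct. What the product actually equals is
\[ \pla\Bigl(\{\C(u_0,\xi^W)=W\}\cap\bigcap_{i=0}^k\{\xi^W[\overline V_i]\in\mathcal D_{u_i,u_{i+1}}\}\Bigr), \]
the event that each block $\overline V_i$ is internally doubly connected from $u_i$ to $u_{i+1}$. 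This event places no constraint on edges of $\xi^W$ running \emph{between} different blocks $\overline V_i$ and $\overline V_j$; such edges can make $u_j$ non-pivotal in $\xi^W[W]$, in which case the actual pivot decomposition of the cluster is coarser than $\vec W$. Consequently the events for different $\vec W$ overlap: a single realization can be counted for several $\vec W$ (for instance, take $W=\{u_0,v,u_1,u_2\}$ with edges $u_0v,\ vu_1,\ u_0u_1,\ u_1u_2,\ vu_2$; then both placements of $v$ in $V_0$ or in $V_1$ satisfy the block condition even though $u_1$ is not pivotal). So the summed probabilities do not telescope to $\tlam$ — the multiplicity is $\geq 1$, giving a bound from \emph{below} by $\tlam$, not from above.

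The paper avoids this by not routing through $\tlam$ at all. It observes that the block event is contained in the disjoint-occurrence event $\{\dconn{u_0}{u_1}{\cdot}\}\circ\cdots\circ\{\dconn{u_k}{u_{k+1}}{\cdot}\}$, applies the Mecke equation to collapse the inner sum over $n$, $\vec v$, $\vec W$ to $\pla\bigl(\circ_{i}\{\dconn{u_i}{u_{i+1}}{\xi^{u_i,u_{i+1}}}\}\bigr)$ (see~\eqref{eq:dcf:inf_vol:disj_occurrence}), and only then uses the BK inequality~\eqref{eq:dcf:inf_vol:BK} to obtain the convolution bound $\sum_{k\geq0}\lambda^k\dtlam^{*(k+1)}(x_2-x_1)$ in~\eqref{eq:dcf:inf_vol:convolution_bound}. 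The proof of the $L^1$ statement is then a one-line integration of that pointwise bound. Your final formula is the same (since BK applied to $\tlam$ yields the same convolution), but the intermediate reduction to $\int\tlam$ is not justified and in fact points the wrong way.
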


\begin{proof} [Proof of Theorem~\ref{thm:dcf_inf:limit_theorem}]
	We consider a summand in~\eqref{eq:gfinal} for fixed $\vec W$ and set $x_1=u_0$ as well as $x_2=u_{k+1}$. Let $\vec W=(u_0,V_0, \ldots, V_k, u_{k+1})$. Remember $\overline{V}_i = \{u_{i}\}\cup V_i\cup \{u_{i+1}\}$. A first important observation is the fact that the weight of a core graph with pivot decompotion $\vec W$ factorizes into the product over the $k$ $(\pm)$-subgraphs induced by the vertex sets $\overline{V}_i$. The sum over core graphs thus factorizes as
	\algn{ \hspace{-0.5cm}\sum_{\substack{G \in \mathcal C^+(W): \\ \textsf{PD}^+(G)= \textsf{PD}^\pm(G)=\vec W}} \hspace{-0.5cm}\weight^\pm(G) 
			&=  \prod_{i=0}^{k} \Big( \sum_{H \in \mathcal D^+_{u_i,u_{i+1}}(\overline V_i)} \weight^\pm(H) \Big) \notag\\
			&= \prod_{i=0}^{k} \p \big(\rg(\overline V_i) \in \mathcal D_{u_i, u_{i+1}} \big) \notag\\
			&=  \p \Big(\bigcap_{i=0}^k \{\rg(\overline V_i) \in \mathcal D_{u_i, u_{i+1}} \} \Big). \label{eq:dcf:inf_vol:rewrite_core_probabilities} }
Hence, the core can be written as a probability. Combining this with Proposition~\ref{thm:dcfshell:bounds_on_shell_fct}, we get
	\begin{align*}
	 \Bigl( \hspace{-0.5cm}\sum_{\substack{G \in \mathcal C^+(W): \\ \textsf{PD}^+(G)=\textsf{PD}^\pm(G)=\vec W}} \hspace{-0.5cm}  \weight^\pm(G)\Bigr) \bigl| h_\lambda^\Lambda(\vec W)\bigr|
		&\leq \pla(\eta_\Lambda \centernot\longleftrightarrow W)\p \Big(\bigcap_{i=0}^k \{\rg(\overline V_i) \in \mathcal D_{u_i, u_{i+1}} \} \Big)\\
		&= \pla \Big( \{ \C(u_0, \xi^W_\Lambda ) = W \} \cap \bigcap_{i=0}^k \{\xi^W_\Lambda\big[\overline V_i\big] \in \mathcal D_{u_i, u_{i+1}} \} \Big).
	\end{align*} 
	Above, we used independence as well as the fact that for $V \subseteq W$, the two random graphs $\rg(V)$ and $\xi^W[V]$ are identical in distribution. The inequality holds true for bounded $\Lambda$ as well as $\Lambda = \R^d$. 
	
We now go back to~\eqref{eq:def:dcf} and re-arrange the sum by first summing over the number of pivotal points $k$, giving
	\algn{	& \sum_{r=0}^\infty \frac{\lambda^r}{r!} \int_{\Lambda^r}  \sum_{\vec W} \Bigl( \sum_{G\in\mathcal G^{\vec W}_{\text{core}}} \textup{\weight}^\pm(G)\Bigr)
					 \bigl| h_\lambda^\Lambda(\vec W)\bigr| \dd \vec x_{[3,r+2]} \notag \\
	= & \sum_{k \geq 0} \lambda^k \sum_{n \geq 0} \frac{\lambda^n}{n!} \int_{\Lambda^{k+n}} \sum_{\vec W}
				\Big( \sum_{G\in\mathcal G^{\vec W}_{\text{core}}} \weight^\pm (G) \Big) |h_\lambda^\Lambda(\vec W)| \dd \vec v_{[n]} \dd\vec u_{[k]} \label{eq:dcf:inf_vol:sum_by_pivotals}}
In the second term, the sum is over pivot decompositions $\vec W=(u_0, V_0, \ldots, V_k, u_{k+1})$ where $u_0=x_1, u_{k+1} =x_2$, and $\cup_{i=0}^k V_i = \{v_1, \ldots, v_n\}$.

When rewriting the integrand of~\eqref{eq:dcf:inf_vol:sum_by_pivotals} as a probability, the event that $u_i$ and $u_{i+1}$ are $2$-connected for $i\in [k]_0$ in disjoint vertex sets $V_i$ becomes the event that these connection events occur disjointly within $W$, see Section~\ref{sec:defs} and recall definition \eqref{def:odot}. The inner series can thus be bounded as
\algn{ & \sum_{n \geq 0} \frac{\lambda^n}{n!} \int_{\Lambda^{n}} \sum_{\vec W} \Big( \sum_{G\in\mathcal G^{\vec W}_{\text{core}}} \weight^\pm (G) \Big) |h_\lambda^\Lambda(\vec W)| \dd \vec v_{[n]}\notag \\
	\leq & \sum_{n \geq 0} \frac{\lambda^{n}}{n!} \int_{\Lambda^{n}} 
					\pla \Big( \big\{\C(u_0, \xi_\Lambda^{\vec u_{[k]}, \vec v_{[n]}}) = \vec u_{[k]} \cup \vec v_{[n]} \big\} \notag \\
			& \hspace{2cm} \cap\Big( \{\dconn{u_0}{u_1}{\xi^{u_0,u_1,\vec v_{[n]}}}\} \circ \cdots
					\circ \{\dconn{u_k}{u_{k+1}}{\xi^{u_k,u_{k+1},\vec v_{[n]}}} \} \Big) \Big) \dd \vec v_{[n]} \notag \\
	 = & \pla\big( \{\dconn{u_0}{u_1}{\xi^{u_0,u_1}}\} \circ \cdots \circ \{\dconn{u_k}{u_{k+1}}{\xi^{u_k,u_{k+1}}} \}\big),  \label{eq:dcf:inf_vol:disj_occurrence}} 
where the identity is due to the Mecke equation and due to the fact that by summing over $\vec v$, we were partitioning over what the joint cluster of $\vec u_{[0,k+1]}$ is. We can now use the BK inequality (\cite[Theorem 2.1]{HeyHofLasMat19}) to bound~\eqref{eq:dcf:inf_vol:disj_occurrence} by
	\eqq{ \prod_{i=0}^{k} \pla \big(\dconn{u_i}{u_{i+1}}{\xi_\Lambda^{u_i,u_{i+1}}} \big) \leq  \prod_{i=0}^{k} \dtlam(u_{i+1}-u_i). \label{eq:dcf:inf_vol:BK} }
Inserting this back into~\eqref{eq:dcf:inf_vol:sum_by_pivotals},
	\algn{ &\sum_{k \geq 0} \lambda^k \sum_{n \geq 0} \frac{\lambda^n}{n!} \int_{\Lambda^{k+n}} \sum_{\vec W}
				\Big( \sum_{G\in\mathcal G^{\vec W}_{\text{core}}} \weight^\pm (G) \Big) |h_\lambda^\Lambda(\vec W)| \dd \vec v_{[n]} \dd\vec u_{[k]}  \notag\\
		\leq & \sum_{k \geq 0} \lambda^k \dtlam^{\ast (k+1)}(x_2-x_1). \label{eq:dcf:inf_vol:convolution_bound} }
The last expression is finite for $\lambda<\lambda_\ast$, by the definition of $\lambda_\ast$. The pointwise convergence of $\dcf^{\Lambda_n}$ to $\dcf$ follows by dominated convergence.
\end{proof}

\begin{proof}[Proof of Theorem~\ref{thm:dcf_inf:limit_theorem:L1}]	
If we integrate over $x_2$ in~\eqref{eq:dcf:inf_vol:convolution_bound}, this yields the upper bound
	\[ \lambda^{-1} \sum_{k \geq 1} \Big( \lambda \int\dtlam(x) \dd x \Big)^k, \]
which is finite for $\lambda<\tilde\lambda_\ast$, by definition of $\tilde\lambda_\ast$.
  The theorem follows by Fubini-Tonelli and the triangle inequality. 
\end{proof}

%%%%%%%%%%%%%%%%%%%%%%%%%%%%%%%%%%%%%%%%%%%%%%%%%%%%%%%%%%%%%%%%%%%%%%%%%%%%%%%%%%%%%%%%%%%%%%%%%%%%%%%%%%%%%%%%%%%%%%%%%%%%%%%%%%%%%%%%%%%%%%%%%%%%%%%%%%%%%%%%%%%%%%%%%%%%%%%%%%%%%%%%%%%%%%%%%%%%
%%%%%%%%%%%%%%%%%%%%%%%%%%%%%%%%%%%%%%%%%%%%%%%%%%%%%%%%%%%%%%%%%%%%%%%%%%%%%%%%%%%%%%%%%%%%%%%%%%%%%%%%%%%%%%%%%%%%%%%%%%%%%%%%%%%%%%%%%%%%%%%%%%%%%%%%%%%%%%%%%%%%%%%%%%%%%%%%%%%%%%%%%%%%%%%%%%%%
\section{The Ornstein-Zernike equation} \label{sec:oze}

Here we complete the proof of Theorem~\ref{thm:main_thm}. In view of Theorems~\ref{thm:dcf_inf:limit_theorem} and~\ref{thm:dcf_inf:limit_theorem:L1}, it remains to prove that the expansion \eqref{eq:gfinal} is indeed equal to the direct-connectedness function given by the OZE \eqref{eq:intro:oze_2d}. This is proven by showing first that $g_\lambda^\Lambda$ from Definition \ref{def:dcfshell:shell_dcf} fulfills the OZE in finite volume and then passing to the limit $\Lambda\nearrow \R^d$. 

The proof idea in finite volume is basically well known;  the same proof works for the Ornstein-Zernike equation for the total correlation function. 

\begin{prop}[The Ornstein-Zernike equation in finite volume] \label{thm:OZE_finite}
Let $\Lambda \subset \Rd$ be bounded and let $x_1, x_2 \in \Lambda$. Then
	\[ \tlam^\Lambda(x_1, x_2) = \dcf^\Lambda(x_1, x_2) + \lambda \int_\Lambda \dcf^\Lambda(x_1, x_3) \tlam^\Lambda(x_3, x_2) \dd x_3.\]
\end{prop}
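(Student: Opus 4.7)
The plan is to work at the level of the graphical expansion of $\tlam^\Lambda$ from Proposition~\ref{thm:tlam_finite_cluster_expansion} and to recognize in it the representation \eqref{eq:dcf_rewriting} of $\dcf^\Lambda$ (which, as indicated in Section~\ref{sec:motivation}, is equivalent to Definition~\ref{def:dcfshell:shell_dcf} after the core/shell resummation). The key mechanism is a first-pivotal decomposition of $(\pm)$-graphs.

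Starting from Proposition~\ref{thm:tlam_finite_cluster_expansion}, I would split each inner sum over $G\in\mathcal C^\pm(\vec x_{[n+2]})$ with $x_1\overset{+}{\longleftrightarrow} x_2$ into two parts according to whether $\textsf{Piv}^\pm(x_1,x_2;G)=\varnothing$ or not. The first part reproduces~\eqref{eq:dcf_rewriting}, hence equals $\dcf^\Lambda(x_1,x_2)$. For the second part, let $u$ be the first $(\pm)$-pivotal in the pivot decomposition of $G$, let $V_0\subset\vec x_{[3,n+2]}$ denote the set of vertices connected to $x_1$ in $G^{|\pm|}$ after deletion of $u$, and let $V'=\vec x_{[3,n+2]}\setminus(V_0\cup\{u\})$ be the rest. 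Because $u$ is pivotal, no edge (of either sign) joins $\{x_1\}\cup V_0$ to $V'\cup\{x_2\}$, so the induced subgraphs $G_1:=G[\{x_1\}\cup V_0\cup\{u\}]$ and $G_2:=G[\{u\}\cup V'\cup\{x_2\}]$ share only the vertex $u$ and
\[
\weight^\pm(G)=\weight^\pm(G_1)\,\weight^\pm(G_2).
\]
Moreover, $G_1\in\mathcal D^\pm_{x_1,u}$ and contains a $(+)$-path from $x_1$ to $u$, while $G_2\in\mathcal C^\pm$ contains a $(+)$-path from $u$ to $x_2$, so $G_1$ and $G_2$ are exactly the objects summed over in the expansions of $\dcf^\Lambda(x_1,u)$ and $\tlam^\Lambda(u,x_2)$, respectively.

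Next I would perform the combinatorial bookkeeping. Fixing $r=|V_0|\in\{0,\ldots,n-1\}$ and singling out the label of $u$, the factor $\tfrac{1}{n!}$ splits as $\tfrac{1}{r!(n-1-r)!}$ times the number $\binom{n}{r,1,n-1-r}\cdot r!(n-1-r)!/n!$ of ways to distribute the labels in $\{3,\ldots,n+2\}$ between $V_0$, $\{u\}$, and $V'$, which all cancel neatly since the integrand is a symmetric function of the labels within $V_0$ and within $V'$. Renaming the integration variable for $u$ as $x_3\in\Lambda$, the sum over $n$ turns into a Cauchy product and produces the factor $\lambda\int_\Lambda \dd x_3$, yielding
\[
\sum_{n\geq 0}\frac{\lambda^n}{n!}\int_{\Lambda^n}\!\!\!\!\!\sum_{\substack{G\in\mathcal C^\pm:\,x_1\overset{+}{\longleftrightarrow} x_2\\ \textsf{Piv}^\pm\neq\varnothing}}\!\!\!\weight^\pm(G)\,\dd\vec x_{[3,n+2]}
=\lambda\int_\Lambda \dcf^\Lambda(x_1,x_3)\,\tlam^\Lambda(x_3,x_2)\,\dd x_3.
\]
All the rearrangements are legal by Fubini--Tonelli because the absolute-value versions of the series for $\tlam^\Lambda$ (Proposition~\ref{thm:tlam_finite_cluster_expansion}) and for $\dcf^\Lambda$ (obtained from Proposition~\ref{thm:dcfshell:bounds_on_shell_fct} together with~\eqref{eq:tbb}) are finite for every bounded $\Lambda$ and every $\lambda\geq 0$.

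The main obstacle will be the combinatorial bookkeeping of the label split together with checking that the decomposition $G\mapsto(G_1,u,G_2)$ is a bijection between the set of $(\pm)$-connected graphs with at least one $(\pm)$-pivotal (and a distinguished first pivotal) and the set of ordered pairs $(G_1,G_2)$ with $G_1\in\mathcal D^\pm_{x_1,u}$, $G_2\in\mathcal C^\pm$ having the required $(+)$-paths, and no edges joining the two pieces (apart from at~$u$). The remaining issue is to make sure that the edges incident to $u$ are counted exactly once: they split unambiguously into edges with second endpoint in $V_0\cup\{x_1\}$ (absorbed in $\weight^\pm(G_1)$) and edges with second endpoint in $V'\cup\{x_2\}$ (absorbed in $\weight^\pm(G_2)$), with no double counting.
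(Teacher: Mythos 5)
Your plan is sound and, at the level of ideas, it is the same first-pivotal decomposition that the paper uses: one splits the $(\pm)$-graph $G$ at the first $(\pm)$-pivotal vertex $u$, checks that the weight factorizes, that the piece containing $x_1$ lies in $\mathcal D^\pm_{x_1,u}$ with a $(+)$-path from $x_1$ to $u$, that the piece containing $x_2$ is $(\pm)$-connected with a $(+)$-path from $u$ to $x_2$, and then performs a Cauchy-product rearrangement with a multinomial relabelling. All of these steps are correct (including your observation that $u$ must lie on the $(+)$-path, since a $(\pm)$-pivotal is in particular a $(+)$-pivotal), and the Fubini justification via the exponential bound of Proposition~\ref{thm:tlam_finite_cluster_expansion} is the right one for the product of the two raw series.

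The gap is the step you parenthetically wave at: treating the identity between the raw expansion~\eqref{eq:dcf_rewriting} and the definition of $\dcf^\Lambda$ in~\eqref{eq:def:dcf} as established by Section~\ref{sec:motivation}. It is not — Section~\ref{sec:motivation} is only a motivation. Passing from~\eqref{eq:dcf_rewriting} to~\eqref{eq:def:dcf} requires the cancellation argument restricting to $\tilde{\mathcal D}^\pm$ and then the full core/shell resummation of the double sum, with its own absolute-convergence justification; that is precisely where the technical content lies and is why Definition~\ref{def:dcfshell:shell_dcf} is the \emph{starting} point of Section~\ref{sec:dcfshell} rather than a consequence of~\eqref{eq:dcf_rewriting}. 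The paper's proof avoids having to establish this equivalence by working in the core/shell representation throughout: it introduces $\bar h_\lambda$ (the shell function without the $2$-connectedness constraint), shows that replacing $h_\lambda$ by $\bar h_\lambda$ in~\eqref{eq:def:dcf} produces $\tlam^\Lambda$, splits $\bar h_\lambda^{(m)} = h_\lambda^{(m)} + f_\lambda^{(m)}$, and factorizes $f_\lambda$ at the first $(\pm)$-pivotal, which is necessarily one of the $(+)$-pivotals $u_1,\ldots,u_k$. So: same combinatorial engine, but the paper runs it on the shell functions; if you want to run it on the raw expansion as you propose, you must supply the core/shell resummation of~\eqref{eq:dcf_rewriting} as an independent lemma (and at that point you have essentially reproduced the paper's $\bar h_\lambda$ bookkeeping, just in the opposite direction). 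One small side remark: your multinomial formula $\binom{n}{r,1,n-1-r}\cdot r!(n-1-r)!/n!$ evaluates to $1$ and so does not count anything; the correct statement is simply that $\tfrac{1}{n!}\binom{n}{r,1,n-1-r}=\tfrac{1}{r!\,(n-1-r)!}$, after which the Cauchy product falls out.
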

\begin{proof}
We drop the $\Lambda$-dependence in the superscript of $\tlam^\Lambda$ and $\dcf^\Lambda$. Thanks to Proposition~\ref{thm:tlam_finite_cluster_expansion}, we can re-sum the series expansion for $\tlam$ at will. Given a pivot decomposition $\vec W = (u_0, V_0, \ldots, u_{k+1})$ of an arbitrary core graph $G$ with the vertex set $W$, define
	\eqq{\label{def:h_bar} \bar h_\lambda^{(m)}(\vec W, \vec y_{[m]}) := \hspace{-0.5cm}\sum_{\substack{H \in \mathcal G(\overline V_1, \ldots, \overline V_k, \vec y_{[m]}): \\
				 G \oplus H \in \mathcal C_{u_0,u_{k+1}}^\pm(W \cup \vec y_{[m]})}} \hspace{-0.5cm}\normalfont{\weight}(H), \qquad
				  \bar h_\lambda(\vec W)  := \sum_{m \geq 0} \frac{\lambda^m}{m!} \int_{\Lambda^m} \bar h_\lambda^{(m)}(\vec W, \vec y_{[m]}) \dd \vec y_{[m]} }  
in analogy to the shell function $h_\lambda$ in~\eqref{eq:def:shell_fct} (just as the latter, $\bar{h}_\lambda$ only depends on $G$ through its pivot decomposition $\vec W$). To be more precise, the shell function $h_\lambda$ is recovered from $\bar{h}_\lambda$ by summing over a smaller subset of graphs $H$ in~\eqref{def:h_bar}, adding the restriction that $G\oplus H$ shall not contain  ($\pm$)-pivot points for the $u_0$-$u_{k+1}$-connection. Note that 
	\[ 0 \leq \bar h_\lambda(\vec W) = \e^{-\ela[|\{y \in \eta: y\sim W\}|]} \prod_{x,y \in W: \nexists i \in [k]_0: \{x,y\} \subseteq \bar V_i} (1-\connf(x-y))  \]
and that when replacing $h_\lambda$ with $\bar h_\lambda$ in the right-hand side of~\eqref{eq:def:dcf}, we get $\tlam$ instead of $\dcf$. We can split the sum $\bar h_\lambda^{(m)}(\vec W, \vec y_{[m]}) = h_\lambda^{(m)}(\vec W, \vec y_{[m]}) + f_\lambda^{(m)} (\vec W, \vec y_{[m]})$, where $f_\lambda^{(m)}$ contains the sum over those graphs $H$ so that $G \oplus H$ \emph{does} have $(\pm)$-pivotal points with respect to the $u_0$-$u_{k+1}$-connection. We set $f_\lambda(\vec W)  := \sum_{m \geq 0} \frac{\lambda^m}{m!} \int_{\Lambda^m} f_\lambda^{(m)}(\vec W, \vec y_{[m]}) \dd \vec y_{[m]}$.

Assume now that $u_j$ for $j \in [k]$ is the first pivotal point of $G \oplus H\in \mathcal C_{x_1,x_2}^\pm(W \cup \vec y_{[m]})$. Furthermore, let $\vec W_j' := (u_0, V_0, \ldots, u_j)$, let $\vec W_j'':=(u_j, V_j, \ldots, u_{k+1})$, and let $y_{[s]}$ for $s \leq m$ be the points adjacent to $\vec W_j'$ (possibly after reordering the vertices). The weight of such a graph $H$ then factorizes into the product of the weights of two graphs, namely the subgraphs of $H$ induced by $\vec W'_j\cup\vec y_{[s]}\subset V(H)$ and by $\vec W''_j\cup \vec y_{[m]\setminus [s]}\subset V(H)$, see Figure~\ref{fig:h_factorizes_decomp}. That is,
\begin{align}
\weight(H) = \weight(H[\vec W_j'\cup\vec y_{[s]}]) \weight(H[\vec W''_j\cup \vec y_{[m]\setminus [s]}]).\label{eq:h_decomp_factorization}	
\end{align}
Moreover, we see that $H[\vec W_j'\cup\vec y_{[s]}]\oplus G[W_j']$ does not contain $(\pm)$-pivot points (for the $u_0$-$u_j$-connection) and $H[\vec W_j''\cup\vec y_{[m]\setminus[s]}]\oplus G[W_j'']$ is in general just $(\pm)$-connected.
\begin{figure}
\centering
\includegraphics[scale=1.0]{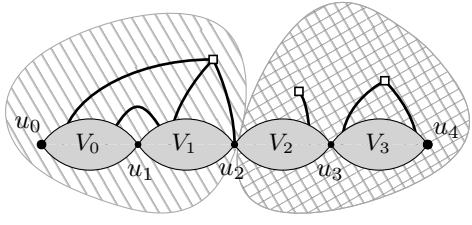}
\caption{The schematic representation of a ($\pm$)-graph $G\oplus H$ in $\mathcal C_{u_0,u_4}^\pm(W \cup \vec y_{[3]})$ illustrates the factorization of the graph weight from Equation~\eqref{eq:h_decomp_factorization}: The edges of $H$ are explicitly depicted in the picture, while the core graph $G$ is represented by its pivot decomposition $(u_0,V_0,\ldots,u_4)$. The vertices $\vec y_{[3]}$ are depicted by squares, ordered from left to right. The first ($\pm$)-pivot point for the $u_0$-$u_4$-connection in $G\oplus H$ is $u_2$. Thus, the weight of the simple graph $H$ factorizes into the weight of its subgraph induced by $\vec W'_2\cup \vec y_{[1]}=\{u_0,u_1,u_2\}\cup V_0\cup V_1\cup\{y_1\}$ (hatched on the left) and the weight of the subgraph induced by $\vec W''_2\cup\vec y_{[3]\backslash [1]}=\{u_3,u_4\}\cup V_2\cup V_3\cup\{y_2,y_3\}$ (crosshatched on the right).\label{fig:h_factorizes_decomp}} 
\end{figure}

By partitioning over $j$, we thus obtain the decomposition
	\[ f_\lambda(\vec W) = \sum_{j=1}^{k} h_\lambda(\vec W_j') \bar h_\lambda(\vec W_j'').\]
Since both $h_\lambda$ and $\bar h_\lambda$ converge absolutely, so does $f_\lambda$, justifying all re-summations. Letting $x_1=u_0$ and $x_2= u_{k+1}$,
	\al{ (\tlam-\dcf)(x_1,x_2) &= \sum_{k \geq 1} \lambda^k \sum_{n_0, \ldots, n_k \geq 0}  \frac{\lambda^{\sum_{i=0}^{k}n_i}}{\prod_{i=0}^{k} n_i!} 
							\int_{\Lambda^{k+\sum_{i=0}^{k} n_i}} \Big( \prod_{i=0}^{k} \p\big(\rg(\bar V_i) \in \mathcal D_{u_i, u_{i+1}}(\bar V_i) \big)\Big) \\
			& \hspace{4cm} \times \Big( \sum_{j=1}^{k} h_\lambda(\vec W_j') \bar h_\lambda(\vec W_j'') \Big) \prod_{i=0}^{k}\dd \vec v_{i,[n_i]} \dd \vec u_{[k]}  \\
			& = \sum_{j \geq 1,k \geq 0} \lambda^{j+k} \int_{\Lambda} \sum_{n_0, \ldots, n_{j+k} \geq 0}  \frac{\lambda^{\sum_{i=0}^{j+k}n_i}}{\prod_{i=0}^{j+k} n_i!}
							\bigg[ \int_{\Lambda^{j-1+\sum_{i=0}^{j-1} n_i}} \prod_{i=0}^{j-1} \p\big(\rg(\bar V_i) \in \mathcal D_{u_i, u_{i+1}}(\bar V_i) \big)\Big) \\
			& \hspace{6cm} \times h_\lambda(\vec W_j') \prod_{i=0}^{j-1}\dd \vec v_{i,[n_i]} \dd \vec u_{[j-1]} \bigg] \\
			& \hspace{3cm} \times \bigg[\int_{\Lambda^{k+\sum_{i=j}^{j+k} n_i}} \prod_{i=j}^{j+k} \p\big(\rg(\bar V_i) \in \mathcal D_{u_i, u_{i+1}}(\bar V_i) \big)\Big) \\
			& \hspace{6cm} \times \bar h_\lambda(\vec W_j'') \prod_{i=j}^{j+k}\dd \vec v_{i,[n_i]} \dd \vec u_{[j+1,j+k]} \bigg] \dd u_j \\
			& = \lambda \int_\Lambda \dcf(x_1,u) \tlam(u,x_2) \dd u.  }
The re-summation w.r.t.~$j$ and $k$ is justified as the resulting series converges for $\lambda<\lambda_\ast$ even when putting $h_\lambda$ in absolute values.
\end{proof}

We can now extend the result of Proposition~\ref{thm:OZE_finite} to $\Lambda \nearrow \Rd$ and thus prove that the expansion \eqref{eq:gfinal} is indeed equal to the the direct-connectedness function for $\lambda<\lambda_\ast$, finalizing the proof of our main result. 

\begin{proof}[Proof of Theorem~\ref{thm:main_thm}]
%We write $\tlam^\Lambda(x_1, x_2) = \pla\big(\conn{x_1}{x_2}{\xi^{x_1,x_2}_\Lambda} \big)$
We have
	\eqq{ \tlam(x_1,x_2) =  \lim_{\Lambda \nearrow \Rd} \tlam^\Lambda(x_1, x_2) = \lim_{\Lambda \nearrow \Rd} \dcf^\Lambda(x_1,x_2)
			 + \lambda  \lim_{\Lambda\nearrow\Rd}  \int_{\Rd}  \dcf^\Lambda(x_1,x_3) \mathds 1_{\Lambda}(x_3) \tlam^\Lambda(x_3, x_2)  \dd x_3, \label{eq:OZE_infinite}}
where the first equality holds by the continuity of probability measures along sequences of increasing events and the second one by Proposition \ref{thm:OZE_finite}.

Note that that integrand in~\eqref{eq:OZE_infinite} is bounded uniformly in $\Lambda$ by
	\[C \tlam(x_3,x_2) , \]
where $C= \sup_{y\in\Rd} \sum_k \lambda^k \sigma_\lambda^{\ast (k+1)}(y)$ is a constant obtained in~\eqref{eq:dcf:inf_vol:convolution_bound}. Since $\tlam$ is integrable for all $\lambda<\lambda_c$, the theorem follows with dominated convergence.
\end{proof}

\section{Discussion} \label{sec:connections}
\subsection{Connections to percolation on Gibbs point processes} \label{sec:connections_physics}

The Ornstein-Zernike equation gets its name from the seminal paper~\cite{OrnZer14} and has since been a well-known formalism in liquid-state statistical mechanics. It relates the total correlation function to the direct correlation function and it naturally connects to power-series expansions of these correlation functions (see~\cite{ConDeAngFor77, Ste76, Ste96}; the terminology is not the same in all of these references). 

The correlation functions admit graphical expansions that also consist of connected graphs. It was observed~\cite{Hil55} that a similar formalism can be formulated for the pair-connectedness function, and a key reference for this is~\cite{ConDeAngFor77}. The pair-connectedness function is deemed part of the pair-correlation function. The connected graphs appearing in the expansion of the latter are referred to as ``mathematical clusters'', and they correspond to our $(\pm)$-connected graphs. Isolating the $(+)$-connected components within these graphs yields the ``physical clusters'', and the graphs in which $x_1$ and $x_2$ lie in the same physical cluster make up the expansion for $\tlam(x_1,x_2)$. In the following, we elaborate on this.

The percolation models considered in the physics literature are mostly not based on a Poisson point process (Stell calls the Poisson setup \emph{random percolation}~\cite{Ste96}), but on a Gibbs point process (called \emph{correlated percolation} in the language of Stell). (The denomination ``random percolation'' for the Poisson setup feels quite misleading for probabilists; but it reflects language commonly adopted across physics, with ``random'' understood as ``completely random'' in the sense of \emph{completely random measures}~\cite{kingman1967}, a class comprising the Poisson point process.)

To define the latter, consider a non-negative pair potential $v\colon \Rd \to \R_{\geq 0}$ and some finite volume $\Lambda$. Let $\mathbf N(\Lambda)$ be the set of finite counting measures on $\Lambda$ and let $\mu \in\mathbf N(\Lambda)$. Then the energy of $\{x_1, \ldots, x_n\}$ under boundary condition $\mu$ is
	\[ H(\{x_1,\ldots, x_n\} \mid \mu) = \sum_{1 \leq i <j \leq n} v(x_i-x_j) + \sum_{i=1}^{n} \sum_{y \in \mu} v(x_i-y).\]
Let $f\colon \mathbf N(\Lambda) \to \R$ be bounded. We define a probability measure as
	\[ \E_z[f] := \frac{1}{\Xi(z)} \sum_{n \geq 0} \frac{z^n}{n!} \int_{\Lambda^n} f(\{x_1, \ldots, x_n\}) \e^{-H(\{x_1, \ldots, x_n\}) } \dd \vec x_{[n]}, \]
where the partition function $\Xi(z)$ is so that $\E_z[1] = 1$ and $z \in \R_{\geq 0}$ is called the activity. If we denote by $\eta$ a random variable with law $\E_z$, then $\eta$ is a point process. Note that we recover the homogeneous PPP with intensity $\lambda=z$ by setting $v \equiv 0$.

We can define the RCM $\xi$ on this general point process and we denote its probability measure by $\p_{z,\connf}$. We furthermore define the (one-particle) density as
	\[ \rho_1(x) = z \E_z[\e^{-H(\{x\} \mid \eta)} ] = \rho,\]
as well as the pair-correlation function as
	\[ \rho_2(x,y) = z^2 \E_z[\e^{- H(\{x,y\} \mid \eta)}]. \]
Again, in case of a homogeneous PPP with intensity $\lambda=z$, we have $\rho=z$ and $\rho_2= z^2$. Defining the pair-connectedness function as
	\[ \tau_{z,\connf}(x,y) := \E_{z,\connf} \big[\e^{-H(\{x,y\} \mid \eta)} \mathds 1_{\{\conn{x}{y}{\xi^{x,y}}\}}\big],\]
we can decompose 
\[
	\rho_2(x,y) = z^2 \tau_{z,\connf}(x,y) + z^2 \E_{z,\connf} [\e^{-H(\{x,y\} \mid \eta)} \mathds 1_{\{\nconn{x}{y}{\xi^{x,y}}\}}].
\]	
In~\cite{ConDeAngFor77}, Coniglio et al.~define the pair-connectedness function as $\tilde\tau_{z,\connf} = (z^2 /\rho^2) \tau_{z,\connf}$.

The function $\tilde \tau_{z,\connf}$ has a density expansion (note that $\tau_{z,\connf}$ is better suited for activity expansions) that can be found in~\cite[Equation~(12)]{ConDeAngFor77}, which can be obtained from the density expansion of the pair-correlation function: The latter is obtained by expanding the Mayer-$f$ functions $f(x,y) = \e^{-v(x,y)} - 1$ in the partition function, which is the starting point of a cluster expansion. Splitting the Mayer-$f$ function as $f=f^+ + f^\ast$ with $f^+ = \e^{-v(x,y)} \connf(x-y)$ and executing the same expansion for the correlation function ``doubles'' every edge into a $(+)$-edge and a $(\ast)$-edge. Only summing over graphs in which $x$ and $y$ are connected by $(+)$-edges yields the pair-connectedness function.

In general, the graphs appearing in the density expansion are a subset of those in the activity expansion, namely the ones without articulation points (articulation points were defined after Proposition~\ref{thm:tlam_finite_cluster_expansion}). In the case of a homogeneous PPP, we have $\lambda=z=\rho$, and so both activity and density expansion coincide (and the graphs with articulation points cancel out). Moreover, $f^+(x,y) = - f^\ast(x,y) = \connf(x-y)$ and the graphs summed over in the expansion become the $(\pm)$-graphs, yielding the expansion~\eqref{eq:tlam_cluster_exp} for $\tlam$.

It is an interesting question which ideas of this paper can be generalized to RCMs based on Gibbs point processes. And while some aspects generalize without much effort, the crucial difference lies in the fact that the weight of graphs showing up in expansions for Gibbs point processes also encodes the pair interaction induced by the potential $v$. To recover probabilistic interpretations for terms after performing re-summations and bounds is therefore much more delicate.

%%%%%%%%%%%%%%%%%%%%%%%%%%%%%%%%%%%%%%%%%%%%%%%%%%%%%%%%%%%%%%%%%%%%%%%%%%%%%%%%%%%%%%%%%%%%%%%%%%%%%%%%%%%%%%%%%%%%%%%%%%%%%%%%%%%%%%%%%%%%%%%%%%%%%%%%%%%%%%%%%%%%%%%%%%%%%%
\subsection{Connections to Last and Ziesche} \label{sec:connections_lz}
In~\cite{LasZie17}, Last and Ziesche use a Margulis-Russo-type formula to prove analyticity of $\tlam$ in presumably the whole subcritical regime. Moreover, they show the existence of some $\lambda_0>0$ (which is not quantified) so that both $\tlam$ and $\dcf$ have an absolutely convergent graphical expansion in $[0,\lambda_0)$ that seems closely related to the ones discussed here. We want to illustrate how to relate the respective expressions.

\paragraph{The two-point function.} Last and Ziesche show that $\tlam(x_1,x_2)$ is equal to
	\eqq{ \sum_{n\geq 0} \frac{\lambda^n}{n!} \int \sum_{J \subset [3,n+2]} (-1)^{n-|J|} \p\big(\conn{x_1}{x_2}{\rg(\vec x_{J\cup\{1,2\}})},
			\rg(\vec x_{[n+2]}) \text{ is connected}\big) \dd \vec x_{[3,n+2]}. \label{eq:links:lz:tlam_exp}}
We show that the above integrand is the same as the one in~\eqref{eq:tlam_cluster_exp}. We can rewrite the one in~\eqref{eq:links:lz:tlam_exp} as
	\algn{ \E\Big[ \mathds 1_{\{\rg(\vec x_{[n+2]}) \text{ is connected}\}} 
				\sum_{J \subset [n+2]} (-1)^{n-|J|} \mathds 1_{\{\conn{x_1}{x_2}{\rg(\vec x_{J})}\}} \Big]. \label{eq:links:lz:tlam_rewrite}}
Note that now, any non-vanishing $J$ needs to contain $\{1,2\}$. We are now going to observe some cancellations. For a fixed graph $G\in\mathcal C(\vec x_{[n+2]})$,
	\algn{\sum_{J \subseteq [n+2]} (-1)^{n-|J|} \mathds 1_{\{\conn{x_1}{x_2}{G[\vec x_{J}]}\}} 
				&= \sum_{I,J \subseteq [n+2]} (-1)^{n-|J|} \mathds 1_{\{\{1,2\} \subseteq I \subseteq J\}} \mathds 1_{\{\C(x_1, G[\vec x_{J}]) = \vec x_{I}\}} \notag\\
			&= \sum_{I,J \subseteq [n+2]} (-1)^{n-|J|} \mathds 1_{\{\{1,2\} \subseteq I \subseteq J\}} \mathds 1_{\{G[ \vec x_{I}] \text{ connected}\}}
						\mathds 1_{\{\forall j\in J\setminus I: x_j \nsim \vec x_{I}\}}  \notag\\
			&=\sum_{\{1,2\} \subseteq I\subseteq [n+2]} (-1)^{n-|I|} \mathds 1_{\{G[ \vec x_{I}] \text{ connected}\}} \notag\\
			& \hspace{2cm} \times \sum_{J \subseteq [n+2] \setminus I} (-1)^{|J|} \mathds 1_{\{\forall j\in J: x_j \nsim \vec x_{I}\}} . 
							\label{eq:links:lz:integrand_rewrite}}
Note that for given $G$ and $I$, defining $\mathcal I(G,I) = \{ j \in [n+2] \setminus I:  x_j \nsim \vec x_{I} \}$, we can rewrite 
	\eqq{ \sum_{J \subseteq [n+2] \setminus I} (-1)^{|J|} \mathds 1_{\{\forall j\in J: x_j \nsim \vec x_{ I}\}}
			= \sum_{J \subseteq \mathcal I(G,I)} (-1)^{|J|} . \label{eq:links:lz:J_cancellations}}
The only case for which~\eqref{eq:links:lz:J_cancellations} does not vanish is when $\mathcal I(G,I) = \varnothing$. We can therefore rewrite~\eqref{eq:links:lz:integrand_rewrite} as
	\al{ \sum_{J \subseteq [n+2]} (-1)^{n-|J|} & \mathds 1_{\{\conn{x_1}{x_2}{G[\vec x_{ J}]}\}}  = \sum_{\{1,2\} \subseteq I\subseteq [n+2]}
			 (-1)^{n-|I|} \mathds 1_{\{G[ \vec x_{I}] \text{ connected}\}}  \mathds 1_{\{\forall j\in [n+2]\setminus I: x_j \sim \vec x_{I}\}}	}
and so~\eqref{eq:links:lz:tlam_rewrite} becomes
	\al{ & \sum_{\{1,2\} \subseteq I \subseteq [n+2]} (-1)^{n-|I|}\p\big( \rg(\vec x_{I}) \text{ is connected}, \ 
				x_j \sim \vec x_{I} \forall j \in [n+2]\setminus I \big)   \\
			=& \sum_{\{1,2\} \subseteq I \subseteq [n+2]} \p\big( \rg(\vec x_{I}) \text{ is connected} \big) 
				\prod_{j \in [n+2]\setminus I} \big[\prod_{i\in I} (1-\connf(x_i-x_j)) -1 \big] \\
			=& \sum_{I \subseteq [n+2]} \sum_{G} \weight^\pm (G).}
In the last line, summation is over the same set of graphs as in~\eqref{eq:rcmfin:tlam_exp_one_sum} with the additional restriction that $V(G^+) =I$. Resolving the partition over $I$ gives that~\eqref{eq:links:lz:tlam_exp} is equal to~\eqref{eq:rcmfin:tlam_exp_one_sum}.

\paragraph{The direct-connectedness function.} In~\cite[Theorem 5.1]{LasZie17}, it is shown that there exists $\lambda_0$ such that for $\lambda \in [0, \lambda_0)$,
	\algn{ \dcf(x_1,x_2) &= \sum_{n \geq 0} \frac{\lambda^n}{n!} \int \sum_{G \in \mathcal D_{x_1,x_2}(\vec x_{[n+2]})} \prod_{e \in E(G)} \connf(e) \notag \\
				& \qquad \qquad \times \sum_{\substack{H \in \mathcal C(\vec x_{[n+2]}): \\ H \subseteq G }} 
					\sum_{\substack{J \subseteq [n+2]: \\ x_1 \longleftrightarrow x_2 \text{ in } H[\vec x_J]}} (-1)^{n-|J| + |E(G) \setminus E(H)|} \dd\vec x_{[3,n+2]}. 
							\label{eq:links:lz:dcf_lz}}
We show that the integrand in~\eqref{eq:links:lz:dcf_lz} is equal to the one in~\eqref{eq:dcf_rewriting}. With the calculations~\eqref{eq:links:lz:integrand_rewrite} and~\eqref{eq:links:lz:J_cancellations} performed for the two-point function, letting $I^c = [n+2]\setminus I$,
	\[ \sum_{J \subseteq [n+2]} (-1)^{n-|J|} \mathds 1_{\{x_1 \longleftrightarrow x_2 \text{ in } H[\vec x_J]\}}
			= \sum_{\{1,2\} \subseteq I \subseteq [n+2]} (-1)^{n-|I|} \mathds 1_{\{H[\vec x_I] \text{ is connected}\}} 
							\mathds 1_{\{\forall j\in I^c: x_j \sim \vec x_{I} \text{ in } H\}}.\]
The two indicators imply that $H$ is connected, and so
	\algn{&\sum_{\substack{H \in \mathcal C(\vec x_{[n+2]}): \\ H \subseteq G }} 
						\sum_{\substack{J \subseteq [n+2]: \\ x_1 \longleftrightarrow x_2 \text{ in } H[\vec x_J]}} (-1)^{n-|J| + |E(G) \setminus E(H)|} \notag \\
		=& \sum_{\{1,2\}\subseteq I \subseteq [n+2]} \sum_{H \subseteq G} (-1)^{n-|I| + |E(G) \setminus E(H)|} \mathds 1_{\{H[\vec x_I] \text{ is connected}\}} 
							\mathds 1_{\{\forall j\in I^c: x_j \sim \vec x_{I} \text{ in } H\}} \notag \\
		=& \sum_{\{1,2\}\subseteq I \subseteq [n+2]} \sum_{\substack{H' \in \mathcal C(\vec x_I): \\ H' \subseteq G}} (-1)^{n-|I| + |E(G) \setminus E(H')|}
							\sum_{\substack{F \subset E(G) \cap ( (I\times I^c) \cup \binom{I^c}{2}): \\ \forall j\in I^c: F \cap (I \times \{j\}) \neq \varnothing}} (-1)^{|F|}.  
								\label{eq:links:lz:dcf_rewrite} }
Note that for the second identity in~\eqref{eq:links:lz:dcf_rewrite}, we split the edges of $H$ into those contained in $H'$ (the subgraph induced by $I$) and the remaining ones, called $F$.

When $E(G) \cap \binom{I^c}{2} \neq \varnothing$, then the sum over $F$ vanishes. Hence, the sum over $I$ can be reduced to those $I$ such that $G[I^c]$ contains no edges. For such sets $I$, we have
	\eqq{ \sum_{\substack{F \subset E(G) \cap (I\times I^c): \\ \forall j\in I^c: F \cap (I \times \{j\}) \neq \varnothing}} (-1)^{|F|}
			= \prod_{j\in I^c} \sum_{\varnothing \neq F_j \subseteq E(G) \cap (I \times \{j\} )} (-1)^{|F_j|} = \prod_{j\in I^c} (-1) = (-1)^{n-|I|}. \label{eq:links:lz:dcf:cancellation}}
When inserting~\eqref{eq:links:lz:dcf:cancellation} back into~\eqref{eq:links:lz:dcf_rewrite}, the two factors $(-1)^{n-|I|}$ cancel out, and so
	\algn{&\sum_{\substack{H \in \mathcal C(\vec x_{[n+2]}): \\ H \subseteq G }} 
						\sum_{\substack{J \subseteq [n+2]: \\ x_1 \longleftrightarrow x_2 \text{ in } H[\vec x_J]}} (-1)^{n-|J| + |E(G) \setminus E(H)|}  \notag \\
		=& \sum_{\substack{\{1,2\}\subseteq I \subseteq [n+2]: \\ E(G[I^c]) = \varnothing }} \sum_{\substack{H \in \mathcal C(\vec x_I): \\ H \subseteq G}} (-1)^{|E(G) \setminus E(H)|} \notag\\
		=& \sum_{H \in \mathcal G(\vec x_{[n+2]}): G \vartriangleright H} \mathds 1_{\{\{1,2\} \subseteq V(H)\}} (-1)^{|E(G) \setminus E(H)|}, \label{eq:links:lz:dcf_rewrite_b}}
where $G \vartriangleright H$ means that $E(H) \subseteq E(G)$, the subgraph of $H$ induced by the vertices incident to at least one edge (call this set $V_{\geq 1}(H)$) is connected, and the subgraph of $G$ induced by $[n+2]\setminus V_{\geq 1}(H)$ contains no edges.

With the identity~\eqref{eq:links:lz:dcf_rewrite_b}, and letting $X=\vec x_{[n+2]}$, the integrand of~\eqref{eq:links:lz:dcf_lz} is equal to
	\algn{ &\sum_{\substack{H \in \mathcal G(X): \\ \{x_1,x_2\} \subseteq V_{\geq 1}(H), \\ H[V_{\geq 1}(H)] \text{ connected}}} \prod_{e \in E(H)} \connf(e)
				\sum_{\substack{F \subseteq \binom{X}{2} \setminus E(H): \\ \forall e \in F: e \cap V_{\geq 1}(H) \neq \varnothing,\\ (X, F \cup E(H)) \in \mathcal D_{x_1,x_2}(X) }} 
						(-1)^{|F|} \prod_{e \in F} \connf(e) \notag\\
		=& \sum_{\substack{H \in \mathcal G(X): \\ x_1 \longleftrightarrow x_2}} \prod_{e \in E(H)} \connf(e)
				\sum_{\substack{F \subseteq \binom{X}{2} \setminus E(H): \\ (X, F \cup E(H)) \in \mathcal D_{x_1,x_2}(X) }} 
						(-1)^{|F|} \prod_{e \in F} \connf(e) \notag\\
		=& \sum_{\substack{C \in \mathcal D^\pm_{x_1,x_2}(X): \\ x_1 \overset{+}{\longleftrightarrow} x_2}} \weight^\pm (G). \label{eq:links:lz:dcf_final_rewrite}}
The argument for the first identity in~\eqref{eq:links:lz:dcf_final_rewrite} is the same for the identity of~\eqref{eq:rcmfin:tlam_exp_one_sum} and~\eqref{eq:rcmfin:tlam:final_exp}.

%%%%%%%%%%%%%%%%%%%%%%%%%%%%%%%%%%%%%%%%%%%%%%%%%%%%%%%%%%%%%%%%%%%%%%%%%%%%%%%%%%%%%%%%%%%%%%%%%%%%%%%%%%%%%%%%%%%%%%%%%%%%%%%%%%%%%%%%%%%%%%%%%%%%%%%%%%%%%%%%%%%%%%%%%%%%%%
\subsection{Connections to the lace expansion} \label{sec:connections_hhlm}

Both the graphical power-series expansions and the lace expansion provide an expression for the direct-connectedness function. In this section, we show how to get from one to the other. Note that the statements to follow hold for sufficiently small intensities and can \emph{not} replace the lace expansion, which works all the way up to $\lambda_c$. The emphasis of this section is on the qualitative nature of the results.

We first sum up some results of~\cite{HeyHofLasMat19}, where the lace expansion is applied to the RCM. We keep some of the definitions brief and informal, and we point to~\cite{HeyHofLasMat19} for the detailed definitions in these cases.

\paragraph{On the lace expansion.} In~\cite{HeyHofLasMat19}, among other things, the Ornstein-Zernike equation is proved for $\tlam$ in \emph{high dimension} (and for certain classes of connection functions $\connf$, see~\cite[Section 1.2]{HeyHofLasMat19}). In particular, it is shown that
	\[ \dcf(x) = \connf(x)  + \Pi_\lambda(x), \]
with $\Pi_\lambda(x) = \sum_{n \geq 0} (-1)^n \Pi_\lambda^{(n)}(x)$. The functions $\Pi_\lambda^{(n)}$ are called the \emph{lace-expansion coefficients}, they are non-negative, and have a quite involved probabilistic interpretation. To briefly define them, let $\{\xconn{x}{y}{\xi^{x,y}}{A}\}$ be the event that $\conn{x}{y}{\xi^{x,y}}$, but $x$ is no longer connected to $y$ in an $A$-thinning of $\eta^y$. Informally, every point $z\in \eta$ survives an $A$-thinning with probability $\prod_{y\in A} (1-\connf(z-y))$. See~\cite[Definition~3.2]{HeyHofLasMat19} for a formal definition. Letting
	\[ E(x,y;A,\xi^{x,y}) = \{ \xconn{x}{y}{\xi^{x,y}}{A}\} \cap \{ \nexists w \in \piv{x,y;\xi^{x,y}}: \xconn{x}{w}{\xi^x}{A} \},\]
we introduce a sequence $\xi_0, \ldots, \xi_n$ of independent RCMs and define
	\algn{ \Pi_\lambda^{(0)}(x) &:= \sigma_\lambda(x) - \connf(x), \notag\\
		\Pi_\lambda^{(n)}(u_n) &:= \lambda^n \int \pla\Big( \{\dconn{\orig}{u_0}{\xi_0^{\orig, u_0}}\} \cap  \bigcap_{i=1}^n
					E\big(u_{i-1}, u_i; \C(u_{i-2}, \xi_{i-1}^{u_{i-2}}), \xi_i^{u_{i-1}, u_i}\big) \Big) \dd \vec u_{[0,n-1]} \label{eq:def:links:le_coefficients}}
for $n\geq 1$ (with $u_{-1}=\orig$). The method of proof is called the \emph{lace expansion}, a perturbative technique that first proves via induction that
	\eqq{ \tlam(x) = \connf(x) + \sum_{m=0}^{n} (-1)^m\Pi_\lambda^{(m)}(x) + \lambda \Big(\Big(\connf + \sum_{m=0}^{n} (-1)^m\Pi_\lambda^{(m)} \Big)\ast \tlam \Big)(x) + R_{\lambda,n}(x) 
				\label{eq:links:tlam_LE_n_identity}}
for $n \in \N_0$ and some remainder term $R_{\lambda,n}$ (see~\cite[Definition~3.7]{HeyHofLasMat19}), and then shows that the partial sum converges to $\Pi_\lambda=\dcf-\connf$ and that $R_{\lambda,n}\to 0$ as $n \to\infty$.

The lace expansion was first devised for self-avoiding walk by Brydges and Spencer~\cite{BrySpe85} and takes some inspiration from cluster expansions. It was later applied to percolation (specifically, bond percolation on $\Zd$) by Hara and Slade~\cite{HarSla90}. While the name stems from laces that appear in the pictorial representation in~\cite{BrySpe85}, laces are absent in the representation for percolation models.

We show that we can rewrite $\Pi_\lambda^{(n)}$ in terms of graphs that are associated to a lace of size $n$. More generally, rewriting $\Pi_\lambda^{(n)}$ should serve as a bridge between the graphical expansions for $\dcf$ that are well known in the physics literature, and the expression for $\dcf$ in terms of lace-expansion coefficients.

The big advantage in the lace expansion lies in the probabilistic nature of all appearing terms, allowing to bound most appearing integrals by the expected cluster size, which is finite for $\lambda<\lambda_c$. The downside is the absence of a direct expression of $\dcf$ and thus a direct proof of the OZE, which is only obtained after performing the $n\to\infty$ limit in~\eqref{eq:links:tlam_LE_n_identity}.

We now show how to re-sum the graphical expansion for $\tlam$ and how to obtain the lace-expansion coefficients by appropriate grouping of terms. %In the following, we always consider the RCM on a bounded set $\Lambda$, and we drop this $\Lambda$-dependence in the notation of the involved connectedness functions.

\paragraph{Building the connection.} For $x,y\in X$, let $\tilde{\mathcal C}_{x,y}^\pm(X) \subset \mathcal C^\pm(X)$ be the set of graphs in $\mathcal C^\pm(X)$ such that $G^+$ is connected and contains $\{x,y\}$, and $E(G[ V \setminus V^+]) = \varnothing$. Hence, all $(-)$-edges are incident to at least one vertex in $V(G^+)$. This is exactly the set of graphs summed over in~\eqref{eq:rcmfin:tlam_exp_one_sum}. Indeed,
	\eqq{\tlam(x_1, x_2) = \sum_{n\geq 0} \frac{\lambda^n}{n!} \int_{(\Rd)^n} \sum_{G \in \tilde{\mathcal C}_{x_1, x_2}^\pm(\vec x_{[n+2]})} \weight^\pm(G) \dd \vec x_{[3,n+2]}.
			\label{eq:def:links:tlam_exp}}
If we define $\tilde{\mathcal D}_{x,y}^\pm(X) := \mathcal D_{x,y}^\pm(X) \cap \tilde{\mathcal C}_{x,y}^\pm(X)$, we can express $\dcf(x_1,x_2)$ by replacing the graphs summed over in~\eqref{eq:def:links:tlam_exp} by $\tilde{\mathcal D}_{x,y}^\pm(\vec x_{[n+2]})$.

We are going to recycle some notation from Section~\ref{sec:dcfshell}. We split $G$ into its core $G_{\text{core}}$ and its shell $H$, so that $\textsf{PD}^+(x,y,G_{\text{core}}) = \textsf{PD}^\pm(x,y,G_{\text{core}})= (u_0, V_0, u_1, \ldots, u_k, V_k, u_{k+1})$ for some $k$ (where $u_0=x$ and $u_{k+1}=y$). We also recall that $G$ ``contains" a skeleton (see Definition~\ref{def:laces:skeletons}), a graph on $[k+1]_0$.

\begin{definition}[The minimal lace] \label{def:links::minlace}
Let $G $ be a graph with core $G_{\text{core}}$ and shell $H$; let $\vec W= (u_0,V_0 \ldots, u_{k+1})$ for $k \in \N$ be its $(+)$-pivot decomposition. We define the \emph{minimal lace} $L_{\text{min}}(x,y;G)$ as lace with the following properties:
\begin{compactitem}
\item $L$ (having bonds $\alpha_i\beta_i$ with $i \in [m]$ for some $m\in\N$) is contained as a subgraph in the skeleton $\hat H$;
\item for every $i \in [m]$, among all the bonds $\alpha\beta$ in $\hat H$ satisfy $\alpha<\beta_{i-1}$, the bond $\alpha_i\beta_i$ maximizes the value of $\beta$. For $i=1$, we take $\beta_{0}=1$.
\end{compactitem}
If $\textsf{Piv}^+(x,y;G)= \varnothing$, we say that $G$ has a minimal lace of size $0$.
\end{definition}
In other words, the first stitch $0\beta_1$ maximizes the value of $\beta_1$ among all stitches starting at $0$, the second stitch has a maximal value of $\beta_2$ among the stitches with $1 \leq \alpha_2 < \beta_1$, and so on.

As a side remark, it is worth noting that the minimal laces offer an alternative way of partitioning the set of all shell graphs by mapping every shell graph $H$ onto its minimal lace. This gives a standard procedure used in lace expansion for self-avoiding walk; performing it ``backwards'' yields precisely the mapping described below Definition~\ref{def:laces}.

With the notion of minimal laces, we partition
	\eqq{\dcf(x_1,x_2) = \sum_{m \geq 0}\pi_\lambda^{(m)}(x_1,x_2), \label{eq:links:dcf_little_pis}} 
where
	\eqq{ \pi_\lambda^{(m)}(x_1,x_2) :=  \sum_{n\geq 0} \frac{\lambda^n}{n!} \int_{(\Rd)^n}
					\sum_{\substack{G \in \tilde{\mathcal D}_{x_1, x_2}^\pm(\vec x_{[n+2]}): \\ \| L_{\text{min}}\|= m}} \weight^\pm(G) \dd \vec x_{[3,n+2]}. 
							\label{eq:def:links:little_pis}}
We also set $\pi_\lambda^{(m)}(x) = \pi_\lambda^{(m)}(\orig,x)$. 

We strongly expect that the (pointwise) absolute convergence of the power series on the right-hand side of \eqref{eq:def:links:little_pis} holds (at least) in the domain of absolute convergence of the physicists' expansion \eqref{eq:dcf_rewriting} and thus, as already discussed, for sufficiently small intensities $\lambda>0$. However, a proof would go beyond the scope of the discussion here, therefore we formulate the absolute convergence of $\pi_\lambda^{(m)}$ (in the above sense) as an assumption for the following result (Lemma \ref{lem:links:Pi_pi_identity}).

\begin{assumption}\label{ass:1} There exists $0<\lambda_\bigstar\leq\lambda_c$ such that the right-hand side of \eqref{eq:def:links:little_pis} is (pointwise) absolute convergent for all $m\in\N$ and $\lambda<\lambda_\bigstar$.
\end{assumption}

Under Assumption~\ref{ass:1}, we show that the coefficients defined in \eqref{eq:def:links:little_pis} are basically identical to the lace-expansion coefficients introduced in \eqref{eq:def:links:le_coefficients}.  

\begin{lemma}[Identity for the lace-expansion coefficients] \label{lem:links:Pi_pi_identity}
Let $m \geq 1$ and let $\lambda<\lambda_\bigstar$. Then
	\al{\Pi_\lambda^{(0)}(x) &= \pi_\lambda^{(0)}(\orig,x) - \connf(x), \\
		(-1)^m\Pi_\lambda^{(m)}(x) &= \pi_\lambda^{(m)}(\orig,x).}
\end{lemma}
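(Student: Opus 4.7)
The plan is to prove the two identities by matching, term by term, the combinatorial classification of graphs in $\tilde{\mathcal D}^\pm_{\orig,x}$ by the size of their minimal lace with the sausage-chain structure built into the lace-expansion coefficients $\Pi_\lambda^{(m)}$ of~\eqref{eq:def:links:le_coefficients}. The base case $m=0$ is immediate: the graphs contributing to $\pi_\lambda^{(0)}(\orig,x)$ are precisely those in $\tilde{\mathcal D}^\pm_{\orig,x}$ with $\textsf{Piv}^+(\orig,x;G)=\varnothing$, i.e.\ those in which $\orig$ and $x$ are doubly $(+)$-connected. Repeating verbatim the argument of Observation~\ref{obs:pm_graphs_probabilities} and Proposition~\ref{thm:tlam_finite_cluster_expansion}, but with ``connected'' replaced throughout by ``doubly connected'', identifies $\pi_\lambda^{(0)}(\orig,x)$ with $\pla(\dconn{\orig}{x}{\xi^{\orig,x}})=\dtlam(x)$; the definition $\Pi_\lambda^{(0)}(x):=\dtlam(x)-\connf(x)$ then gives the claimed identity.

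For $m\geq 1$, I would fix a graph $G$ contributing to $\pi_\lambda^{(m)}(\orig,x)$ with $(+)$-pivot decomposition $(u_0,V_0,\ldots,V_k,u_{k+1})$ (where $u_0=\orig$ and $u_{k+1}=x$) and minimal lace $L$ with bonds $(\alpha_i\beta_i)_{i=1}^m$, and decompose $G$ along $L$ into $m+1$ overlapping ``sausages''. The $i$-th sausage is to carry the core segments stretching between $u_{\alpha_i}$ and $u_{\beta_i}$, together with those shell edges whose skeleton bonds are compatible with $\alpha_i\beta_i$ in the sense of Lemma~\ref{lem:compatible}; the satellites not adjacent to $L$ are absorbed via an exponentiated thinning factor coming from the empty-connection factor in Proposition~\ref{thm:dcfshell:bounds_on_shell_fct}. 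With this assignment fixed, applying Lemma~\ref{lem:laces:weight_sums_probabilities} to rewrite each bipartite shell-edge sum as a probability and the Mecke equation~\eqref{eq:prelim:mecke_m} to turn satellite summations into integrations against the Poisson intensity, each sausage factor evaluates exactly to the probability of one of the events appearing in~\eqref{eq:def:links:le_coefficients}: the $i=0$ sausage reproduces the event $\{\dconn{\orig}{u_0}{\xi_0^{\orig,u_0}}\}$, while for $i\geq 1$ the $\pm$-bipartite structure of the shell edges produces the thinning by $\C(u_{i-2},\xi_{i-1})$ in the event $E(u_{i-1},u_i;\C(u_{i-2},\xi_{i-1}),\xi_i)$. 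Summing over the allowed shapes of the sausages and over $L$ then reconstructs the full integral defining $\Pi_\lambda^{(m)}(x)$, and the overall sign $(-1)^m$ comes from the $m$ bonds of $L$ all residing in $E^-$, contributing one factor $-1$ per bond to $\weight^\pm$.

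The main obstacle is the combinatorial bookkeeping of the sausage decomposition, in particular showing that the \emph{maximality} clause in Definition~\ref{def:links::minlace}---each $\beta_i$ is chosen as large as possible among bonds with left endpoint less than $\beta_{i-1}$---translates to the second condition in the event $E$, namely that no $(+)$-pivotal vertex $w\in\textsf{Piv}^+(u_{i-1},u_i;\xi_i)$ for the $u_{i-1}$-$u_i$-connection survives the thinning by $\C(u_{i-2},\xi_{i-1})$. The deterministic constraint on the skeleton must be read as a probabilistic constraint on the unlabelled RCM; this requires a careful re-interpretation of the ``compatible bonds'' classification of Lemma~\ref{lem:compatible} and of the resummation trick used in the proof of Lemma~\ref{lem:laces:shell_fct_fixed_lace}, now carried out separately in each sausage. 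All rearrangements are justified by Assumption~\ref{ass:1} together with the uniform bounds from Proposition~\ref{thm:dcfshell:bounds_on_shell_fct}.
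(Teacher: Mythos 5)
Your plan follows the same broad strategy as the paper's proof---decompose $G$ along its minimal lace into sausages, convert bipartite shell-edge sums into probabilities via Lemma~\ref{lem:laces:weight_sums_probabilities}, and use the Mecke equation to integrate out satellites---but the paper carries this out \emph{iteratively}, peeling off the rightmost sausage at each step via the key identity~\eqref{eq:links:pi_m_proof_goal}, rather than all at once. This iterative structure matters because the thinning set $\C(u_{i-2},\xi_{i-1})$ entering the event $E(u_{i-1},u_i;\,\cdot\,,\xi_i)$ is a \emph{random} cluster of the previous RCM, so the factorization is really a nested conditional expectation and must be disentangled sausage by sausage.

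The concrete gap in your proposal is the explanation of the sign $(-1)^m$. You attribute it to the $m$ bonds of $L$ ``all residing in $E^-$'', but a lace bond $\alpha_i\beta_i$ may be realized by an \emph{indirect} stitch $\{s y, y t\}$ through a satellite $y$, whose weight is $(-\connf)(-\connf)=+\connf^2$, so counting lace bonds does not produce $(-1)^m$. In the paper the sign emerges instead from the identity~\eqref{eq:links:pi_m_proof_goal}: integrating out the last sausage's satellites yields the \emph{difference} of a thinned and an unthinned connection probability, which equals $-\pla\big(E(u_{m-1},u_m;B,\xi^{u_{m-1},u_m})\big)$---one factor $-1$ per peeled-off sausage, independently of whether stitches are direct or indirect. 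Relatedly, your paraphrase of the second clause of the event $E$ (``no $(+)$-pivotal vertex survives the thinning'') is not the same as the defining condition $\nexists w\in\textsf{Piv}(x,y;\xi^{x,y})\colon\{\xconn{x}{w}{\xi^x}{A}\}$; the latter forbids pivotal points that become disconnected from $x$ under the $A$-thinning, which is neither ``surviving'' nor its negation. These two inaccuracies suggest you would get stuck exactly where the maximality clause of Definition~\ref{def:links::minlace} has to be translated into the restricted-connection event $E$, which is the crux of the paper's computation in~\eqref{eq:links:pi_m_prob_bridge}--\eqref{eq:links:pi_m_bridge2}.
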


As a side note, since $\Pi_\lambda^{(m)}$ is non-negative, Lemma~\ref{lem:links:Pi_pi_identity} shows that the sign of $\pi_\lambda^{(m)}$ alternates, which is far from obvious from the definition in~\eqref{eq:def:links:little_pis}.

Next, we prove an approximate version of the OZE in analogy to~\cite[Proposition 3.8]{HeyHofLasMat19}. Clearly, Lemma~\ref{lem:links:oze_pis} follows immediately from the latter via Lemma~\ref{lem:links:Pi_pi_identity}; however, we want to present a short independent proof on the level of formal power series, which we consider instructive for the understanding of the underlying combinatorics. We emphasize that the proof presented here treats the claim of Lemma~\ref{lem:links:oze_pis} as an identity between formal power series; in particular, we do not concern ourselves with absolute convergence of the power series appearing in \eqref{eq:links:tlam_pis_partition} and in \eqref{eq:links:tlam_pis_partition2}.
\begin{lemma}[The lace expansion in terms of $(\pm)$-graph coefficients] \label{lem:links:oze_pis}
Let $m \in \N_0$, let $\lambda<\lambda_\bigstar$, and set $\pi_{\lambda,m}(x):= \sum_{i=0}^{m} \pi_\lambda^{(i)}(\orig,x)$. Then
	\[ \tlam(x) = \pi_{\lambda,m}(x) + \big( \pi_{\lambda,m} \ast \tlam \big)(x) + R_{\lambda,m}(x), \]
where $R_{\lambda,m}$ is defined in~\cite[Definition~3.7]{HeyHofLasMat19}.
\end{lemma}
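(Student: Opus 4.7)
My plan is to combine the Ornstein-Zernike equation established in Theorem~\ref{thm:main_thm} with the minimal-lace partition of $\dcf$ from \eqref{eq:links:dcf_little_pis}. Substituting $\dcf = \pi_{\lambda,m} + \sum_{i>m}\pi_\lambda^{(i)}$ into $\tlam = \dcf + \lambda(\dcf\ast\tlam)$ yields
\[
\tlam(x) = \pi_{\lambda,m}(x) + \lambda(\pi_{\lambda,m}\ast\tlam)(x) + R'_{\lambda,m}(x),
\]
with $R'_{\lambda,m} := \sum_{i>m}\pi_\lambda^{(i)} + \lambda\sum_{i>m}(\pi_\lambda^{(i)}\ast\tlam)$, all identities being between formal power series. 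The task thus reduces to showing that $R'_{\lambda,m}$ coincides with the remainder $R_{\lambda,m}$ from \cite[Definition~3.7]{HeyHofLasMat19}.

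To produce the argument genuinely independently of \cite[Proposition~3.8]{HeyHofLasMat19}, I would work directly on the $(\pm)$-graph side. Following the decomposition used in the proof of Proposition~\ref{thm:OZE_finite}, I cut each graph $G \in \tilde{\mathcal C}^\pm_{x_1,x_2}$ appearing in \eqref{eq:def:links:tlam_exp} at its first $(\pm)$-pivotal point $u$ (if any) into an initial doubly connected piece $G_1 \in \tilde{\mathcal D}^\pm_{x_1,u}$ and a tail $G_2 \in \tilde{\mathcal C}^\pm_{u,x_2}$; if no such $u$ exists, $G_1 = G$ and the tail is absent. I would then refine the sum by the size $i = \|L_{\min}(G_1)\|$ of the minimal lace of the initial piece (Definition~\ref{def:links::minlace}). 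Summing over $G_1$ at fixed $i$ gives $\pi_\lambda^{(i)}$ by \eqref{eq:def:links:little_pis}, while the tail sum together with the intensity factor at $u$ produces $\lambda\tlam$ in convolution, exactly as in the proof of Proposition~\ref{thm:OZE_finite}. The contributions with $i\le m$ reproduce $\pi_{\lambda,m} + \lambda(\pi_{\lambda,m}\ast\tlam)$, and those with $i>m$ make up the remainder.

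The main obstacle is the identification of this combinatorial remainder with the probabilistic object $R_{\lambda,m}$ from \cite[Definition~3.7]{HeyHofLasMat19}, which is expressed in terms of nested cluster events on independent copies of the RCM. The translation should mirror the reasoning behind Lemma~\ref{lem:links:Pi_pi_identity}, converting each layer of the nested-event probability into a weighted sum over $(\pm)$-graphs via Observation~\ref{obs:pm_graphs_probabilities} and Mecke's formula~\eqref{eq:prelim:mecke_m}. I expect this dictionary to be essentially mechanical once set up; the only delicate point is that the outermost layer of the HHLM expression involves an ordinary cluster connection rather than a double connection, which on the graphical side is exactly what produces the $\tlam$ factor (rather than a $\dcf$ factor) in the convolution term of $R'_{\lambda,m}$.
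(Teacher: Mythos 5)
Your second and third paragraphs reproduce the paper's argument essentially verbatim: cut each graph in $\tilde{\mathcal C}^\pm_{x_1,x_2}$ at its first $(\pm)$-pivotal point $u^{\text{cut}}$ (in the paper, $u^{\text{cut}}$ is characterized via the two conditions $E(V^{\preccurlyeq}(u^{\text{cut}})\setminus\{u^{\text{cut}}\},V\setminus V^{\preccurlyeq}(u^{\text{cut}}))=\varnothing$ and $G[V^{\preccurlyeq}(u^{\text{cut}})]\in\tilde{\mathcal D}^\pm_{x_1,u^{\text{cut}}}$, which is equivalent to your description), classify by the minimal-lace size of the initial piece, factorize the weight, and recognize $\pi_\lambda^{(i)}$ on the left and $\lambda\tlam$ on the right of the cut. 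Your first paragraph (substituting the partition of $\dcf$ into the OZE) is a shortcut the paper does not take, and for the final identification $\bar R_{\lambda,m}=R_{\lambda,m}$ the slickest route is not the ``mechanical dictionary'' you sketch but simply to subtract the expansion of $\tlam$ just obtained from the one in~\cite[Proposition~3.8]{HeyHofLasMat19} and invoke Lemma~\ref{lem:links:Pi_pi_identity} to equate the partial sums, which is what the paper alludes to when it says ``by employing Lemma~\ref{lem:links:Pi_pi_identity}''.
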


Before performing the proof of Lemma~\ref{lem:links:Pi_pi_identity}, we define
	\[ \bar\connf(A,B) = \prod_{a\in A} \prod_{b\in B} (1-\connf(a-b))\]
and $\bar\connf(a,B) = \bar\connf(\{a\},B)$. Now, observe that, given a set $A \subset \Rd$ and a RCM event $F$,
	\eqq{ \sum_{n \geq 0} \frac{\lambda^n}{n!} \int_{(\Rd)^n} \bar\connf(A, \vec v_{[n+2]}) \sum_{\substack{G \in \tilde{\mathcal C}_{v_1,v_2}^\pm(\vec v_{[n+2]}): \\ G^+ \in F}} 
			\weight^\pm(G) \dd \vec v_{[3,n+2]} = \pla ( \xi(\thinning{\eta^{v_1,v_2}}{A}) \in F),  \label{eq:links:obs_thinning}}
where $\xi(\eta)$ is the RCM on the basis of the point process $\eta$ and $\thinning{\eta^v}{A}$ is an $A$-thinning of $\eta^v$ (the usual PPP of intensity $\lambda$ and added point $v$). In particular, $v$ may be thinned out as well. We remark that $\thinning{\eta}{A}$ has the same distribution as a PPP of intensity $\lambda\bar\connf(A,\cdot)$.

\begin{proof}[Proof of Lemma~\ref{lem:links:Pi_pi_identity}]
The statement for $m=0$ is clear. For $m>0$, we can rewrite $\pi_\lambda^{(m)}$ as
	\eqq{ \pi_\lambda^{(m)}(u_{-1}, u_m) = \lambda^m \int_{(\Rd)^m} \sum_{k,n \geq 0} \frac{\lambda^{k+n}}{k!n!} \int_{(\Rd)^{k+n}} \sum_{G \in \mathcal B} \weight^\pm(G) 
				\dd \big(\vec u_{[0,m-1]}, \vec x_{[k]}, \vec z_{[n]} \big),  \label{eq:links:pi_m_rewrite}}
where $\mathcal B \subseteq \tilde{\mathcal D}_{u_{-1},u_m}^\pm( \vec u_{[-1,m]} \cup \vec x_{[k]} \cup \vec z_{[n]} )$ are the graphs such that
\begin{enumerate}%itemize
	\item[$\bullet$] $u_0$ is the first pivotal point in $\textsf{Piv}^+(u_{-1}, u_m;G)$ (i.e., $\ord(u_0)=2$);
	\item[$\bullet$] $ \vec u_{[0,m-1]} \subseteq \textsf{Piv}^+(u_{-1},u_{m};G)$ and $u_{i-1} \prec u_{i}$;
	\item[$\bullet$] there are points $p_2, \ldots, p_{m}$ such that $L_{\text{min}} \mathrel{\hat =}\{(u_{-1},u_1),(p_2, u_2), \ldots, (p_{m}, u_{m})\}$;
	\item[$\bullet$] $\vec z_{[n]}$ are those vertices $z\notin \{ u_{m-1}, u_m\} $ in $G$ so that $\{z\} \cup N(z)$ contains at least one vertex $y$ of order $y \succ u_{m-1}$.
\end{enumerate}

\begin{figure}%[!ht]
	\centering
 \includegraphics[scale=0.98]{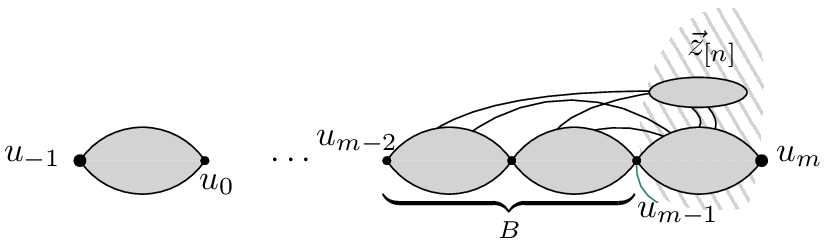}  \includegraphics[scale=0.98]{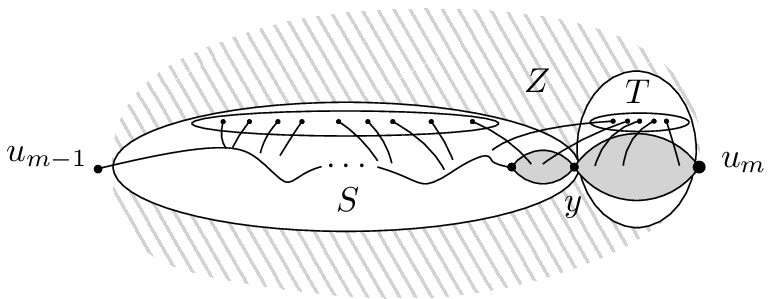}
	\caption{Illustration for the proof of Lemma~\ref{lem:links:Pi_pi_identity}. On the left, we see an example graph $G \in \mathcal B$; the grey bags on the bottom represent $\textsf{PD}^+(u_{-1},u_m,G)$ (note that there can be pivotal points within a grey bag). The minimal lace $L_{\text{min}}$ is not illustrated; however, note that $p_m \in B$. On the right, we see a schematic zoom into $G[Z]$, where $Z = \vec z_{[n]}$, together with the partition $Z=S \cup T$.}
	\label{fig:links_pi_(m)}
\end{figure}

Given a graph $G \in \mathcal B$, let $B$ denote the set of points $x$ in $V(G^+)$ with $u_{m-2} \preccurlyeq x \prec u_{m-1}$. See Figure~\ref{fig:links_pi_(m)} for an illustration of such a graph $G$. We integrate out the points $\vec z$ first and claim that their contribution to~\eqref{eq:links:pi_m_rewrite} is
	\eqq{ \lambda \sum_{n \geq 0} \frac{\lambda^{n}}{n!} \int_{(\Rd)^n} \sum_{H \in \mathcal B^\star} \weight^\pm(H) \dd\vec z_{[n]} 
			= -\lambda\pla\big(E(u_{m-1}, u_m;B ,\xi^{u_{m-1},u_m})\big),  \label{eq:links:pi_m_proof_goal}}
where every $H \in \mathcal B^\star$ is the subgraph of some $G \in \mathcal B$ and has vertex set $B \cup \{u_{m-1}, u_m\} \cup \vec z_{[n]}$ and precisely those edges in $G$ that have at least one endpoint in $\{u_m\} \cup \vec z_{[n]}$.

We let $y$ be the last pivotal point in $V(G^+)$, that is, $\ord(y) = \ord(u_m)-2$. We write $Z=\vec z_{[n]}$ and split $Z$ once more into those vertices ``in front'' and ``behind'' $y$; that is, $Z= S \cup T$, where $T$ are the points in $G^+$ of order $\ord(u_m)-1$ together with the points in $V\setminus V(G^+)$ that are adjacent to the former, and $S = Z \setminus T$. Possibly $y=u_{m-1}$, in which case $S = \varnothing$. See Figure~\ref{fig:links_pi_(m)} for an illustration of this split of the vertices in $Z$.

Note that there are no restrictions on the $(-)$-edges between $B$ and $S \cup \{y\}$, whereas there must be at least one $(-)$-edge between $B$ and $T \cup \{u_m\}$. There are no restrictions on the $(-)$-edges between $\{u_{m-1}\} \cup S \cap V(G^+)$ and $T \cup \{u_m\}$, whereas there cannot be any $(-)$-edges between $S\setminus V(G^+)$ and $T \cup \{u_m\}$. By distinguishing whether or not $S = \varnothing$, the left-hand side of~\eqref{eq:links:pi_m_proof_goal} is equal to
	\algn{ & \sum_{n \geq 0} \frac{\lambda^n}{n!} \int_{(\Rd)^n} \big( \bar\connf(B,\vec z_{[n]} \cup \{u_m\} ) - 1\big) 
				\sum_{\substack{G \in \tilde{\mathcal C}_{u_{m-1},u_m}^\pm (\{u_{m-1},u_m\} \cup \vec z_{[n]}): \\
							 u_{m-1} \overset{+}{\Longleftrightarrow} u_m }} \weight^\pm(G) \dd \vec z_{[n]}  \notag\\
		+ & \lambda \sum_{k \geq 0} \frac{\lambda^k}{k!} \int_{(\Rd)^{k+1}} \bar\connf(B,\vec s_{[k]} \cup \{y\}) 
				\sum_{\substack{H \in \tilde{\mathcal C}_{u_{m-1},y}^\pm (\{u_{m-1},y\} \cup \vec s_{[k]}): \\
							 u_{m-1} \overset{+}{\longleftrightarrow} y }} \weight^\pm(H) \notag\\
		 & \qquad \times \bigg( \sum_{n \geq 0} \frac{\lambda^n}{n!} \int_{(\Rd)^n} \big( \bar\connf(B,\vec t_{[n]} \cup \{u_m\} ) - 1\big)
		 					 \bar\connf(V^+(H) \setminus \{y\},\vec t_{[n]} \cup \{u_m\}) \notag\\
		& \qquad\ \qquad	\sum_{\substack{G \in \tilde{\mathcal C}_{y,u_m}^\pm (\{y,u_m\} \cup \vec t_{[n]}): \\
							 y \overset{+}{\Longleftrightarrow} u_m }} \weight^\pm(G) \dd \vec t_{[n]} \bigg) \dd \vec s_{[k]} \dd y \notag\\
		=& \pla (\dconn{u_{m-1}}{u_m}{\xi(\{u_{m-1}\} \cup \thinning{\eta^{u_m}}{B})}) - \pla (\dconn{u_{m-1}}{u_m}{\xi^{u_{m-1}, u_m}}) \notag\\
		 & \quad +  \lambda\int_{\Rd} \ela \Big[ \mathds 1_{\{\conn{u_{m-1}}{y}{\xi(\thinning{\{u_{m-1}\}\cup\eta^{y}}{B})\}}} \notag\\
		 & \hspace{2cm} \times \Big( \pla \big(\dconn{y}{u_m}{\xi(\{y\} \cup \thinning{\eta^{u_m}}{B\cup \C})}\big) 
		 					- \pla \big(\dconn{y}{u_m}{\xi(\{y\}\cup \thinning{\eta^{u_m}}{\C})}\big)  \Big) \Big] \dd y,  \label{eq:links:pi_m_prob_bridge}}
where we abbreviated $\C=\C(u_{m-1}, \xi^{u_{m-1}})$. Note that the inner probabilities are conditional on the random variable $\C$. We now resolve the integral over $y$ by use of the Mecke equation and incorporate the first two summands as the case $y=u_{m-1}$.
With this,~\eqref{eq:links:pi_m_prob_bridge} becomes
	\algn{ \ela\Big[& \sum_{y \in \eta^{u_{m-1}}}\mathds 1_{\{\conn{u_{m-1}}{y}{\xi(\thinning{\{u_{m-1}\}\cup\eta^{y}}{B})\}} }
											\mathds 1_{\{\dconn{y}{u_m}{\xi(\{y\} \cup\thinning{(\eta^{u_{m}} \setminus\C')}{B})\}}}\Big] \notag \\  
		-& \ela\Big[ \sum_{y \in \eta^{u_{m-1}}}\mathds 1_{\{\conn{u_{m-1}}{y}{\xi^{u_{m-1}}}\}} \mathds 1_{\{\dconn{y}{u_m}{\xi(\eta^{u_m} \setminus \C')\}}}\Big], \label{eq:links:pi_m_bridge2}}
where $\C' = \C(u_{m-1}, \xi(\eta^{u_{m-1}} \setminus\{y\}))$. But both terms in~\eqref{eq:links:pi_m_bridge2} are simply a partition over the last pivotal point for the connection between $u_{m-1}$ and $u_m$, and so~\eqref{eq:links:pi_m_bridge2} equals
	\[ \pla\big(\conn{u_{m-1}}{u_m}{\xi(\{u_{m-1}\} \cup \thinning{\eta^{u_m}}{B}}) \big) - \tlam(u_m-u_{m-1})= -\pla\big(E(u_{m-1}, u_m;B ,\xi^{u_{m-1},u_m})\big), \]
proving~\eqref{eq:links:pi_m_proof_goal}. Lemma~\ref{lem:links:Pi_pi_identity} can now be proven by iteratively applying~\eqref{eq:links:pi_m_proof_goal}.
\end{proof}

\begin{proof}[Proof of Lemma~\ref{lem:links:oze_pis}]
For $m \in\N_0$, we can write
	\eqq{ \tlam(x_1,x_2) = \sum_{l=0}^{m} \pi_\lambda^{(l)}(x_1,x_2) +  \sum_{n\geq 0} \frac{\lambda^n}{n!} \int_{(\Rd)^n}
					\sum_{G \in \mathcal A} \weight^\pm(G) \dd \vec x_{[3,n+2]}, \label{eq:links:tlam_pis_partition}}
where $\mathcal A$ is the set of graphs $G \in \tilde{\mathcal C}_{x_1, x_2}^\pm(\vec x_{[n+2]})\setminus \tilde{\mathcal D}_{x_1, x_2}^\pm(\vec x_{[n+2]})$ together with the graphs $G \in  \tilde{\mathcal D}_{x_1, x_2}^\pm(\vec x_{[n+2]}) $ where $ \| L_{\text{min}}\| >m$. Note that if $G\in \mathcal A$, then $\textsf{Piv}^+ (x_1,x_2;G) \neq\varnothing$.

For $G \in\mathcal A$ and $u\in\textsf{Piv}^+(x_1,x_2;G)$, define
	\[ V^{\preccurlyeq}(u) := \{ y \in V(G^+): y \preccurlyeq u \} \cup \{ y \in V(G) \setminus V(G^+): \exists z\in N(y)\cap V(G^+)\textit{ with } z \prec u \},\]
that is, all the core vertices of order at most that of $u$ together with the shell vertices adjacent to at least one vertex of strictly smaller order than $u$. Next, let $u^{\text{cut}}=u^{\text{cut}}(x_1,x_2;G)$ be the vertex in $\textsf{Piv}^+(x_1,x_2;G)$ such that
	\[E\big(V^{\preccurlyeq}(u^{\text{cut}}) \setminus \{u^{\text{cut}}\}, V \setminus V^{\preccurlyeq}(u^{\text{cut}}) \big) = \varnothing
		\qquad \text{and} \qquad G[V^{\preccurlyeq}(u^{\text{cut}})] \in \tilde{\mathcal D}_{x_1,u^{\text{cut}}}^\pm .\]
If such a point exists, it is unique; if no such point exists, set $u^{\text{cut}} = x_2$. We can now partition $\mathcal A$ as
	\[ \mathcal A = \Big( \bigcup_{i=1}^{m} \mathcal A_i \Big) \cup \mathcal A_{>m}, \]
where
	\al{\mathcal A_i &:= \big\{ G \in \mathcal A: u^{\text{cut}}\neq x_2\ \text{ and } \| L_{\text{min}} (x_1, u^{\text{cut}}; G[V^{\preccurlyeq}(u^{\text{cut}})]) \| = i \big\}, \\
		\mathcal A_{>m} &:= \big\{G \in \mathcal A: \| L_{\text{min}} (x_1, u^{\text{cut}}; G[V^{\preccurlyeq}(u^{\text{cut}})]) \| >m \big\}. }
Now, if $x_s = u^{\text{cut}}$ and $V' :=V^{\preccurlyeq}(u^{\text{cut}})$ as well as $V'' := \{x_s\} \cup ( \vec x_{[n+2]} \setminus V')$, then
	\[ \weight^\pm(G) = \weight^\pm (G[V']) \weight^\pm(G[V'']), \]
that is, the weight factorizes. Therefore, for every $i\in[m]$,
	\eqq{  \sum_{n\geq 0} \frac{\lambda^n}{n!} \int_{(\Rd)^n} \sum_{G \in \mathcal A_i} \weight^\pm(G) \dd \vec x_{[3,n+2]}
				= \lambda \int_{\Rd} \pi_\lambda^{(i)}(x_1,u) \tlam(u,x_2) \dd u. \label{eq:links:tlam_pis_partition2}}
Setting $\bar R_{\lambda,m}(x_2-x_1) :=  \sum_{n\geq 0} \frac{\lambda^n}{n!} \int_{(\Rd)^n} \sum_{G \in \mathcal A_{>m}} \weight^\pm(G) \dd \vec x_{[3,n+2]}$, we can rewrite~\eqref{eq:links:tlam_pis_partition} as
	\[ \tlam(x) = \pi_{\lambda,m}(x) + \lambda \big( \pi_{\lambda,m} \ast \tlam \big)(x) + \bar R_{\lambda,m}(x).\]
One can now prove by hand or by employing Lemma~\ref{lem:links:Pi_pi_identity} that $\bar R_{\lambda,m} = R_{\lambda,m}$.
\end{proof}

%%%%%%%%%%%%%%%%%%%%%%%%%%%%%%%%%%%%%%%%%%%%%%%%%%%%%%%%%%%%%%%%%%%%%%%%%%%%%%%%%%%%%%%%%%%%%%%%%%%%%%%%%%%%%%%%%%%%%%%%%%%%%%%%%%%%%%%%%%%%%%%%%%%%%%%%%%%%%%%%%%%%%%%%%%%%%%
%%%%%%%%%%%%%%%%%%%%%%%%%%%%%%%%%%%%%%%%%%%%%%%%%%%%%%%%%%%%%%%%%%%%%%%%%%%%%%%%%%%%%%%%%%%%%%%%%%%%%%%%%%%%%%%%%%%%%%%%%%%%%%%%%%%%%%%%%%%%%%%%%%%%%%%%%%%%%%%%%%%%%%%%%%%%%%
%%%%%%%%%%%%%%%%%%%%%%%%%%%%%%%%%%%%%%%%%%%%%%%%%%%%%%%%%%%%%%%%%%%%%%%%%%%%%%%%%%%%%%%%%%%%%%%%%%%%%%%%%%%%%%%%%%%%%%%%%%%%%%%%%%%%%%%%%%%%%%%%%%%%%%%%%%%%%%%%%%%%%%%%%%%%%%
\subsection{Other percolation models} \label{sec:bondperco}

The results of this paper should apply in quite analogous fashion to all other percolation models that enjoy sufficient independence---in particular, to (long-range) bond and site percolation on $\mathbb Z^d$. We pick bond percolation on $\Zd$ with edge parameter $p$ as an example. We can adjust our notation by using $\mathcal C(x,y,\Zd)$ to denote the connected subgraphs of $\Zd$ containing $x$ and $y$, and we define $\mathcal D_{x,y}(\Zd)$ and the notions for ($\pm$)-graphs analogously. Then one can show that, if we restrict to a finite box $\Lambda \subset \Zd$, the two-point function satisfies
	\eqq{ \tau_p^\Lambda(x_1, x_2) = \sum_{n \geq 0} p^n \sum_{\substack{G \in \mathcal C^\pm(x_1, x_2,\Lambda): |E(G)|=n,\\ x_1 \overset{+}{\longleftrightarrow} x_2}} (-1)^{|E^-(G)|}. 
					\label{eq:bondperco_taup}}
One can easily observe that all graphs summed over in~\eqref{eq:bondperco_taup} that contain more than one $(+)$-cluster cancel out, which is also what happens in the RCM. The direct-connectedness function can be defined analogously to Definition~\ref{def:dcfshell:shell_dcf}, providing a suitable setup for an analysis analogous to the one in Section~\ref{sec:dcfshell}.\\
\medskip

\section*{Acknowledgments} 
\noindent We thank David Brydges, Tyler Helmuth, and Markus Heydenreich for interesting discussions.

\bibliographystyle{APT}
\bibliography{OZEbib2}

\end{document}